\title{Reducing Leximin Fairness to Utilitarian Optimization}
\author {
    % Authors
    Eden Hartman\textsuperscript{\rm 1},
    Yonatan Aumann\textsuperscript{\rm 1},
    Avinatan Hassidim\textsuperscript{\rm 1,2},
    Erel Segal-Halevi\textsuperscript{\rm 3}
}
\newtheorem{theorem}{Theorem}[section]
\newtheorem{lemma}[theorem]{Lemma}
\newtheorem{corollary}[theorem]{Corollary}
\newtheorem{observation}[theorem]{Observation}
\newtheorem{claim}[theorem]{Claim}
\newtheorem*{theorem*}{Theorem}
\newtheorem*{lemma*}{Lemma}
\newtheorem{definition}{Definition}[section]
\newtheorem{example}{Example}[section]
\newcommand\proofsketchname{Proof sketch} % could be made language-dependent
\newenvironment{proofsketch}%
   {\renewcommand\proofname\proofsketchname%
    \begin{proof}}%
   {\end{proof}}
\newcommand{\Hquad}{\hspace{0.5em}} % small quad 
\newcommand{\feasP}[1]{F(\text{#1})}
\newcommand{\objPofV}[2]{obj(\text{#1},#2)}
\newcommand{\objCx}[1]{\objPofV{\ref{eq:compact-OP}}{#1}}
\newcommand{\feasC}{\feasP{\ref{eq:compact-OP}}}
\newcommand{\worldStates}{S} % was \mathcal{A}
\newcommand{\dist}{$\worldStates$-distribution\xspace}
\newcommand{\leximinPreferred}{\succ}
\newcommand{\weaklyPreferred}{\succeq}
\newcommand{\multApprox}{\alpha}
\DeclareMathOperator*{\argmax}{arg\,max}
\newcommand*{\rom}[1]{\expandafter\@slowromancap\romannumeral #1@}
\newcommand{\progCompact}{P1}
\newcommand{\progAppFirst}{P2}
\newcommand{\progAppSecond}{P3}
\newcommand{\dualApp}{D3}
\newcommand{\expectedValBy}[2]{E_{#1}^{\uparrow}(#2)}
\newcommand{\forXinY}[2]{\forall #1 \in [#2]}
\newcommand{\ztCons}{z_t}
\newcommand{\shallowX}{X_{\leq\multApprox}}
\newcommand{\x}{\mathbf{x}}
\newcommand{\xt}{\mathbf{x'}}
\newcommand{\retSol}{\x^n}
\newcommand{\appendixName}[2]{#2}
\newcommand{\shallowXg}{X_{\leq\multApprox}}
\newcommand{\equWithExp}[2]{#1\text{:}\\ & #2}
\newcommand{\vv}{\mathbf{v}}
\newcommand{\orderedVby}[1]{v^{\uparrow}_{#1}}
\newcommand{\multError}{\beta}
	\newcommand{\er}[1]{\textcolor{blue}{#1}}
	\newcommand{\erel}[1]{\er{(Erel says: #1)}}
	\newcommand{\eden}[1]{\textcolor{red}{(Eden says: #1)}}
	\newcommand{\rmark}[1]{\textcolor{BrickRed}{ #1}}
	\newcommand{\yon}[1]{\textcolor{Green}{(Yonatan says: #1)}}
	\newcommand{\er}[1]{#1}
	\newcommand{\erel}[1]{}
	\newcommand{\eden}[1]{}
	\newcommand{\yon}[1]{}
	\newcommand{\rmark}[1]{#1}
\begin{document}

\maketitle

\begin{abstract}
Two prominent objectives in social choice are \emph{utilitarian} - maximizing the sum of agents' utilities, and \emph{leximin} - maximizing the smallest agent's utility, then the second-smallest, etc. Utilitarianism is typically computationally easier to attain but is generally viewed as less fair.  This paper presents a general reduction scheme that, given a utilitarian solver, produces a distribution over states (deterministic outcomes) that is leximin in expectation. 
Importantly, the scheme is robust in the sense that, given an \emph{approximate} utilitarian solver, it produces a lottery that is approximately-leximin (in expectation) - with the same approximation factor.  We apply our scheme to several social choice problems: stochastic allocations of indivisible goods, giveaway lotteries, and fair lotteries for participatory budgeting.
\end{abstract}

% Uncomment the following to link to your code, datasets, an extended version or similar.
%
% \begin{links}
%     \link{Extended version}{https://arxiv.org/abs/2409.10395}
% \end{links}
\textcolor{white}{
.\\}

\section{Introduction}
In social choice, the goal is to find the best choice for society, but 'best' can be defined in many  ways. 
Two frequent, and often contrasting definitions are the \emph{utilitarian best}, which focuses on maximizing the total welfare (i.e., the sum of utilities); and the \emph{egalitarian best}, which focuses on maximizing the least  utility. 
The \emph{leximin best} generalizes the egalitarian one. It first aims to maximize the least utility; then, among all options that maximize the least utility, it chooses the one that maximizes the second-smallest utility, among these --- the third-smallest utility, and so forth. Leximin is often the solution of choice in social choice applications, and frequently used (e.g., \citet{freeman2019indv, bei_truthful_2022, Cheng2023, flanigan2024}). 
% \yonatan{add many references from AI venues AAAI, IJCAI, AAMAS}.
% , Kahana23 , Narang22

\paragraph{Calculating the Optimal Choice.}
Calculating a choice that maximizes utilitarian welfare is often easier than finding one that maximizes egalitarian welfare, while finding one that is leximin optimal is typically even more complex. 
For example, when allocating indivisible goods among agents with additive utilities, finding a choice (in this case, an allocation) that maximizes the utilitarian welfare can be done by greedily assigning each item to the agent who values it most. Finding an allocation that maximizes the egalitarian welfare, however, is NP-hard \cite{Bansal2006santaClaus}, even in this relatively simple case.

% \rmark{In cases where we can maximize utilitarian welfare, can this help in finding a fair leximin solution? In this paper, we answer  this question in the affirmative. }

% In this paper we show that, surprisingly, if we know how to maximize utilitarian welfare efficiently, then we can find a fair leximin solution.

In this paper, we show that knowing how to efficiently maximize the utilitarian welfare is sufficient in order to find a fair leximin solution.
% ============================================================
% ===================== put it here so it will be at the second page
\begin{figure*}[t]
\centering
\includegraphics{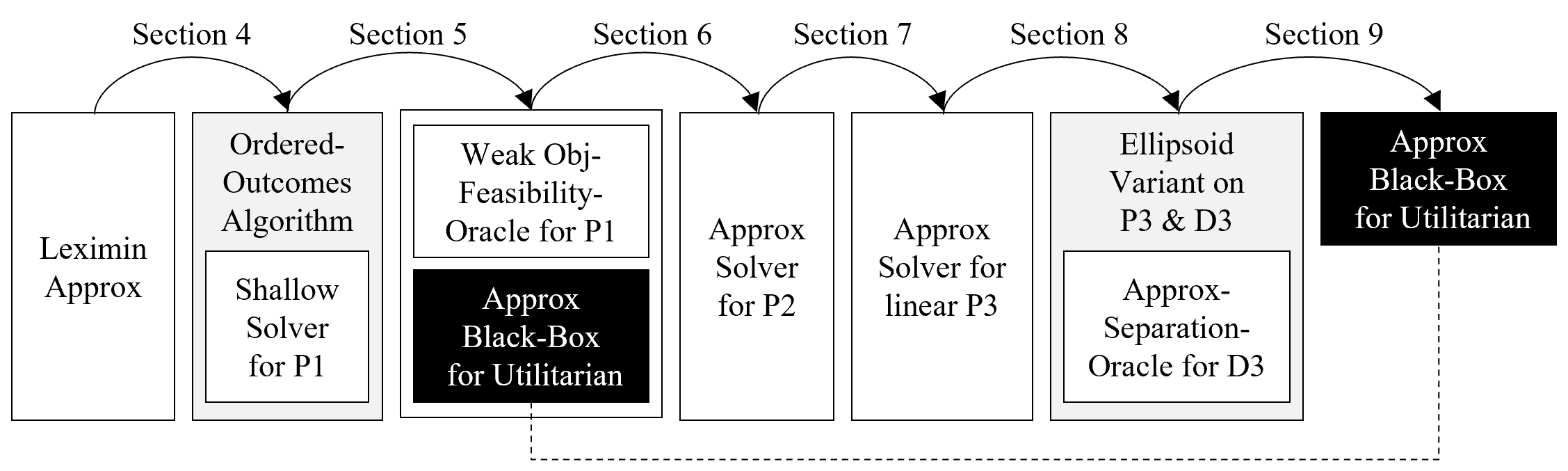} 
\caption{High level description of the reduction algorithm.
An arrow from element A to B denotes that the corresponding section reduces problem A to B.
White components are implemented in this paper; gray 
components represent existing algorithms; the black component is the black-box for the utilitarian welfare.
}
\label{fig:reduction-high-level-des}
\end{figure*} 
% ============================================================

% -------- Contributions
\paragraph{Contributions.}
The core contribution of this paper is a general protocol that, when provided with a procedure for optimizing the utilitarian welfare (for a given problem), outputs a solution that optimizes the expected leximin welfare (for the same problem).
By \emph{expected} leximin we mean a distribution over deterministic solutions, for which the expectations of the players' utilities is leximin optimal. 
Crucially, our protocol \emph{extends to approximations}, in the following sense: given an \emph{approximate} solver for the utilitarian welfare, the protocol outputs a solutions that approximates the expected leximin optimal, and the same approximation factor is preserved.
In all, with our protocol at hand, optimizing expected leximin welfare is no more difficult than optimizing utilitarian welfare.

We demonstrate the significance of this reduction by applying it to three social choice problems as follows. 

First, we consider the classic problem of \emph{allocations of indivisible goods}, where one seeks to fairly distribute a a set of indivisible goods among a set of heterogeneous agents. 
Maximizing the utilitarian welfare in this case is well-studied. 
Using our reduction, the previously mentioned greedy algorithm for agents with additive utilities, allows us to achieve a leximin optimal lottery over the allocations in polynomial time. For submodular utilities, approximating leximin to a factor better than $(1-\frac{1}{e})$ is NP-hard. However, by applying our reduction, existing approximation algorithms for utilitarian welfare can be leveraged to prove that a $0.5$-approximation can be obtained deterministically, while the best-possible $(1-\frac{1}{e})$-approximation can be obtained with high probability.

Second, we consider the problem of \emph{giveaway lotteries} \cite{arbiv_fair_2022}, where there is an event with limited capacity and groups wish to attend, but only-if they can all be admitted together. 
Maximizing the utilitarian welfare in this setting can be seen as a knapsack problem, for which there is a well-known FPTAS (fully polynomial-time approximation scheme). 
Using our reduction, we obtain an FPTAS for leximin as well.
%\footnote{Meaning that, for any desired approximation factor $\multApprox \in (0,1)$, an $\multApprox$-leximin-approximation can be obtained in time polynomial in the problem size and $\frac{1}{1-\multApprox}$}.
% .

Lastly, we consider the problem of \emph{fair lotteries for participatory budgeting}, where one seeks to find a fair lottery over the possible budget allocations. 
When agents have additive utilities, 
maximizing the utilitarian welfare can also be interpreted as a knapsack problem (albeit in a slightly different way), which allows us to provide an FPTAS for leximin.

% -------- Organization
\paragraph{Organization.}
Section \ref{sec:model} introduces the model and required definitions.

In Sections \ref{sec:main-loop}-\ref{sec:proof-main-thm}, we prove our main result: an algorithm for finding a leximin-approximation using an approximate black-box for utilitarian welfare, while importantly, preserving the same approximation factor.~
% $\multApprox$.~
The reduction is done step by step.
Section \ref{sec:main-loop} reduces the problem of leximin-approximation to another problem;
then, each Section $k\in \{5,6,7,8,9\}$ 
reduces the problem introduced in Section $k-1$ to another problem,
where in Section~\ref{sec:separation-oracle} the reduced problem is approximate utilitarian optimization.~
Section \ref{sec:proof-main-thm} ties the knots to prove the entire reduction, and extends the result to randomized solvers.~
A schematic description of the reduction structure is provided in Figure \ref{fig:reduction-high-level-des}.
% Next, t

The applications are shown in Section \ref{sec:apps}. 
Lastly, Section \ref{sec:conclusions-and-future} concludes with some future work directions.
Most proofs are deferred to the appendix.

% % ============================================================
% % ===================== put it here so it will be at the third page
% \begin{figure*}[t]
% \centering
% \includegraphics{images/reduction-flow-best2.png} 
% \caption{High level description of the reduction algorithm.
% An arrow from element A to B denotes that the corresponding section reduces problem A to B.
% White components are implemented in this paper; gray 
% components represent existing algorithms; black component is the black-box for the utilitarian welfare.
% }
% \label{fig:reduction-high-level-des}
% \end{figure*} 
% % ============================================================

% -------- Related Work
\subsection{Related Work}
Recently, there has been a wealth of research focused on finding a fair leximin lottery for specific problems. Examples include algorithms proposed for Representative Cohort Selection \cite{henzinger_et_al_FORC_2022}, Giveaway Lotteries \cite{arbiv_fair_2022}, Allocating
Unused Classrooms \cite{Kurokawa2018leximinRealWorld}, and Selecting Citizens’
Assemblies \cite{flanigan2021fair}.
This paper, in contrast, provides a general protocol that can be applied to a wide range of problems.

Alongside these, many papers describe general algorithms for exact leximin optimization 
\citep{Ogryczak_1997_loc, Ogryczak2004TelecommunicationsND, Ogryczak_2006}. 
These algorithms usually rely on a solver for single-objective problem, which, in our context, is NP-hard.
Recently, \citet{hartman2023leximin} adapted one of these algorithms to work with approximately-optimal solver. However, designing such a solver remains quite challenging. 
Our work generalizes these approaches by proving that this algorithm still functions with even a weaker type of solver, which we show can be implemented for many problems.

Another significant area of research focuses on leximin approximations. Since leximin fairness involves multiple objectives simultaneously, it is not straightforward to define what leximin approximation is. 
Several definitions have been proposed. The definition we employ is related to that of  \cite{Kleinberg20012Routing,abernethy2024lexicographic}. 
Other definitions can be found in~\cite{hartman2023leximin, henzinger_et_al_FORC_2022}.
An extensive comparison of the different definitions, including examples, is provided in Appendix \appendixName{\ref{apx:def-lex-approx}}{H}.

The work closest to ours is 
\cite{kawase_max-min_2020}, which laid the foundation for this research. 
Their paper studies the problem of stochastic allocation of indivisible goods (see Section \ref{sec:app-stoc-alloc} for more details), and proposes a reduction from egalitarian welfare to utilitarian welfare for this specific problem. 
We extend their work in two ways. 
First, we extend their approach from the allocation problem to any problem where lotteries make sense.  
Second, we show that a black-box for the utilitarian welfare is even more powerful, as it can also be used for leximin, rather than the egalitarian welfare.

% % =========== I put the figure here because it automaticly puts it on the next page....
% \begin{figure*}[t]
% \centering
% \includegraphics{images/reduction-flow-9.png} 
% \caption{High level description of the reduction algorithm.
% \eden{TODO}
% White components are implemented in this paper; gray 
% components represent existing algorithms; black component is the black-box for the utilitarian welfare.
% }
% \label{fig:reduction-high-level-des}
% \end{figure*} 

% -------- Preliminaries
\section{Preliminaries}
We denote the set $\{1,\dots, n\}$ by $[n]$ for $n \in \mathbb{N}$.

\paragraph{Mathematical Programming.}
Throughout the paper, we frequently use mathematical programming to define optimization problems. A program is characterized by the following three elements. 
(1) Constraints: used to define the set of feasible solutions, which forms a subset of $\mathbb{R}^m$ for some $m \in \mathbb{N}$. 
(2) Type: the program can be either a maximization or minimization program. 
(3) Objective function: assigns an objective value to each feasible solution.
The goal is to find a feasible solution with an optimal objective value (either maximal or minimal, depending on the problem type).

\paragraph{Notations.}
For a given program, P, the \emph{feasible region}, denoted by $\feasP{P}$, represents the set of vectors that satisfy all the constraints of P.
We say that a vector $\mathbf{v} \in \feasP{P}$ is \emph{a feasible solution for P} (interchangeably: a solution for P or feasible for P), and denote its objective value according to the objective function of P by $\objPofV{P}{\mathbf{v}}$.

% -------- Model and Definitions
\section{Model and Definitions}\label{sec:model}
The setting postulates a set of $n$ agents $N = \{1,\ldots,n\}$, and a set of deterministic options $\worldStates$ --- this set represents the possible deterministic allocations - in the fair division setting, or the possible budget allocations - in a budgeting application. 
% \eden{participatory budgeting?}
For simplicity from now on, we refer to $\worldStates$ as \emph{states} and number them  $\worldStates = \{s_1, \ldots, s_{|S|}\}$. 

We seek solutions that are \emph{distributions} over states.  Formally, an  \emph{\dist} is a probability distribution over the set $\worldStates$, and $X$ is the set of all such distributions: 
\begin{align*}
    X = \{\x = (x_1, \ldots, x_{|S|}) \in \mathbb{R}^{|S|}_{\geq 0} \mid \Hquad \sum_{j=1}^{|S|} x_j = 1 \}
\end{align*}
Importantly, we allow the number of states, $|S|$, to be exponential in $n$. This implies that even describing a solution requires exponential time. 
To address this, we follow \citet{kawase_max-min_2020} and assume that solutions are represented in \emph{sparse form} — a list of indices of entries with positive values, together with their corresponding values.
We will see that our algorithms return \emph{sparse solutions} — that is, the number of states with positive probability will be polynomial in $n$. These solutions can be efficiently described in polynomial time in their sparse form.

\begin{definition}[A Poly-sparse Vector]\label{def:poly-sized-sol}
A vector, $\mathbf{v} \in \mathbb{R}^{m}_{\geq 0}$ for some $m \in \mathbb{N}$, is a \emph{poly-sparse} if no more than polynomial number (in $n$) of its entries are non-zero. 
\end{definition}

\paragraph{Degenerate-State.}
We assume that there exists a single degenerate-state, $s_d \in S$, that gives all the agents utility $0$.
This will allow us to also consider sub-probabilities over the other states, which give positive utility to some agents. 
The use of degenerate-state makes sense in our setting, as we assume all utilities are non-negative.

\paragraph{Utilities.} The utility of agent $i \in N$ from a given state is described by the function $u_i \colon \worldStates \to \mathbb{R}_{\geq 0}$, which is provided as a value oracle.\footnote{This means that the algorithm does not have a direct access to the utility function.  Rather, given $s \in S$, the value $u_i(s)$ can be obtained in $\mathcal{O}(1)$ for any $i = 1, \ldots, n$.}

The utility of agent $i$ from a given \dist, $\mathbf{x}\in X$, is the expectations
$
	E_i(\mathbf{x}) = \sum_{j=1}^{|S|} x_j \cdot u_i(s_j).
$
The vector of expected utilities of all agents from a solution $\x$ is denoted by $\mathbf{E}(\mathbf{x}) = (E_1(\mathbf{x}), \ldots, E_n(\mathbf{x}))$; and referred to as the expected vector of $\mathbf{x}$.

\subsection{Leximin Fairness}\label{sec:leximin-def}
We aim for a leximin-optimal \dist: one that maximizes the smallest expected-utility, then, subject to that, maximizes the second-smallest expected-utility, and so on. It is formally defined below.

\paragraph{The Leximin Order.} For $\mathbf{v}\in \mathbb{R}^n$, let $\mathbf{v}^{\uparrow}$ be the corresponding vector sorted in non-decreasing order, and $v^{\uparrow}_i$ the $i$-th smallest element (counting repetitions). For example, if $\mathbf{v}=(1,4,7,1)$ then $\mathbf{v}^{\uparrow}=(1,1,4,7)$, and $v^{\uparrow}_3=4$. 

For $\mathbf{v},\mathbf{u}\in \mathbb{R}^n$, we say that $\mathbf{v}$ is (weakly) \emph{leximin-preferred} over $\mathbf{u}$, denoted $\mathbf{v} \weaklyPreferred \mathbf{u}$, if one of the following holds. 
Either $\mathbf{v}^{\uparrow} = \mathbf{u}^{\uparrow}$ (in which case they are \emph{leximin-equivalent}, denoted $\mathbf{v} \equiv \mathbf{u}$).
Or there exists an integer $1 \leq k\leq n $ such that $v^{\uparrow}_i  = u^{\uparrow}_i$ for $i < k$, and $v^{\uparrow}_k > u^{\uparrow}_k$
% \begin{align*}
%     \forall j < k \colon \quad & v^{\uparrow}_j  = u^{\uparrow}_j\\
%     & v^{\uparrow}_k > u^{\uparrow}_k
% \end{align*}
(in which case $\mathbf{v}$ is \emph{strictly} leximin-preferred over $\mathbf{u}$, denoted $\mathbf{v} \leximinPreferred \mathbf{u}$).

Note that the $\mathbf{E}(\mathbf{x})$'s are $n$-tuples, so the leximin order applies to them. 
\begin{observation}\label{obs:leximin-order-total}
    Let $\mathbf{v},\mathbf{u}\in \mathbb{R}^n$. Exactly one of the following holds: either $\mathbf{v} \weaklyPreferred \mathbf{u}$ or $\mathbf{u} \leximinPreferred \mathbf{v}$.
\end{observation}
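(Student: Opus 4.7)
\begin{proofsketch}
The plan is to case-split on whether the sorted vectors $\mathbf{v}^{\uparrow}$ and $\mathbf{u}^{\uparrow}$ coincide, and then show (a) at least one of the two relations holds, and (b) they cannot hold simultaneously.

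For existence, I would first dispose of the easy case $\mathbf{v}^{\uparrow} = \mathbf{u}^{\uparrow}$: by definition $\mathbf{v} \equiv \mathbf{u}$, hence $\mathbf{v} \weaklyPreferred \mathbf{u}$ holds. Otherwise, let $k$ be the smallest index in $\{1,\ldots,n\}$ where $v^{\uparrow}_k \neq u^{\uparrow}_k$; such a $k$ exists because the sorted vectors differ somewhere. By minimality of $k$, we have $v^{\uparrow}_i = u^{\uparrow}_i$ for all $i < k$. Since the real numbers are totally ordered, either $v^{\uparrow}_k > u^{\uparrow}_k$, giving $\mathbf{v} \leximinPreferred \mathbf{u}$ (and hence $\mathbf{v} \weaklyPreferred \mathbf{u}$), or $v^{\uparrow}_k < u^{\uparrow}_k$, giving $\mathbf{u} \leximinPreferred \mathbf{v}$.

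For uniqueness, suppose for contradiction that both $\mathbf{v} \weaklyPreferred \mathbf{u}$ and $\mathbf{u} \leximinPreferred \mathbf{v}$ hold. The second relation produces an index $k'$ with $u^{\uparrow}_i = v^{\uparrow}_i$ for $i < k'$ and $u^{\uparrow}_{k'} > v^{\uparrow}_{k'}$. In particular, $\mathbf{v}^{\uparrow} \neq \mathbf{u}^{\uparrow}$, so the first relation must also be witnessed by some index $k$ with $v^{\uparrow}_i = u^{\uparrow}_i$ for $i < k$ and $v^{\uparrow}_k > u^{\uparrow}_k$. Examining $\min(k, k')$ yields an immediate contradiction: at that index, one of the two strict inequalities fires while the other relation asserts equality.

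I do not anticipate a real obstacle here; the observation is essentially the statement that lexicographic order on $\mathbb{R}^n$, restricted to sorted representatives, is a total order. The only mildly delicate point is handling the equivalence clause $\mathbf{v}^{\uparrow} = \mathbf{u}^{\uparrow}$ consistently across the two directions, which the case-split above makes routine.
\end{proofsketch}
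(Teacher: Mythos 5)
Your proof is correct and is the standard argument that lexicographic comparison of sorted vectors is a total order on $\mathbb{R}^n$; the paper states this as an observation without proof precisely because this routine case-split suffices. The only tiny imprecision is in your uniqueness argument when $k = k'$: there, both strict inequalities fire in opposite directions (rather than one strict inequality against an equality as you describe), but the contradiction is equally immediate, so nothing is lost.
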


\paragraph{Leximin Optimal.} We say that $\mathbf{x}^* \in X$ is a \emph{leximin-optimal \dist} if 
$\mathbf{E}(\mathbf{x}^*) \weaklyPreferred
\mathbf{E}(\mathbf{x})$
for all $\mathbf{x}\in X$. 

\paragraph{Approximation.}
Throughout the paper, we denote by $\multApprox \in (0,1]$
a multiplicative approximation ratio.
% \begin{definition}[Leximin-Approximation]\label{def:leximin-approx-weak}
%     Let $\mathbf{x}^*$ be a leximin optimal \dist,
% an \dist $\mathbf{x'}$ is an \emph{$\multApprox$-leximin-approximation}
% if 
% $\mathbf{E}(\mathbf{x'})
% \weaklyPreferred
% \multApprox \cdot \mathbf{E}(\mathbf{x}^*)
% $.
% \end{definition}

\newcommand{\xA}{\x^{A}}
\begin{definition}[Leximin-Approximation]\label{def:leximin-approx-weak}
    We say that an \dist, $\xA \in X $, is 
an \emph{$\multApprox$-leximin-approximation}
if 
$\mathbf{E}(\xA)
\weaklyPreferred
\multApprox \cdot \mathbf{E}(\mathbf{x})$ 
for all $\mathbf{x}\in X$.
\end{definition}

% ============== had no space 
% \begin{observation}\label{obs:approx-to-opt}
%     Let $\mathbf{x}^*$ be a leximin optimal \dist, and $\xA$ be an \emph{$\multApprox$-leximin-approximation}. Then, 
%     $\mathbf{E}(\xA)
%     \weaklyPreferred
%     \multApprox \cdot \mathbf{E}(\mathbf{x}^*)
%     $.
% \end{observation}

\begin{observation}\label{claim:1-approx-is-opt}
    An \dist is a $1$-leximin-approximation if-and-only-if it is leximin-optimal.
\end{observation}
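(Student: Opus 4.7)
The plan is to prove this observation by direct unfolding of the two definitions, since both conditions coincide when the scalar $\multApprox$ equals $1$.

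First I would start with the forward direction: suppose $\xA \in X$ is a $1$-leximin-approximation. By Definition~\ref{def:leximin-approx-weak} (with $\multApprox = 1$), this means $\mathbf{E}(\xA) \weaklyPreferred 1 \cdot \mathbf{E}(\x)$ for every $\x \in X$. Since scalar multiplication of a real vector by $1$ leaves each coordinate, and hence the sorted vector, unchanged, we have $1 \cdot \mathbf{E}(\x) = \mathbf{E}(\x)$. Therefore $\mathbf{E}(\xA) \weaklyPreferred \mathbf{E}(\x)$ for all $\x \in X$, which is exactly the definition of a leximin-optimal \dist.

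For the converse direction, suppose $\xA$ is leximin-optimal, so that $\mathbf{E}(\xA) \weaklyPreferred \mathbf{E}(\x)$ for all $\x \in X$. Again using $1 \cdot \mathbf{E}(\x) = \mathbf{E}(\x)$, we may rewrite this as $\mathbf{E}(\xA) \weaklyPreferred 1 \cdot \mathbf{E}(\x)$ for all $\x \in X$, which matches Definition~\ref{def:leximin-approx-weak} with $\multApprox = 1$.

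There is no real obstacle here; both implications are immediate, and the only substantive observation to record is that $1 \cdot \mathbf{v}$ has the same sorted form as $\mathbf{v}$, so the leximin comparisons in the two definitions are literally identical when $\multApprox = 1$.
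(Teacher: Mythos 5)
Your proof is correct and takes the only sensible route: the paper states this as an observation without proof, precisely because it follows immediately from unfolding Definition~\ref{def:leximin-approx-weak} at $\multApprox = 1$ and noting $1 \cdot \mathbf{E}(\x) = \mathbf{E}(\x)$. Nothing to add.
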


\subsection{Utilitarian Optimization}\label{sec:utilitarian-def}
\paragraph{Utilitarian Optimal.} We say that $\mathbf{x}^{uo} \in X$ is a \emph{utilitarian-optimal \dist} if it maximizes the sum of expected utilities: %, that is,
\begin{align*}
     \forall \mathbf{x} \in X \colon \quad \sum_{i=1}^n E_i(\mathbf{x}^{uo}) \geq \sum_{i=1}^n E_i(\mathbf{x})
\end{align*}

\paragraph{Stochasticity is Unnecessary.}
In fact, for utilitarian welfare, there always exists a deterministic solution -- a single state -- that maximizes the sum of expected utilities.
%
% In fact, for utilitarian welfare, stochasticity is unnecessary, as there always exists a deterministic solution -- a single state -- that maximizes the utilitarian welfare.
%
\begin{lemma}\label{lemma:stoc-unnecessary-for-utilitarian}
Let 
$
\displaystyle
    s^{uo}\in \argmax_{s \in S} \sum_{i=1}^n u_i(s).
$
Then 
\begin{align*}
     \forall \mathbf{x} \in X \colon \quad \sum_{i=1}^n u_i(s^{uo}) \geq \sum_{i=1}^n E_i(\mathbf{x}).
\end{align*}
\end{lemma}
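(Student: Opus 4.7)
The plan is to prove this by a direct calculation: expand the expected utilitarian welfare of any distribution as a convex combination of the utilitarian welfares of individual states, and then observe that this convex combination is bounded above by its largest term, which is attained at $s^{uo}$ by construction.

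Concretely, I would start from the definition of the expected utility, $E_i(\mathbf{x}) = \sum_{j=1}^{|S|} x_j \cdot u_i(s_j)$, and sum over $i$. Swapping the two finite sums (justified because both are finite) rewrites the total expected welfare as $\sum_{j=1}^{|S|} x_j \cdot W(s_j)$, where $W(s) := \sum_{i=1}^n u_i(s)$ denotes the utilitarian welfare of state $s$. So the quantity we want to bound is a convex combination (with coefficients $x_j \geq 0$ summing to $1$) of the values $W(s_j)$.

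Next I would apply the defining property of $s^{uo}$: for every $j$, $W(s_j) \leq W(s^{uo})$. Substituting this termwise and pulling the constant $W(s^{uo})$ out of the sum gives $\sum_j x_j \cdot W(s_j) \leq W(s^{uo}) \cdot \sum_j x_j = W(s^{uo})$, where the last equality uses the definition of $X$. Unwinding $W(s^{uo}) = \sum_{i=1}^n u_i(s^{uo})$ yields exactly the claimed inequality.

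There is no real obstacle here: the statement is essentially the observation that the maximum of a finite set of real numbers is at least any convex combination of them. The only thing to be slightly careful about is the order-of-summation swap, but since both index sets are finite this is immediate. The argument also implicitly confirms the stronger statement preceding the lemma, namely that a utilitarian-optimal distribution can always be taken to be a point mass on $s^{uo}$.
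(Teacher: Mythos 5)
Your proposal is correct and uses essentially the same calculation as the paper: expand the expected utilitarian welfare as a convex combination of the per-state welfares and bound it by the maximum term. The only cosmetic difference is that the paper routes through an auxiliary utilitarian-optimal distribution $\mathbf{x}^{uo}$, whereas you apply the same chain of (in)equalities directly to an arbitrary $\mathbf{x} \in X$, which is if anything a slight streamlining.
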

Algorithms for utilitarian welfare are typically designed for this deterministic setting (where the goal is to find a utilitarian-optimal \emph{state}). Our reduction requires a deterministic solver.

\subsubsection{Utilitarian Solver.}
The proposed reduction requires a utilitarian welfare solver that is robust to scaling each utility function by a different constant.
Formally:

\begin{tcolorbox}[left=2pt, right=2pt,  
% top=2pt, bottom=2pt, middle=2pt,
colback=black!5!white,colframe=black!50!black, colbacktitle=black!75!black,
title=\textbf{(\rom{1})~$\multApprox$-Approximate Black-Box for Utilitarian Welfare}]
  \textbf{Input:} $n$ non-negative constants $c_1, \ldots, c_n$.
  % \begin{itemize}
  %     \item A set of world-states $\worldStates$.
  %     \item  $n$ functions, $u_i \colon \worldStates \to \mathbb{R}_{\geq 0}$ for $i =1, \ldots, n$.
  %     \item $n$ non-negative constants $c_1, \ldots, c_n$.
  %     % \item An approximation parameter $\multApprox \in (0,1]$.
  % \end{itemize}
  \tcblower
  \textbf{Output:} A state, $s^{uo} \in S$, for which:
  \begin{align*}
      \forall s \in S \colon \quad \sum_{i =1}^n c_i\cdot u_i(s^{uo}) \geq \multApprox \sum_{i =1}^n  c_i\cdot u_i(s).
  \end{align*}
  % That is, a state that $\multApprox$-approximates the maximum sum of utilities $u'_i(s) = c_i\cdot u_i(s)$ for any $s \in S$.
\end{tcolorbox}
When $\multApprox = 1$, we say that we have an \emph{exact} black-box.

\paragraph{}
Many existing solvers for the utilitarian welfare are inherently robust to scaling the utility functions by a constant, as they handle a class of utilities that are closed under this operation. For instance, in the division of goods, various classes of utilities, such as additive and submodular, are closed under constant scaling.

At times, however, rescaling can result in diverging from the problem definition. For example, if the problem definition assumes that utilities are normalized to sum up to $1$, then the definition is not robust to rescaling. In such cases, technically, a polynomial solver need not be able to solve the scaled version. 
Hence, we explicitly include the assumption that the solver is robust to rescaling.

\section{Main Loop}\label{sec:main-loop}
The main algorithm used in the reduction is described in Algorithm \ref{alg:basic-ordered-Outcomes}.
It is adapted from the Ordered Outcomes algorithm of~\cite{Ogryczak_2006} for finding an exact leximin-optimal solution. It uses a given solver for the maximization 
% mathematical program\footnote{A mathematical program is a single-objective optimization problem.} 
program below. 

\paragraph{The Program \ref{eq:compact-OP}.}
The program is parameterized by an integer $t \in N$ and $t-1$ constants $(z_1, \ldots, z_{t-1})$; its only variable is $\mathbf{x}$ (a vector of size $|S|$ representing an \dist):
\begin{align*}
    \max &&& \sum_{i=1}^{t}\expectedValBy{i}{\mathbf{x}} - \sum_{i=1}^{t-1}  z_i \tag{\progCompact}\label{eq:compact-OP}\\
    s.t. &&& (\text{\progCompact.1}) 
    \Hquad  
    \sum_{j=1}^{|S|} x_j = 1
    \nonumber\\
    &&& (\text{\progCompact.2}) 
    \Hquad  
    x_j \geq 0 && j = 1, \ldots, |S|
    \nonumber\\
                    &&& (\text{\progCompact.3}) \Hquad \sum_{i=1}^{\ell}\expectedValBy{i}{\mathbf{x}} \geq \sum_{i=1}^{\ell}  z_i && \forXinY{\ell}{t-1} \nonumber
\end{align*}
Constraints  (\progCompact.1--2)  simply ensure that  $\mathbf{x}\in X$.
Constraint (\progCompact.3) says that for any $\ell<t$, the sum of the smallest $\ell$ expected-utilities is at least the sum of the $\ell$ constants $z_1,\ldots,z_{\ell}$.
The objective of a solution\footnote{
% A solution for a program $P$ is one that belongs to its feasible region $\feasP{P}$ - i.e., satisfies all its constraints. 
See Preliminaries for more details.} $\x \in \feasC$ is the difference between the sum of its smallest $t$ expected-utilities and the sum of the $t-1$ constants $z_1,\ldots,z_{t-1}$. 
% $\objCx{\x} = \sum_{i=1}^{t}\expectedValBy{i}{\mathbf{x}} - \sum_{i=1}^{t-1}  z_i$
%
% For brevity, we do not mention the parameters when the context is clear - for instance, $\objCx{\x}$ describes the objective of the solution $\x \in \feasC$ in the $t$-th program with the constants $z_1\ldots,z_{t-1}$.

% We can now return to the algorithm description. 
\paragraph{Algorithm \ref{alg:basic-ordered-Outcomes}.}
The algorithm has $n$ iterations. In each iteration $t$, it uses the given solver for \ref{eq:compact-OP} with the following parameters: the iteration counter $t$, and $(z_1,\ldots,z_{t-1})$ that were computed in previous iterations.
The solver returns a solution for this \ref{eq:compact-OP}, denoted by $\x^t$. If $t<n$ then its objective value, denoted by $\ztCons$, is used in the following iterations as an additional parameter.
%
%
% If $t<n$ then we proceed to the iteration $t+1$,
% in which one additional constraint is added into (P1.3):
% $\sum_{i=1}^{t}\expectedValBy{i}{\mathbf{x}} \geq \sum_{i=1}^{t}  z_i$.
% This is equivalent to:
% $\sum_{i=1}^{t}\expectedValBy{i}{\mathbf{x}} - \sum_{i=1}^{t-1}z_i   \geq z_t$, which essentially requires that the $(t+1)$-th solution does not decrease the $t$-th program objective value.}
% % This additional constraint essentially requires that the objective function of the $t$-th program is at least  as large as the objective value of $\x^t$
% % (denoted by $z_t$). 
Finally, the solution $\mathbf{x}^n$ generated at the last application of the solver is returned by the main loop.

Notice that the program evolves in each iteration. For any $t \in [n-1]$, the program at iteration $t+1$ differs from the program at iteration $t$ in two ways: first, the objective function changes; and second, an additional constraint is introduced as part of Constraint (\progCompact.3), $\sum_{i=1}^{t}\expectedValBy{i}{\mathbf{x}} \geq \sum_{i=1}^{t}  z_i$ (for $\ell = t$). 
This constraint is equivalent to:
$\sum_{i=1}^{t}\expectedValBy{i}{\mathbf{x}} - \sum_{i=1}^{t-1}z_i  \geq z_t$, which essentially ensures that any solution for following programs achieves an objective value at-least $\ztCons$ according to the objective function of the program at iteration $t$. In other words, this constraint guarantees that as we continually improving the situation.
This implies, in particular, that $\x^t$ remains feasible for the $(t+1)$-th program.

\begin{observation}\label{obs:xt-feasible-for-t+1}
    Let $t \in [n-1]$. The solution obtained in the $t$-th iteration of \Cref{alg:basic-ordered-Outcomes}, $\x^t$, is also feasible for the $(t+1)$-th program.
\end{observation}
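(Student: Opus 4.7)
My plan is to verify each constraint of the $(t+1)$-th program in turn, exploiting the fact that the two programs differ only in the addition of a single new constraint at iteration $t+1$ (as spelled out in the paragraph preceding the observation).

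First, I would note that constraints (\progCompact.1) and (\progCompact.2) are identical in both programs; since $\x^t \in \feasC$ for the $t$-th program, it already satisfies $\sum_{j=1}^{|S|} x^t_j = 1$ and $x^t_j \geq 0$ for all $j$. Likewise, the constraints in (\progCompact.3) for $\ell = 1, \ldots, t-1$ appear verbatim in both the $t$-th and the $(t+1)$-th programs, and are therefore inherited from the feasibility of $\x^t$ at iteration $t$. So the only thing left to check is the single new constraint introduced at iteration $t+1$: the $\ell = t$ case of (\progCompact.3), namely $\sum_{i=1}^{t}\expectedValBy{i}{\x^t} \geq \sum_{i=1}^{t} z_i$.

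Second, I would verify this new constraint directly from the definition of $\ztCons$ in the algorithm. By construction, $\ztCons$ equals the objective value of $\x^t$ in the $t$-th program, i.e.\ $\ztCons = \objCx{\x^t} = \sum_{i=1}^{t}\expectedValBy{i}{\x^t} - \sum_{i=1}^{t-1} z_i$. Rearranging gives $\sum_{i=1}^{t}\expectedValBy{i}{\x^t} = \sum_{i=1}^{t} z_i$, so the new constraint holds with equality. I do not anticipate any real obstacle: $z_t$ is defined as the objective value attained by $\x^t$ precisely so that the new constraint at iteration $t+1$ is satisfied (with equality), regardless of whether the underlying solver is exact or approximate. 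The observation is thus essentially a bookkeeping verification, proved by a constraint-by-constraint comparison of the two programs.
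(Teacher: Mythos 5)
Your proof is correct and matches the paper's reasoning: the paper establishes this observation via exactly the same constraint-by-constraint comparison, noting that the only new constraint at iteration $t+1$ is the $\ell=t$ case of (\progCompact.3), which is equivalent to requiring the objective value at iteration $t$ to be at least $z_t$, and this holds (with equality) by the definition $z_t := \objCx{\x^t}$.
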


% Also, for $t\in [n]$

\begin{algorithm}[t]
\caption{Main Loop}
\label{alg:basic-ordered-Outcomes}
\textbf{Input}: A solver for \ref{eq:compact-OP}.
\begin{algorithmic}[1] %[1] enables line numbers
% \REQUIRE  A solver $\mathcal{PS}$.
% \ENSURE  A solution $\mathbf{x}\in X$.
% ---
\FOR{$t=1$ to $n$}
\STATE Let $\x^t$ be the solution returned by applying the given solver
with parameters $t$ and $(z_1,\ldots,z_{t-1})$.
\STATE Let $\ztCons := \objCx{\x^t}$.
\ENDFOR
\STATE \textbf{return} $\mathbf{x}^n$.
\end{algorithmic}
\end{algorithm}

\paragraph{Solving \ref{eq:compact-OP} Exactly.}
By
\citet{Ogryczak_2006} (Theorem 1), the returned $\x^n$ is leximin-optimal
when the solver for \ref{eq:compact-OP} is \emph{exact} - that is, it returns a solution $\x^t \in \feasC$ with optimal objective value --- $\objCx{\x^t} \geq \objCx{\x}$ for any solution $\x \in \feasC$.
% that satisfy Constraints (\progCompact.1--4).

However, in some cases, no efficient exact solver for \ref{eq:compact-OP} is known. An example is stochastic allocation of indivisible goods among agents with submodular utilities, described in Section \ref{sec:app-stoc-alloc}  --- in this case, it is NP-hard even for $t=1$, as its optimal solution maximizes the egalitarian welfare (see \cite{kawase_max-min_2020} for the hardness proof).

\paragraph{Solving \ref{eq:compact-OP} Approximately.}
Our initial attempt to deal with this issue was to follow the approach of \citet{hartman2023leximin}. They consider an \emph{approximately-optimal} solver for \ref{eq:compact-OP}, 
% Our first attempt to deal with this issue was to design an \emph{approximately-optimal} solver for \ref{eq:compact-OP}, as suggested by \citet{hartman2023leximin}.
which returns a solution $\x^t \in \feasC$ with approximately-optimal objective value ---  $\objCx{\x^t} \geq \multApprox \cdot  \objCx{\x}$ for any solution $\x \in \feasC$.
% that is, returns a solution $(\mathbf{x},z_t)$ such that $z_t$ is at least $\alpha$  times the maximum among all pairs $(\mathbf{x},z_t)$.
%
% , for some $\alpha\in(0,1]$.
%They prove that, given an approximately-optimal solver, 
%Algorithm \ref{alg:basic-ordered-Outcomes} returns a solution that satisfies a stronger type of leximin-approximation (see Appendix \ref{apx:def-lex-approx} for details).

When $t=1$, the algorithm of \citet{kawase_max-min_2020} is an approximate-solver for \ref{eq:compact-OP}. However, for $t > 1$, their technique no longer works due to fundamental differences in the structure of the resulting programs.
% -- specifically, the subtraction in the objective function. NOT SURE IT IS TOTLLY CORRECT
%
%
% However, 
Designing such a solver is very challenging; all our efforts to design such a solver for several NP-hard problems were unsuccessful.
% \footnote{Details on our unsuccessful attempts will be given on request.}

\paragraph{Solving \ref{eq:compact-OP} Shallowly.}
% \eden{to edit title...}
Our first contribution is to show that Algorithm \ref{alg:basic-ordered-Outcomes} can work with an even weaker kind of solver for \ref{eq:compact-OP}, that we call a \emph{shallow solver}. 
The term "shallow" is used since the solver returns a solution whose objective value is optimal only with respect to a \emph{subset} of the feasible solutions, as described below.

% ============= was ===================
% Recall that $s_d$ is a dummy-state that gives utility $0$ to all agents, and $x_d$ is its probability according to $\mathbf{x}$. 
% We define the following subset of $X$:
% \begin{align*}
%     \shallowX = \{\x \in X \mid x_d \geq (1-\alpha)\}.
% \end{align*}
% In words: an $\x \in \shallowX$ uses at most $\multApprox$ fraction of the distribution for states that give positive utility to some agents. 
% % Then, 
% A \emph{shallow-solver} is defined as follows: 
%================================

Recall that $s_d$ is the degenerate-state that gives utility $0$ to all agents, and $x_d$ is its probability according to $\mathbf{x}$. 
We consider the set of solutions that use at-most $\multApprox$ fraction of the distribution for the other states, which give positive utility to some agents:
\begin{align*}
    X_{\leq \alpha} = \{\x \in X \mid \sum_{j \neq d} x_j \leq \alpha\}.
\end{align*}
% It is easy to see that $X_{\leq \alpha}$ is a subset of the set of feasible solutions $X$ $X_{\leq 1} = X$.
%
A \emph{shallow-solver} is defined as follows: 

% bounded solutions?

% \rmark{Alternative: \\
% First, recall that $s_d$ is the dummy-state that gives utility $0$ to all agents, and $x_d$ is its probability according to $\mathbf{x}$. \\
% We say that a solution, $\x$, is \emph{$(\leq\alpha)$-efficient} if it uses exactly $\multApprox$ fraction of the distribution for allocations that give positive utility to some agents -- i.e., $\sum_{j \neq d} x_j = \alpha$. \\
% Next, we define the set of at-most-$\alpha$-efficient solutions as follows:
% \begin{align*}
%     X_{\leq \alpha} = \{\x \in X \mid \sum_{j \neq d} x_j \leq \alpha\}.
% \end{align*}
% }

% \rmark{Alternative: \\
% Recall that $s_d$ is a null-allocation that gives utility $0$ to all agents, and $x_d$ is its probability according to $\mathbf{x}$. \\
% We say that a solution, $\x$, is \emph{$\alpha$-efficient} if it uses exactly $\multApprox$ fraction of the distribution for allocations that give positive utility to some agents -- i.e., $\sum_{j \neq d} x_j = \alpha$. \\
% Next, we define the set of at-most-$\alpha$-efficient solutions as follows:
% \begin{align*}
%     X_{\leq \alpha} = \{\x \in X \mid \sum_{j \neq d} x_j \leq \alpha\}.
% \end{align*}
% }

\begin{tcolorbox}[left=2pt, right=2pt,  
% top=2pt, bottom=2pt, middle=2pt, 
colback=black!5!white,colframe=black!50!black, colbacktitle=black!75!black,
title=\textbf{(\rom{2})~$\multApprox$-Shallow-Solver for \ref{eq:compact-OP}}]
  \textbf{Input:}~ An integer $t \in N$ and rationals $z_1,\ldots, z_{t-1}$.
  % \rmark{\\
  % * If $t\geq 2$, then also $\x^{t-1}$.}
  %
  % \begin{itemize}
  %     \item A set of world-states $\worldStates$.
  %     \item  $n$ functions, $u_i \colon \worldStates \to \mathbb{R}_{\geq 0}$ for $i =1, \ldots, n$.
  %     \item An integer $t \in [n]$ and real numbers $z_1,\ldots, z_{t-1}$.
  % \end{itemize}
  \tcblower
  \textbf{Output:} A solution
  $\x^t \in \feasC$ such that \\$\objCx{\x^t} \geq \objCx{\x}$ for any $\x \in \feasC \cap \shallowX$.
% \erel{TODO: Check consistency with other sections and with appendix --- is the solution with ' or without '?}
\end{tcolorbox}

In words: the solver returns a solution $\x^t \in \feasC$ whose objective value is guaranteed to be optimal comparing only to solutions that are also in $\shallowX$. % feasible solutions?
%
% \rmark{solutions that use at-most $\alpha$ fraction of the distribution for states that give positive utility to some agents.} 
%
% at least as high as the objectives of solutions whose $\mathbf{x}'$ are in $\shallowX$.
This is in contrast to an exact solver, where the objective value of the returned solution is optimal comparing to all solutions.\footnote{See Table \appendixName{\ref{tab:3-solvers}}{1} in Appendix \appendixName{\ref{apx:using-shallow-solver}}{B} for comparison of the solvers.}
Clearly, when $\multApprox=1$, we get an exact solver, as $X_{\leq 1} = X$.

Notice that $\x^t$ does not required to be in $\shallowX$, so its objective value might be \emph{strictly-higher} than the optimal objective value of the set $\feasC \cap \shallowX$.
% among the feasible solutions in the set $\shallowX$.

% \erel{Mention the more general result in the intro; refer from there to appendix.}

\begin{lemma}\label{lemma:alg1-shallow-solver}
    Given an $\multApprox$-shallow-solver for \ref{eq:compact-OP}, Algorithm \ref{alg:basic-ordered-Outcomes} returns an $\multApprox$-leximin-approximation. 
\end{lemma}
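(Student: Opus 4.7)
My plan is to argue by contradiction using a single, carefully-chosen test solution at one specific iteration. Suppose \Cref{alg:basic-ordered-Outcomes} does not return an $\alpha$-leximin-approximation; then by \Cref{obs:leximin-order-total} there is some $\mathbf{x}^* \in X$ with $\alpha \mathbf{E}(\mathbf{x}^*) \leximinPreferred \mathbf{E}(\mathbf{x}^n)$. Writing $v^*_i := \expectedValBy{i}{\mathbf{x}^*}$, this strict preference picks out an index $k \in [n]$ with $\expectedValBy{i}{\mathbf{x}^n} = \alpha v^*_i$ for every $i<k$ and $\expectedValBy{k}{\mathbf{x}^n} < \alpha v^*_k$. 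The contradiction will come from invoking the shallow-solver guarantee at iteration $k$.

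The test solution is $\mathbf{x}^*_\alpha \in X$, obtained from $\mathbf{x}^*$ by scaling every non-degenerate probability by $\alpha$ and dumping the remaining $1-\alpha$ mass onto $s_d$. Two properties follow immediately from the construction together with the fact that $s_d$ gives every agent utility $0$: $\mathbf{x}^*_\alpha \in \shallowX$ (at most an $\alpha$-fraction of probability lies outside $s_d$), and $\expectedValBy{i}{\mathbf{x}^*_\alpha} = \alpha v^*_i$ for every $i$.

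The crux is to verify that $\mathbf{x}^*_\alpha$ is feasible for \ref{eq:compact-OP} at iteration $k$, i.e.\ that $\alpha \sum_{i=1}^\ell v^*_i \geq \sum_{i=1}^\ell z_i$ for every $\ell<k$. Both ingredients come from the output $\mathbf{x}^n$: constraint (\progCompact.3) at iteration $n$ gives $\sum_{i=1}^\ell \expectedValBy{i}{\mathbf{x}^n} \geq \sum_{i=1}^\ell z_i$ for $\ell<n$, and the definition of $z_n$ makes this identity tight at $\ell=n$, so $\sum_{i=1}^\ell z_i \leq \sum_{i=1}^\ell \expectedValBy{i}{\mathbf{x}^n}$ uniformly for $\ell \leq n$. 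Combining this with $\expectedValBy{i}{\mathbf{x}^n} = \alpha v^*_i$ for $i<k$ gives $\sum_{i=1}^\ell z_i \leq \alpha \sum_{i=1}^\ell v^*_i$ for all $\ell<k$, which is precisely the needed feasibility.

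Closing the contradiction: the $\alpha$-shallow-solver at iteration $k$ yields $z_k = \objCx{\mathbf{x}^k} \geq \objCx{\mathbf{x}^*_\alpha} = \alpha \sum_{i=1}^k v^*_i - \sum_{i=1}^{k-1} z_i$, hence $\sum_{i=1}^k z_i \geq \alpha \sum_{i=1}^k v^*_i$. On the other hand, evaluating the same partial-sum bound at $\ell=k$ and using the strict deficit $\expectedValBy{k}{\mathbf{x}^n} < \alpha v^*_k$ gives $\sum_{i=1}^k z_i \leq \sum_{i=1}^{k-1}\alpha v^*_i + \expectedValBy{k}{\mathbf{x}^n} < \alpha \sum_{i=1}^k v^*_i$, contradicting the previous inequality. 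The main subtlety I anticipate is the feasibility step, specifically the need to use the partial-sum bound $\sum_{i=1}^\ell z_i \leq \sum_{i=1}^\ell \expectedValBy{i}{\mathbf{x}^n}$ in \emph{both} regimes $\ell<n$ and $\ell=n$ (since a priori $k$ may equal $n$); these follow respectively from (\progCompact.3) and the defining identity of $z_n$, so no auxiliary induction on iterations $<k$ is required.
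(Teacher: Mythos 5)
Your proof is correct and follows essentially the same approach as the paper's proof sketch: assume for contradiction a witness $\mathbf{x}^*$ with $\alpha\mathbf{E}(\mathbf{x}^*) \succ \mathbf{E}(\mathbf{x}^n)$, construct the $\alpha$-downgraded vector $\mathbf{x}^*_\alpha$ landing in $\shallowX$ with $\mathbf{E}(\mathbf{x}^*_\alpha)=\alpha\mathbf{E}(\mathbf{x}^*)$, verify it is feasible for the iteration-$k$ program, and contradict the shallow-solver guarantee via the objective value. The only organizational difference is that the paper's full appendix proof first isolates the equivalence ``$\mathbf{x}^n$ is an $\alpha$-leximin-approximation iff $\mathbf{E}(\mathbf{x}^n)\succeq\mathbf{E}(\mathbf{x})$ for all $\mathbf{x}\in\shallowX$'' as a separate lemma (built on the upgrade/downgrade machinery) and then argues only over $\shallowX$, whereas you fold the downgrade step directly into the contradiction argument — which is exactly what the main-body proof sketch does.
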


\begin{proofsketch}
% \rmark{At the end of the algorithm, we have $n$ constants $z_1,\ldots, z_n$, which allow us to treat the programs solved during the run as ones involving only the variable $\x$ (the variable $r$ is set to the corresponding constant).}
Consider the solution $\mathbf{x}^n$  returned by Algorithm \ref{alg:basic-ordered-Outcomes}. As $\mathbf{x}^n$ is feasible for the program solved in the last iteration, and as constraints are only being added along the way, we can conclude that $\mathbf{x}^n$ is feasible for all of the $n$ programs solved during the algorithm run.

Next, we suppose by contradiction that $\mathbf{x}^n$ is \emph{not} an $\multApprox$-leximin-approximation.
By definition, this means that there exists an $\x \in X$ such that $\multApprox \mathbf{E}(\x) \succ \mathbf{E}(\mathbf{x}^n)$.
That is, there exists an integer $k \in [n]$ such that: $\multApprox\expectedValBy{i}{\x} = \expectedValBy{i}{\mathbf{x}^n}$ for $i < k$ and $\multApprox\expectedValBy{k}{\x} > \expectedValBy{k}{\mathbf{x}^n}$.

We then construct $\xt$ from $\x$  as follows: $x'_j := \multApprox \cdot x_j$ for any $j \neq d$, and $x'_d = 1- \multApprox \sum_{j\neq d }x_j$. It is easy to verify that $\xt \in \shallowX$ and that $E_i(\xt) = \multApprox E_i(\x)$ for any $i$. This means that $\expectedValBy{i}{\xt} = \expectedValBy{i}{\mathbf{x}^n}$ for $i < k$ (as $\multApprox\expectedValBy{i}{\x} = \expectedValBy{i}{\mathbf{x}^n}$).

Next, we consider the program solved in iteration $t=k$.
As $\retSol$ satisfies all its constraints, and as these constraints impose a lower bound on the sum of the least $\ell < k$ expected-utilities, we can conclude that $\xt$ satisfies all these constraints as well. Thus, $\xt \in \feasC \cap \shallowX$.

Finally, we prove that the objective-value of $\xt$ for this program is strictly-higher than the one obtained by the shallow-solver (i.e., $z_k$) --- in contradiction to the solver guarantees.
\end{proofsketch}

%================================= A Shallow-Solver for P1
\section{A Shallow-Solver for \ref{eq:compact-OP}}\label{sec:designing-shallow-solver}
Our next task is to design a shallow solver for \ref{eq:compact-OP}.
We use a \emph{weak objective-feasibility-oracle}, defined as follows:
% \eden{maybe: $\multApprox$-Weak Feasible-Objective-Oracle or  Objective-Feasibility-Oracle or Value-Feasibility-Oracle for \ref{eq:compact-OP}}

\begin{tcolorbox}[left=2pt, right=2pt,  
% top=2pt, bottom=2pt, middle=2pt, 
% colback=white,colframe=black!50!black, coltitle=black, colbacktitle=black!7,
colback=black!5!white,colframe=black!50!black, colbacktitle=black!75!black,
title=\textbf{(\rom{3})~$\multApprox$-Weak Objective-Feasibility-Oracle for \ref{eq:compact-OP}}]
  % \textbf{Input:} 
  % \begin{itemize}
  %     \item A set of world-states $\worldStates$.
  %     \item  $n$ functions, $u_i \colon \worldStates \to \mathbb{R}_{\geq 0}$ for $i =1, \ldots, n$.
  %     \item An integer $t \in [n]$ and real numbers $z_1,\ldots, z_{t-1}$.
  %     \item A real number $z'_t$.
  % \end{itemize}
  \textbf{Input:} ~
  An integer $t \in N$ and rationals $z_1,\ldots, z_{t-1}$,\\ and another rational $\ztCons$.
  \tcblower
  \textbf{Output:} 
  One of the following claims regarding $\ztCons$:\vspace{0.5em}
    \begin{minipage}{0.27\linewidth}
  \setlength{\baselineskip}{1em}
\textbf{Feasible }
\end{minipage}%
\hfill
\begin{minipage}{0.73\linewidth}
$\exists \x \in \feasC$ s.t. $\objCx{\x} \geq \ztCons$.\\
In this case, the oracle returns such $\mathbf{x}$.
\end{minipage}\\[0.15em]

  \begin{minipage}{0.27\linewidth}
  \setlength{\baselineskip}{1em}
\textbf{Infeasible } \\
\textbf{Under-$\shallowX$ }
\end{minipage}%
\hfill
\begin{minipage}{0.73\linewidth}
$\nexists \mathbf{x} \in \feasC \cap \shallowX$ s.t. $\objPofV{\ref{eq:compact-OP}}{\x} \geq \ztCons$.
\end{minipage}

\end{tcolorbox}

Note that these claims are not mutually exclusive, as $\ztCons$ can satisfy both conditions simultaneously. In this case, the oracle may return any one of these claims.

\begin{lemma}
\label{lemma:red-shallow-solver-to-feas-test}
% Let $\multApprox \in (0,1]$. 
Given an $\multApprox$-weak objective-feasibility-oracle for \ref{eq:compact-OP} (\rom{3}), an $\multApprox$-approximate black-box for the utilitarian welfare (\rom{1}),
and an arbitrary vector in $\feasC$.~~ An efficient $\multApprox$-shallow-solver for \ref{eq:compact-OP} (\rom{2}) can be designed.
\end{lemma}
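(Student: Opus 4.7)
\begin{proofsketch}
The plan is to implement the shallow-solver by binary search on a target value $z$, using the weak objective-feasibility-oracle as the test at each step. I maintain a candidate solution $\x^* \in \feasC$ (initialized to the given arbitrary vector) together with bounds $z_L \leq z_U$: $z_L$ tracks the objective of the best solution found so far, and $z_U$ is an upper bound on the shallow optimum $z^* := \max\{\objCx{\x} : \x \in \feasC \cap \shallowX\}$.

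To initialize $z_U$, I invoke the $\multApprox$-approximate utilitarian black-box with weights $c_i = 1$, obtaining a state $s^{uo}$ with $\sum_i u_i(s^{uo}) \geq \multApprox \max_s \sum_i u_i(s)$. Since for every $\x \in X$ we have $\sum_{i=1}^{t} \expectedValBy{i}{\x} \leq \tfrac{t}{n}\sum_{i=1}^{n} E_i(\x) \leq \tfrac{t}{n}\max_s \sum_i u_i(s) \leq \tfrac{t}{n\multApprox}\sum_i u_i(s^{uo})$, setting $z_U := \tfrac{t}{n\multApprox}\sum_i u_i(s^{uo}) - \sum_{i=1}^{t-1} z_i$ gives a valid initial upper bound on $\objCx{\x}$ over all of $X$, and a fortiori over $\feasC \cap \shallowX$.

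At each iteration I query the oracle at $z := (z_L + z_U)/2$. A \textbf{Feasible} response returns some $\x \in \feasC$ with $\objCx{\x} \geq z$; if $\objCx{\x} > \objCx{\x^*}$ I update $\x^* \gets \x$ and $z_L \gets \objCx{\x^*}$. An \textbf{Infeasible-Under-$\shallowX$} response certifies $z > z^*$, so I set $z_U \gets z$. I iterate until $z_U - z_L$ falls below a threshold $\delta$ to be chosen, and output $\x^*$.

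Correctness rests on the observation that whenever $z < z^*$ the Infeasible-Under-$\shallowX$ claim is \emph{false}, so the oracle \emph{must} return Feasible together with a concrete witness; hence the binary search drives $z_L$ arbitrarily close to $z^*$ from below while keeping $z_U \geq z^*$. The main obstacle I expect is the quantitative termination argument: I must show that at termination $\objCx{\x^*} \geq z^*$ holds \emph{exactly}, not merely up to $\delta$. For this I appeal to the standard argument that both $z^*$ and each value $\objCx{\x}$ returned by the oracle are rationals of polynomial bit-complexity --- the objective is concave and piecewise-linear over the polytope $\feasC \cap \shallowX$, so its optimum is attained at a vertex whose coordinates are rationals of polynomial bit-complexity in the input. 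Choosing $\delta$ inverse-exponentially small in that bit-complexity then forces any residual gap between $\objCx{\x^*}$ and $z^*$ to be zero, so $\x^*$ meets the shallow-solver guarantee and only polynomially many oracle queries are required.
\end{proofsketch}
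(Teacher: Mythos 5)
Your binary-search algorithm is exactly the one the paper uses (Algorithm 2 in Appendix C): maintain $z_L, z_U$, query the weak objective-feasibility-oracle at the midpoint, update the candidate $\x^*$ on a Feasible answer, tighten $z_U$ on an Infeasible-Under-$\shallowX$ answer, and initialize $z_U$ from the utilitarian black-box with $c_i = 1$. Where you diverge is the termination argument, and that is where there is a real gap.

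The paper does \emph{not} claim that the binary-search error can be driven to exactly zero. It stops when $z_U - z_L \le \epsilon$ and then carries an additive $\epsilon$ through the rest of the reduction, proving that Algorithm 1 run with an $(\multApprox, \epsilon)$-shallow-solver returns an $(\multApprox, \epsilon)$-leximin-approximation, for any $\epsilon > 0$ at the cost of $\mathcal{O}(\log \frac{1}{\epsilon})$ extra oracle calls. You instead try to eliminate the error by a bit-complexity argument, and that argument does not survive scrutiny here. First, the feasible region you optimize over is $\feasC \cap \shallowX$, and the constraint defining $\shallowX$ is $\sum_{j \neq d} x_j \le \multApprox$. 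Nothing in the setup forces $\multApprox$ to be a rational of polynomially bounded bit-length; indeed the paper's own applications use $\multApprox = 1 - 1/e$, which is irrational, so $z^*$ need not be a rational with controllable denominator at all and no choice of $\delta$ can force $\objCx{\x^*} \geq z^*$ exactly. Second, even granting rationality, the objective of \ref{eq:compact-OP} is concave piecewise-linear (a minimum of linear functions), not linear, so its maximum over the polytope $\feasC \cap \shallowX$ is not attained at a vertex of that polytope in general; one has to pass to the linearized lift (the auxiliary-variable formulation \ref{eq:app-vsums-OP}) before the standard vertex/bit-complexity bound applies, and you do not make that step. The paper's decision to tolerate and propagate an $\epsilon$ additive error sidesteps both issues cleanly, which is why it is the route taken there.
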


\begin{proofsketch}
The solver is described in \appendixName{\Cref{alg:shallow-solver}}{Algorithm 2} in \appendixName{\Cref{apx:desiging-shallow}}{Appendix C}.
We perform binary search over the potential objective-values $\ztCons$ for the program \ref{eq:compact-OP}.
% \eden{to look for the optimal value? maximum?}.

As a lower bound for the search, we simply use the objective value of the given feasible solution.

As an upper bound for the search, we use an upper bound on the utilitarian welfare of the original utilities $u_i$ -- it is sufficient as the objective function is the sum of the $t$-smallest utilities minus a positive constant; thus, an upper bound on the sum of all utilities can bound the maximum objective value as well.
We obtain this upper bound by using the given 
$\multApprox$-approximate black-box with $c_i = 1$ for all $i \in N$; and then take $\frac{1}{\multApprox}$ times the returned value.
% We then let $U = \frac{1}{\multApprox} U'$.
% Recall that the objective function is the sum of the smallest $t$ utilities minus a positive constant. Thus, it is clear that an upper bound on the sum of all utilities can be used also as an upper bound on the maximum objective value.
 
During the binary search, we query the weak objective-feasibility-oracle about the current value of $\ztCons$.
If the oracle asserts Feasible, we increase the lower bound; otherwise, we decrease the upper bound.

We stop the binary search once we reach the desired level of accuracy.%
\footnote{For simplicity, we assume the binary search error is negligible, as it can be reduced below $\epsilon$ in time $\mathcal{O} (\log \frac{1}{\epsilon})$, for any $\epsilon > 0 $; the full proof in Appendix \appendixName{\ref{apx:desiging-shallow}}{C}, omits this assumption.}
% \footnote{For simplicity, we neglect the binary search error as it can be made arbitrarily smaller in logarithmic time relative to the error; the full proof in Appendix \ref{apx:desiging-shallow}, omits this assumption.}
% }
Finally, we return the solution $\x$ corresponding to the highest $\ztCons$ for which the oracle returned Feasible. 

By the definition of the oracle, this $\x$ is feasible for \ref{eq:compact-OP}.
In addition, this $\ztCons$ is at-least as high as the optimal objective across $\shallowX$ -- as all the values higher than $\ztCons$ that were considered by the algorithm, were determined as Infeasible-Under-$\shallowX$.
This gives us a shallow solver for \ref{eq:compact-OP}.
% \qed%
\end{proofsketch}

% ======================== Approx. Feasibility Oracle for P1
\section{Weak Objective-Feasibility-Oracle for \ref{eq:compact-OP}}
To design a weak objective-feasibility-oracle for \ref{eq:compact-OP},
we modify 
\ref{eq:compact-OP} as follows. 
First, we convert the optimization program \ref{eq:compact-OP},
to a feasibility program (without an objective), by adding a constraint saying that the objective function of \ref{eq:compact-OP} is at least the given constant $\ztCons$: $\sum_{i=1}^{t}\expectedValBy{i}{\mathbf{x}} \geq \sum_{i=1}^{t}  z_i$.
We then make two changes to this feasibility program: (1) remove Constraint (\progCompact.1) that ensures that the sum of values is $1$,
and (2) add the objective function: $\min \sum_{j=1}^{|S|} x_j$.
We call the resulting program \ref{eq:min-sum-OP}:
\begin{align*}
    &\min \quad \sum_{j=1}^{|S|} x_{j} \quad s.t. \tag{\progAppFirst}\label{eq:min-sum-OP}\\
    & (\text{\progAppFirst.1}) \Hquad x_{j} \geq 0 &&  j = 1, \ldots, |S| \nonumber\\
    & (\text{\progAppFirst.2}) \Hquad \sum_{i=1}^{\ell}\expectedValBy{i}{\mathbf{x}} \geq \sum_{i=1}^{\ell}  z_i && \forXinY{\ell}{t} \nonumber
\end{align*}
Note that (\progAppFirst.2) contains the constraints (\progCompact.3), as well as the new constraint added when converting to a feasibility program.

As before, the only variable is $\mathbf{x}$. However, in this program, 
a vector $\mathbf{x}$ can be feasible without being in $X$, as its elements are not required to sum to $1$.

We shall now prove that a weak objective-feasibility-oracle for \ref{eq:compact-OP} can be designed given a solver for \ref{eq:min-sum-OP}, which returns a poly-sparse\footnote{A poly-sparse vector (def. \ref{def:poly-sized-sol}) is one whose number of non-zero values can be bounded by a polynomial in $n$.} approximately-optimal solution. As 
\ref{eq:min-sum-OP} is a \emph{minimization} program and as $\multApprox\in(0,1]$, it is defined as follows:

\newcommand{\xRet}{\x^A}
\newcommand{\xRetj}{x^A_j}

% % we say that 
% a solution $\xRet \in \feasP{\ref{eq:min-sum-OP}}$ is $\frac{1}{\multApprox}$-approximately-optimal if 
% % $\sum_{j=1}^{|S|} \xRetj \leq \frac{1}{\multApprox} \sum_{j=1}^{|S|} x_j$ 
% $\objPofV{\ref{eq:min-sum-OP}}{\xRet} \leq \frac{1}{\multApprox} \objPofV{\ref{eq:min-sum-OP}}{\x}$ 
% for any $\mathbf{x} \in \feasP{\ref{eq:min-sum-OP}}$.
% % Recall that a poly-sparse vector (def. \ref{def:poly-sized-sol}) is one in which the number of non-zero values can be bounded by a polynomial in $n$,  and can therefore it can be represented in sparse form in polynomial space.

\begin{tcolorbox}[left=2pt, right=2pt,  
colback=black!5!white,colframe=black!50!black, colbacktitle=black!75!black,
title=\textbf{(\rom{4})~$\frac{1}{\multApprox}$-Approx.-Optimal-Sparse-Solver for \ref{eq:min-sum-OP}}]
  \textbf{Input:} ~
  An integer $t \in N$ and rationals $z_1,\ldots, z_{t}$.
  \tcblower
  \textbf{Output:} A poly-sparse solution $\xRet \in \feasP{\ref{eq:min-sum-OP}}$ such that $\objPofV{\ref{eq:min-sum-OP}}{\xRet} \leq \frac{1}{\multApprox} \objPofV{\ref{eq:min-sum-OP}}{\x}$ 
for any $\mathbf{x} \in \feasP{\ref{eq:min-sum-OP}}$.
\end{tcolorbox}

\begin{lemma}\label{lemma:red-feas-test-to-approx-P2}
Given an $\frac{1}{\multApprox}$-approximately-optimal-sparse-solver for \ref{eq:min-sum-OP} (\rom{4}),
~~ an $\multApprox$-weak objective-feasibility-oracle for \ref{eq:compact-OP}  (\rom{3}) can be obtained.
\end{lemma}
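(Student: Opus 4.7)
The plan is to build the oracle by invoking the given $\frac{1}{\multApprox}$-approximate sparse solver for \ref{eq:min-sum-OP} exactly once and branching on the total mass of its output. The driving observation is that the degenerate-state $s_d$ acts as a ``free'' mass-absorber: since $u_i(s_d) = 0$ for every $i$, altering the $d$-coordinate leaves $\mathbf{E}(\mathbf{x})$ unchanged. This yields a tight correspondence between $\feasC \cap \shallowX$ and $\feasP{\ref{eq:min-sum-OP}}$. On one side, any $\mathbf{x} \in \feasC \cap \shallowX$ with $\objCx{\mathbf{x}} \geq \ztCons$ becomes, after zeroing its $d$-coordinate, a vector in $\feasP{\ref{eq:min-sum-OP}}$ of total mass $\sum_{j \neq d} x_j \leq \multApprox$. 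On the other side, any $\mathbf{x}^A \in \feasP{\ref{eq:min-sum-OP}}$ with $\sum_j x^A_j \leq 1$ extends to a distribution in $\feasC$ attaining the same objective by placing the residual mass $1 - \sum_{j \neq d} x^A_j$ on $s_d$.

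Concretely, on input $(t, z_1, \ldots, z_{t-1}, \ztCons)$, the oracle calls the \ref{eq:min-sum-OP} solver with parameters $t$ and $(z_1, \ldots, z_{t-1}, \ztCons)$ to obtain a poly-sparse $\mathbf{x}^A$. If $\sum_j x^A_j \leq 1$, it sets $x_j := x^A_j$ for $j \neq d$ and $x_d := 1 - \sum_{j \neq d} x^A_j$, and returns ``Feasible'' with witness $\mathbf{x}$; otherwise it returns ``Infeasible-Under-$\shallowX$''. The extension adds at most one non-zero entry, so $\mathbf{x}$ remains poly-sparse. Verification of the Feasible branch is routine: $\mathbf{x}$ is a distribution by construction, $\mathbf{E}(\mathbf{x}) = \mathbf{E}(\mathbf{x}^A)$, and the constraints (\progAppFirst.2) for $\ell < t$ become (\progCompact.3) directly, while the $\ell = t$ instance rearranges to $\objCx{\mathbf{x}} \geq \ztCons$ via the identity $\sum_{i=1}^t \expectedValBy{i}{\mathbf{x}} - \sum_{i=1}^{t-1} z_i \geq \ztCons \Leftrightarrow \sum_{i=1}^t \expectedValBy{i}{\mathbf{x}} \geq \sum_{i=1}^t z_i$.

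The ``Infeasible-Under-$\shallowX$'' branch is where the $\multApprox$ factor is earned, and I expect this to be the main subtlety---verifying that the multiplicative slack is exactly what is needed. I would argue by contraposition: suppose some $\mathbf{x}^* \in \feasC \cap \shallowX$ satisfies $\objCx{\mathbf{x}^*} \geq \ztCons$; then its zeroed-$d$ version is feasible for \ref{eq:min-sum-OP} with total mass at most $\multApprox$, so the \ref{eq:min-sum-OP} optimum is at most $\multApprox$. The $\frac{1}{\multApprox}$-approximation guarantee then forces $\sum_j x^A_j \leq \frac{1}{\multApprox} \cdot \multApprox = 1$, contradicting the branch condition. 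The factor $\frac{1}{\multApprox}$ multiplied by the ``depth'' $\multApprox$ of $\shallowX$ is precisely what makes the test ``$\sum_j x^A_j \leq 1$'' tight. Infeasibility of \ref{eq:min-sum-OP} itself is benign: it vacuously implies infeasibility of $\feasC \cap \shallowX$ at threshold $\ztCons$, so returning ``Infeasible-Under-$\shallowX$'' remains valid.
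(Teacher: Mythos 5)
Your construction and both branches of the argument coincide with the paper's own proof: call the \ref{eq:min-sum-OP} solver once, branch on whether the returned mass exceeds $1$, extend the output by absorbing residual mass into the degenerate state for the Feasible branch, and use the zeroed-$d$ version of a hypothetical witness in $\feasC \cap \shallowX$ to bound the \ref{eq:min-sum-OP} optimum by $\multApprox$ for the Infeasible branch. Your version is correct and matches the paper in approach; the only cosmetic difference is that you write $x_d := 1 - \sum_{j\neq d} x^A_j$ where the paper writes $x'_d := 1 - \sum_j \xRetj$, but since an optimal solver for \ref{eq:min-sum-OP} can always put $x^A_d = 0$ without loss, the two formulas agree.
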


% \begin{lemma}\label{lemma:red-feas-test-to-approx-P2}
% Given a solver that returns a poly-sparse $\frac{1}{\multApprox}$-approximately-optimal solution $\xRet$ for \ref{eq:min-sum-OP},
% ~~ an $\multApprox$-approximate-feasibility-oracle for \ref{eq:compact-OP} can be obtained.
% \end{lemma}

\begin{proof}
We apply the solver to get a vector $\xRet \in \feasP{\ref{eq:min-sum-OP}}$, and then check whether $\objPofV{\ref{eq:min-sum-OP}}{\xRet} := \sum_{j=1}^{|S|}\xRetj$ is at-most $1$.
Importantly, this can be done in polynomial time, since $\xRet$ is a poly-sparse vector. 

If so, we assert that $\ztCons$ is Feasible.
Indeed, while $\xRet$ may not be feasible for \ref{eq:compact-OP}, we can construct another poly-sparse vector, from $\xRet$, that is feasible for \ref{eq:compact-OP} and achieves an objective value of at-least $\ztCons$.
The new vector, $\xt$, is equal to $\xRet$ except that $x'_d := (1- \sum_{j=1}^{|S|} \xRetj)$.
It can be easily verified that the new vector, $\xt$, is in $\feasC$, and that $\objCx{\xt}\geq\ztCons$, as required.
    
Otherwise, $\sum_{j=1}^{|S|}\xRetj > 1$. In this case, we assert that $\ztCons$ is Infeasible-under-$\shallowX$. 
Indeed, assume for contradiction that there exists $\x \in \feasC \cap \shallowX$ with $\objCx{\x} \geq \ztCons$.
We show that this implies that the optimal (min.) objective-value for \eqref{eq:min-sum-OP} is at-most $\multApprox$, meaning that any $\frac{1}{\multApprox}$-approximation should yield an objective value of at-most $1$ -- in contradiction to the objective value of $\xRet$. 
We construct a new vector $\xt$ from $\x$, where $\xt$ equals to $\x$
except that $x'_d := 0$.
As $\x \in \feasC$ and $E_i(\x) = E_i(\xt)$ for all $i\in N$, it follows that $\xt \in \feasP{\ref{eq:min-sum-OP}}$.
But $\x \in \shallowX$, which means that it uses at-most $\multApprox$ for states $j \neq d$. 
This implies that $\objPofV{\ref{eq:min-sum-OP}}{\xt} := \sum_{j=1}^{|S|} x'_j $ is at-most $ \multApprox$, which ensures that the optimal (min.) objective is also at-most $\multApprox$ -- as required.
\end{proof}

% Approximately Optimal Solver for P2 

\section{Approximately-Optimal Solver for \ref{eq:min-sum-OP}}\label{sec:moving-to-LP}
The use of $\expectedValBy{}{}$ operator makes both \ref{eq:compact-OP} and \ref{eq:min-sum-OP} non-linear.
However, \citet{Ogryczak_2006} showed that \ref{eq:compact-OP}  can be ``linearized`` by replacing the constraints using $\expectedValBy{}{}$ with a polynomial number of linear constraints.
% , which are introduced using a polynomial number of auxiliary variables
We take a similar approach for \ref{eq:min-sum-OP} to construct the following \emph{linear} program \ref{eq:app-vsums-OP}:
% %
% Although \ref{eq:min-sum-OP} is non-linear due to the use of $\expectedValBy{}{}$ operator in Constraints (\progAppFirst.3--4); \citet{Ogryczak_2006},
% show that it can be ``linearized'' by replacing Constraints (\progAppFirst.3--4) 
% with a polynomial number of linear constraints, (\progAppSecond.3--6). We call the resulting \emph{linear} program \ref{eq:app-vsums-OP}:
%
%
\begin{align}
    &\min \quad \sum_{j=1}^{|S|} x_{j} \quad s.t. \tag{\progAppSecond}\label{eq:app-vsums-OP}\\
    & (\text{\progAppSecond.1}) \Hquad x_{j} \geq 0 &&  j = 1, \ldots, |S| \nonumber\\
    & (\text{\progAppSecond.2}) \Hquad \ell y_{\ell} - \sum_{i=1}^n m_{\ell,i}\geq \sum_{i=1}^{\ell}  z_i && \forXinY{\ell}{t} \nonumber \\
    & (\text{\progAppSecond.3}) \Hquad m_{\ell,i} \geq y_{\ell} - \sum_{j=1}^{|S|} x_j \cdot u_i(s_j)  && \forXinY{\ell}{t},\Hquad \forXinY{i}{n} \nonumber \\
    & (\text{\progAppSecond.4}) \Hquad m_{\ell,i} \geq 0  && \forXinY{\ell}{t},\Hquad \forXinY{i}{n} \nonumber
\end{align}
Constraints (\progAppSecond.2--4) introduce $t(n+1) \leq n(n+1)$ auxiliary variables:  $y_{\ell}$ and $m_{\ell,i}$ for all $\ell \in [t]$ and $ i\in [n]$.
% We formally prove the equivalence between the two sets of constraints in the appendix; the proof is adapted from \cite{hartman2023leximin} (Lemma 7).
% \eden{maybe to give only one of them}
% \begin{lemmarep}\label{lemma:equivalency-constraints}
%     A vector $\mathbf{x}$ 
%     satisfies constraints (\progCompact.3--4)
%     of \ref{eq:min-sum-OP}
%     if and only if 
%     there exist $y_{\ell}$ and $m_{\ell,i}$ for $1 \leq \ell \leq t$ and $1 \leq i \leq n$ 
%     such that  $\left(\mathbf{x}, (y_1, \ldots, y_t), (m_{1,1}, \ldots m_{t,n})\right)$ 
%     satisfies constraints
%     (\progAppSecond.2--5)
%     of \ref{eq:app-vsums-OP}.
% \end{lemmarep}
% \begin{proof}
    % For completeness, put the proof here (in the appendix).
    % \footnote{The proof appears only in the full version \cite{Hartman2023leximinFullVer} in Appendix B.2.}.
    % [Adjusted Lemma 7 in \citet{hartman2023leximin}]
% \end{proof}
%
%
\newcommand{\yRet}{\mathbf{y}^A}
\newcommand{\mRet}{\mathbf{m}^A}
We formally prove the equivalence between the two sets of constraints
%\footnote{Proof is based on  Lemma 7 in \cite{hartman2023leximin}.} 
in Appendix D; and show that it implies that the required solver for \ref{eq:min-sum-OP} can be easily derived from the same type of solver for \ref{eq:app-vsums-OP}.

\begin{tcolorbox}[left=2pt, right=2pt,  
colback=black!5!white,colframe=black!50!black, colbacktitle=black!75!black,
title=\textbf{(\rom{5})~$\frac{1}{\multApprox}$-Approx.-Optimal-Sparse-Solver for \ref{eq:app-vsums-OP}}]
  \textbf{Input:} ~
  An integer $t \in N$ and rationals $z_1,\ldots, z_{t}$.
  \tcblower
  \textbf{Output:} A poly-sparse $\left(\xRet, \yRet, \mRet\right) \in \feasP{\ref{eq:app-vsums-OP}}$ such that $\objPofV{\ref{eq:app-vsums-OP}}{\left(\xRet, \yRet, \mRet\right)} \leq \frac{1}{\multApprox} \objPofV{\ref{eq:app-vsums-OP}}{\left(\x, \mathbf{y}, \mathbf{m}\right)}$ 
\\for any $\left(\x, \mathbf{y}, \mathbf{m}\right) \in \feasP{\ref{eq:app-vsums-OP}}$.
\end{tcolorbox}

% \begin{lemma}\label{lemma:red-approx-solver-for-P3-to-solver-for-P2}
% % Let $\multApprox \in (0,1]$. 
% Let $\left(\xRet, \yRet, \mRet\right)$ 
% be a poly-sparse  $\frac{1}{\multApprox}$-approximately-optimal solution for \ref{eq:app-vsums-OP}.~~
% Then, $\xRet$ is a poly-sparse $\frac{1}{\multApprox}$-approximately-optimal solution for \ref{eq:min-sum-OP}.
% \end{lemma}

\begin{lemma}\label{lemma:red-approx-solver-for-P3-to-solver-for-P2}
Given an $\frac{1}{\multApprox}$-approximately-optimal-sparse-solver for \ref{eq:app-vsums-OP} (\rom{5}),
~~ an $\frac{1}{\multApprox}$-approximately-optimal-sparse-solver for \ref{eq:min-sum-OP} (\rom{4}) can be obtained.
\end{lemma}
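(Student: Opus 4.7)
The plan is to treat the hypothesized $\frac{1}{\multApprox}$-approximately-optimal-sparse-solver for \ref{eq:app-vsums-OP} as a black box, invoke it on the same input $(t, z_1, \ldots, z_t)$, and simply return the $\mathbf{x}$-component $\xRet$ of the triple $(\xRet, \yRet, \mRet)$ it produces. The entire argument rests on the equivalence between the two feasible regions, whose proof is deferred to Appendix~D: a vector $\mathbf{x} \in \mathbb{R}^{|S|}_{\geq 0}$ lies in $\feasP{\ref{eq:min-sum-OP}}$ if and only if there exist $\mathbf{y} \in \mathbb{R}^t$ and $\mathbf{m} \in \mathbb{R}^{t \times n}$ such that $(\mathbf{x}, \mathbf{y}, \mathbf{m}) \in \feasP{\ref{eq:app-vsums-OP}}$. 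This is the standard Ogryczak linearization: the identity $\expectedValBy{\ell}{\mathbf{x}} = \max_{y}\{\,\ell y - \sum_{i=1}^n \max(0,\, y - E_i(\mathbf{x}))\,\}$ translates Constraint (\progAppFirst.2) at level $\ell$ into Constraints (\progAppSecond.2)--(\progAppSecond.4) at level $\ell$, via the substitution $m_{\ell,i} := \max(0, y_\ell - E_i(\mathbf{x}))$. Crucially, under this substitution the two objective functions coincide literally, both being $\sum_{j=1}^{|S|} x_j$, so the optimal values of \ref{eq:min-sum-OP} and \ref{eq:app-vsums-OP} are equal.

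Granted the equivalence, the reduction is a one-line projection argument executed in three steps. First (feasibility): the ``if'' direction of the equivalence yields $\xRet \in \feasP{\ref{eq:min-sum-OP}}$, and $\xRet$ remains poly-sparse because it is a subvector of the poly-sparse triple returned by the P3-solver. Second (approximation): for any $\mathbf{x} \in \feasP{\ref{eq:min-sum-OP}}$, the ``only if'' direction produces witnesses $\mathbf{y}, \mathbf{m}$ with $(\mathbf{x}, \mathbf{y}, \mathbf{m}) \in \feasP{\ref{eq:app-vsums-OP}}$ and matching objective value; chaining the P3 approximation guarantee with these identifications gives $\objPofV{\ref{eq:min-sum-OP}}{\xRet} = \objPofV{\ref{eq:app-vsums-OP}}{(\xRet, \yRet, \mRet)} \leq \tfrac{1}{\multApprox}\,\objPofV{\ref{eq:app-vsums-OP}}{(\mathbf{x}, \mathbf{y}, \mathbf{m})} = \tfrac{1}{\multApprox}\,\objPofV{\ref{eq:min-sum-OP}}{\mathbf{x}}$, which is exactly the bound demanded for a $\frac{1}{\multApprox}$-approximately-optimal-sparse-solver for \ref{eq:min-sum-OP}. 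Third (efficiency): a single call to the given P3-solver suffices, so the reduction runs in polynomial time.

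The only step requiring real work is the equivalence itself, which is why the authors isolate it in Appendix~D. The ``if'' direction is routine: combining (\progAppSecond.3) and (\progAppSecond.4) forces $m_{\ell,i} \geq \max(0, y_\ell - E_i(\mathbf{x}))$, and plugging into (\progAppSecond.2) gives $\sum_{i=1}^{\ell} z_i \leq \ell y_\ell - \sum_i \max(0, y_\ell - E_i(\mathbf{x})) \leq \expectedValBy{\ell}{\mathbf{x}}$, which is (\progAppFirst.2). The ``only if'' direction needs explicit witnesses, for which the canonical choices $y_\ell = \orderedVby{\ell}(\mathbf{E}(\mathbf{x}))$ and $m_{\ell,i} = \max(0, y_\ell - E_i(\mathbf{x}))$ work, making the inner $\max$ tight. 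I do not anticipate any genuine obstacle here; once the linearization identity is verified, the entire lemma reduces to bookkeeping.
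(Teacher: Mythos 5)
Your proposal is correct and takes essentially the same route as the paper: project out the $\mathbf{x}$-component of the P3-solver's output, invoke the feasible-region equivalence established in Appendix~D via the Ogryczak linearization (Lemmas D.1--D.2), and exploit the fact that both programs share the literal objective $\sum_j x_j$ so the $\tfrac{1}{\multApprox}$ guarantee transfers directly (the paper phrases this last step as a contradiction, you chain inequalities, but these are interchangeable). One notational slip worth flagging: the Ogryczak identity you cite should read $\sum_{i=1}^{\ell}\expectedValBy{i}{\mathbf{x}} = \max_{y}\bigl\{\,\ell y - \sum_{i=1}^{n}\max\bigl(0,\,y - E_i(\mathbf{x})\bigr)\bigr\}$, since $\expectedValBy{\ell}{\mathbf{x}}$ alone denotes only the $\ell$-th smallest expected utility rather than the running sum appearing in Constraints (\progAppFirst.2) and (\progAppSecond.2); your choice of witnesses $y_\ell$ and $m_{\ell,i}$ and the rest of the bookkeeping are nonetheless exactly right.
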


\begin{proofsketch}
    The equivalence between the two sets of constraints says that $\left(\x, \mathbf{y}, \mathbf{m}\right) \in \feasP{\ref{eq:app-vsums-OP}}$ if-and-only-if $\x \in \feasP{\ref{eq:min-sum-OP}}$.
    Thus, given a poly-sparse $\frac{1}{\multApprox}$-approximately-optimal solution $\left(\xRet, \yRet, \mRet\right) \in \feasP{\ref{eq:app-vsums-OP}}$; then, $\xRet$ (which is clearly a poly-sparse vector), is a $\frac{1}{\multApprox}$-approximately-optimal solution for \ref{eq:min-sum-OP}.
\end{proofsketch}

% Approximately Optimal Solver for P3

\section{Approximately-Optimal Solver for  \ref{eq:app-vsums-OP}}
\label{sec:approx-the-linear-program}
\ref{eq:app-vsums-OP} is a linear program, but has more than $|S|$ variables.
% and $|S|$ (the number of states) may be exponential in $n$.
% We can approximate it in polynomial time
% using an approximate separation oracle to its dual program, similarly to \citet{karmarkar_efficient_1982}.
Although $|S|$ (the number of states) may be exponential in $n$, \ref{eq:app-vsums-OP} can be approximated in polynomial time
using a variant of the ellipsoid method, similarly to \citet{karmarkar_efficient_1982}, as described bellow.
The method uses an approximate separation oracle for the dual of the linear program \ref{eq:app-vsums-OP}:
%
% In this section, we explain how the (primal) linear program \ref{eq:app-vsums-OP} can be approximated using a variant of the ellipsoid method. The method is operated on this program and its dual, \ref{eq:dual-vsums-OP}, and requires an \emph{approximate-separation-oracle (ASO)} for the dual program. 
% \subsection{The Dual Program \ref{eq:dual-vsums-OP}}
\begin{align}
& \max \quad  \sum_{\ell=1}^{t} q_{\ell} \sum_{i=1}^{\ell}z_i \quad s.t. \tag{\dualApp}\label{eq:dual-vsums-OP}\\
&\begin{aligned}
    & (\text{\dualApp.1}) \Hquad \sum_{i=1}^n u_i(s_j) \sum_{\ell=1}^{t} v_{\ell,i}\leq 1 && \forall j = 1, \ldots, |S| \nonumber \\
 & (\text{\dualApp.2}) \Hquad \ell q_{\ell} - \sum_{i=1}^n v_{\ell,i} \leq 0 && \forXinY{\ell}{t} \nonumber \\
& (\text{\dualApp.3}) \Hquad -q_{\ell} +v_{\ell,i} \leq 0 && \forXinY{\ell}{t},\Hquad \forXinY{i}{n} \nonumber \\
& (\text{\dualApp.4}) \Hquad q_{\ell} \geq 0  &&\forXinY{\ell}{t} \nonumber\\
& (\text{\dualApp.5}) \Hquad v_{\ell,i} \geq 0  && \forXinY{\ell}{t},\Hquad \forXinY{i}{n} \nonumber
\end{aligned}
\end{align}
Similarly to \ref{eq:app-vsums-OP}, the program \ref{eq:dual-vsums-OP} is parameterized by an integer $t$, and rational numbers $(z_1, \ldots, z_t)$.
It has a polynomial number of variables: $q_{\ell}$ and $v_{\ell,j}$ for any $\ell \in [t]$ and $j \in [n]$; and a potentially exponential number of constraints due to (\dualApp.1); see Appendix \appendixName{\ref{apx:primal-dual}}{E} for derivation.

We prove that the required solver for \ref{eq:app-vsums-OP} can be designed given the following procedure for its dual \ref{eq:dual-vsums-OP}:
\begin{tcolorbox}[left=2pt, right=2pt,  
colback=black!5!white,colframe=black!50!black, colbacktitle=black!75!black,
title=\textbf{(\rom{6})~$\frac{1}{\multApprox}$-Approx.-Separation-Oracle for \ref{eq:dual-vsums-OP}}]
  \textbf{Input:} ~
  An integer $t \in N$, rationals $z_1,\ldots, z_{t}$, and a potential assignment of the program variables $(\mathbf{q}, \mathbf{v})$.
  \tcblower
  \textbf{Output:} 
  One of the following regarding $(\mathbf{q}, \mathbf{v})$:\vspace{0.5em}
    \begin{minipage}{0.27\linewidth}
  \setlength{\baselineskip}{1em}
\textbf{Infeasible }
\end{minipage}%
\hfill
\begin{minipage}{0.73\linewidth}
At least one of the constraints is violated by $(\mathbf{q}, \mathbf{v})$.
In this case, the oracle returns such a constraint.
\end{minipage}\\[1em]

  \begin{minipage}{0.27\linewidth}
  \setlength{\baselineskip}{1em}
\textbf{$\frac{1}{\multApprox}$-Approx. } \\
\textbf{Feasible }
\end{minipage}%
\hfill
\begin{minipage}{0.73\linewidth}
All the constraints are $\frac{1}{\multApprox}$-approximately-maintained --- the left-hand side of the inequality is at least $\frac{1}{\multApprox}$ times the its right-hand side.
\end{minipage}
\end{tcolorbox}

% \paragraph{Approximating the Primal Using the Dual.}
% A \emph{$\frac{1}{\multApprox}$-approximate-separation-oracle} for a maximization linear program is defined as follows.
% Given a vector $\mathbf{\chi'}$, it either  (1) asserts that $\mathbf{\chi'}$ is infeasible and returns a violated constraint; or (2) asserts that $\mathbf{\chi'}$ is $\frac{1}{\multApprox}$-approximately-\emph{feasible}, which means that for any constraint $\mathbf{a} \cdot \chi \leq b$, the vector $\mathbf{\chi'}$ satisfies $\mathbf{a} \cdot \mathbf{\chi'} \leq \frac{1}{\multApprox} \cdot b$.
% % \erel{Both $\upsilon$ and $x$ are over-used.}

% \eden{to add lemma}
In Appendix \appendixName{\ref{apx:ellipsoid}}{F}, we present the variant of the ellipsoid method that, given a $\frac{1}{\multApprox}$-approximate-separation oracle for the (max.) dual program, ~allows us to obtain a sparse \mbox{$\frac{1}{\multApprox}$-approximation} to the (min.) primal program (\appendixName{\Cref{lemma:ellipsoid-with-approx-oracle}}{Lemma F.1}). 
This allows us to conclude the following:
\begin{corollary}
\label{lemma:aso-to-approx-P3}
    Given a \emph{$\frac{1}{\multApprox}$-approximate-separation-oracle} for \ref{eq:dual-vsums-OP} (\rom{6}),~~~
    a $\frac{1}{\multApprox}$-approximately-optimal-sparse-solver for \ref{eq:app-vsums-OP} (\rom{5}) can be derived.
    %  $\left(\mathbf{x'}, \mathbf{y}, \mathbf{m}\right)$
\end{corollary}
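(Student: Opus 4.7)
The plan is to derive this corollary as a direct application of the ellipsoid-method variant developed in Appendix \appendixName{\ref{apx:ellipsoid}}{F} (\appendixName{\Cref{lemma:ellipsoid-with-approx-oracle}}{Lemma F.1}), which converts a $\frac{1}{\multApprox}$-approximate separation oracle for a maximization LP into a sparse $\frac{1}{\multApprox}$-approximately-optimal solver for the corresponding dual minimization LP.

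First, I would verify that \ref{eq:dual-vsums-OP} is in fact the LP dual of \ref{eq:app-vsums-OP} --- a mechanical LP-duality computation, already sketched in Appendix \appendixName{\ref{apx:primal-dual}}{E}. Concretely, each primal variable corresponds to a dual constraint: the variable $x_j$ yields one instance of constraint (D3.1) for every state $s_j \in S$ (giving exponentially many such constraints); the variable $y_\ell$ yields constraint (D3.2); and the variable $m_{\ell,i}$ yields constraint (D3.3). The non-negativity of all primal variables matches the ``$\leq$'' form of the dual constraints, and the primal minimization objective $\sum_j x_j$ translates into the dual maximization objective $\sum_\ell q_\ell \sum_{i=1}^{\ell} z_i$.

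Next, I would invoke \appendixName{\Cref{lemma:ellipsoid-with-approx-oracle}}{Lemma F.1}. The dual \ref{eq:dual-vsums-OP} has polynomially many variables but exponentially many constraints (only (D3.1) is exponential in $n$), which is exactly the setting where the ellipsoid method with a separation oracle applies. The appendix variant weakens the requirement to a $\frac{1}{\multApprox}$-approximate separation oracle and, crucially, outputs a \emph{sparse} primal solution: the ellipsoid terminates after polynomially many oracle calls, so only polynomially many dual constraints (equivalently, only polynomially many primal variables $x_j$) are ever separated, and hence only polynomially many coordinates of the returned primal vector can be non-zero, as required by (\rom{5}).

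The main obstacle is not in this reduction step but in proving \appendixName{\Cref{lemma:ellipsoid-with-approx-oracle}}{Lemma F.1} itself; once that is in hand, the corollary is essentially a matter of checking input/output types and reading off the conclusion. The one subtlety requiring care is that the approximation factor propagates in the right direction: because the primal is a minimization and the dual a maximization, a $\frac{1}{\multApprox}$-slackening on the dual constraints corresponds, via weak duality, to a $\frac{1}{\multApprox}$-multiplicative gap on the primal objective, which is precisely the guarantee (\rom{5}) demands.
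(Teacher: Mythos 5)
Your proposal matches the paper's argument: the corollary is obtained by checking (via the Appendix E derivation) that \ref{eq:dual-vsums-OP} is the LP dual of \ref{eq:app-vsums-OP} and then invoking the ellipsoid-method variant of Lemma~\ref{lemma:ellipsoid-with-approx-oracle}, which turns the $\frac{1}{\multApprox}$-approximate separation oracle for the (max.) dual into a poly-sparse $\frac{1}{\multApprox}$-approximately-optimal solution for the (min.) primal. Your remarks on why the returned solution is sparse and on the direction in which the approximation factor propagates are exactly the points the paper relies on.
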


% we present a variant of the ellipsoid method that allows us to prove the following lemma:
% \begin{lemma}\label{lemma:ellipsoid-with-approx-oracle}
%     Given a $\frac{1}{\multApprox}$-approximate-separation-oracle for the (max.) dual program, ~a poly-sparse $\frac{1}{\multApprox}$-approximately-optimal solution for the (min.) primal program can be obtained in polynomial time.
% \end{lemma}

% This allows us to conclude the following:
% \begin{corollary}
% \label{lemma:aso-to-approx-P3}
%     Given an \emph{$\frac{1}{\multApprox}$-approximate-separation-oracle} for the dual \ref{eq:dual-vsums-OP},~~
%     one can construct a polynomial-time solver for \ref{eq:app-vsums-OP}, which returns
%     a poly-sparse $\frac{1}{\multApprox}$-approximately-optimal solution.
%     %  $\left(\mathbf{x'}, \mathbf{y}, \mathbf{m}\right)$
% \end{corollary}

% ==================== Approx. Separation Oracle for D3
\section{Approx. Separation Oracle for \ref{eq:dual-vsums-OP}}\label{sec:separation-oracle}
Now, we design the required oracle 
using the given approximate black-box for the utilitarian welfare.
\begin{lemma}\label{lemma:sep-oracle-for-D}
\label{lemma:red-approx-P3-to-utiliterian}
    Given an $\multApprox$-approximate black-box for the utilitarian welfare (\rom{1}), ~~
    a $\frac{1}{\multApprox}$-approximate-separation-oracle for \ref{eq:dual-vsums-OP} (\rom{6}) can be constructed. 
\end{lemma}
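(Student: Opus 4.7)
The plan is to exploit the observation that the exponentially large family of constraints (D3.1) is, for any fixed candidate $(\mathbf{q}, \mathbf{v})$, essentially a utilitarian welfare maximization problem. I would first dispose of the polynomially many ``simple'' constraints by direct inspection: there are $\mathcal{O}(tn)=\mathcal{O}(n^2)$ constraints in (D3.2)--(D3.5), each involving $\mathcal{O}(n)$ terms, so any violation can be detected and returned in polynomial time.

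Assuming from now on that (D3.2)--(D3.5) all hold, (D3.5) in particular gives $v_{\ell,i}\geq 0$ for every $\ell, i$. I would then set $c_i:=\sum_{\ell=1}^{t} v_{\ell,i}\geq 0$ and invoke the given $\alpha$-approximate black-box (\rom{1}) on input $(c_1,\ldots,c_n)$ to obtain a state $s^{uo}\in S$. Let $W:=\sum_{i=1}^n c_i\cdot u_i(s^{uo})$; this value is computable in $\mathcal{O}(n)$ time through the value oracles.

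The decision is then immediate: if $W>1$, the (D3.1) constraint corresponding to $s^{uo}$ is strictly violated by $(\mathbf{q},\mathbf{v})$ and I would return \emph{Infeasible} together with that constraint. Otherwise $W\leq 1$, and the black-box guarantee yields
\[
\max_{s\in S}\sum_{i=1}^n c_i\cdot u_i(s)\;\leq\;\frac{W}{\alpha}\;\leq\;\frac{1}{\alpha}.
\]
Hence for every $j$ the LHS of the $j$-th constraint in (D3.1) is at most $\tfrac{1}{\alpha}$ times its RHS ($=1$), i.e.~is $\tfrac{1}{\alpha}$-approximately maintained; and (D3.2)--(D3.5) are maintained exactly, which trivially implies approximate maintenance. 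I would therefore return \emph{$\tfrac{1}{\alpha}$-Approximately Feasible}.

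The main conceptual step is the translation: separating (D3.1) at $(\mathbf{q},\mathbf{v})$ reduces to maximizing $\sum_i u_i(s)\cdot c_i$ over $s\in S$ with weights $c_i=\sum_\ell v_{\ell,i}$. Once this bridge is drawn, there is no substantial obstacle: the non-negativity of the weights required by the black-box follows from (D3.5), and the $\alpha$-approximation on the inner utilitarian objective translates directly into a $\tfrac{1}{\alpha}$-slack on the RHS of (D3.1), which is precisely the notion of $\tfrac{1}{\alpha}$-approximate feasibility needed.
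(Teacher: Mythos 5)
Your proposal is correct and takes essentially the same approach as the paper: direct-check the polynomially many constraints (D3.2)--(D3.5), then invoke the utilitarian black-box with weights $c_i=\sum_\ell v_{\ell,i}$ to handle the exponentially many (D3.1) constraints collectively, declaring a violated constraint if the returned state has value $>1$ and otherwise concluding $\tfrac{1}{\multApprox}$-approximate feasibility from the black-box guarantee. Your explicit note that non-negativity of the $c_i$ follows from (D3.5) (required for the black-box interface) is a small detail the paper's proof leaves implicit, but the argument is the same.
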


\begin{proof}
    The oracle can be designed as follows.
    Given an assignment of the variables of the program \ref{eq:dual-vsums-OP}, Constraints (\dualApp.2--5) can be verified directly, as their number is polynomial in $n$.
    If a violated constraint was found, the oracle returns it.
    Otherwise, the potentially exponential number of constraints in (\dualApp.1) are treated collectively.
    We operate the approximate black-box with  $c_i := \sum_{\ell=1}^t v_{\ell,i}$ for $i \in N$, and obtain a state $s_k\in S$.
    If $\sum_{i=1}^n c_i \cdot u_i(s_k) > 1$, we declare that constraint $k$ in (\dualApp.1) is violated.
    Otherwise, we assert that the assignment is approximately-feasible.
    Indeed, in this case, Constraints (\dualApp.2)-(\dualApp.5) are exactly-maintained (since we first check them directly), and all the constraints in (\dualApp.1) are approximately-maintained, as proven below.
    By the definition of the approximate black-box, we can conclude that $\sum_{i=1}^n c_i \cdot u_i(s_k) \geq \multApprox \sum_i c_i \cdot u_i(s)$ for any state $s \in S$.
    It follows that, for any state $s \in S$, the corresponding constraint in (\dualApp.1) is approximately-maintained:
    \begin{align*}
        \sum_{i=1}^n u_i(s) \sum_{\ell=1}^t v_{\ell,i} \leq \frac{1}{\multApprox} \sum_{i=1}^n u_i(s_k) \sum_{\ell=1}^t v_{\ell,i} \leq \frac{1}{\multApprox} \cdot 1
    \end{align*}
\end{proof}

\section{The Main Result}
\label{sec:proof-main-thm}
Putting it all together, we obtain: 
% \begin{theorem}\label{thm:maim}
% Suppose we are given an $\multApprox$-approximate black-box for the utilitarian welfare (\rom{1}).
% ~
% Then we can compute an $\multApprox$-leximin-approximation (Def. \ref{def:leximin-approx-weak}) in time  polynomial in $n$ and the running time of the black-box. 
% \end{theorem}
\begin{theorem}\label{thm:maim}
Given an $\multApprox$-approximate black-box for the utilitarian welfare (\rom{1}).
~
An $\multApprox$-leximin-approximation (Def. \ref{def:leximin-approx-weak}) can be computed in time  polynomial in $n$ and the running time of the black-box. 
\end{theorem}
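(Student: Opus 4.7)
The plan is to chain the reductions developed in Sections \ref{sec:main-loop} through \ref{sec:separation-oracle} to convert the given $\multApprox$-approximate utilitarian black-box (\rom{1}) into an algorithm that yields an $\multApprox$-leximin-approximation. Concretely, starting from (\rom{1}), I would first invoke Lemma \ref{lemma:sep-oracle-for-D} to obtain a $\frac{1}{\multApprox}$-approximate-separation-oracle for \ref{eq:dual-vsums-OP} (\rom{6}); then successively apply Corollary \ref{lemma:aso-to-approx-P3} to turn (\rom{6}) into a $\frac{1}{\multApprox}$-approximately-optimal-sparse-solver for \ref{eq:app-vsums-OP} (\rom{5}), Lemma \ref{lemma:red-approx-solver-for-P3-to-solver-for-P2} to transport (\rom{5}) to the corresponding solver for \ref{eq:min-sum-OP} (\rom{4}), Lemma \ref{lemma:red-feas-test-to-approx-P2} to transform (\rom{4}) into an $\multApprox$-weak objective-feasibility-oracle for \ref{eq:compact-OP} (\rom{3}), and Lemma \ref{lemma:red-shallow-solver-to-feas-test} (using both (\rom{3}) and the original black-box (\rom{1})) to produce an $\multApprox$-shallow-solver for \ref{eq:compact-OP} (\rom{2}). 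Finally, plugging (\rom{2}) into Algorithm \ref{alg:basic-ordered-Outcomes} and invoking Lemma \ref{lemma:alg1-shallow-solver} gives the desired $\multApprox$-leximin-approximation.

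One small caveat is that Lemma \ref{lemma:red-shallow-solver-to-feas-test} additionally requires a seed vector in $\feasC$ to anchor the binary search inside the shallow-solver. In iteration $t=1$ of Algorithm \ref{alg:basic-ordered-Outcomes}, Constraint (\progCompact.3) is vacuous, so the distribution placing probability $1$ on the degenerate state $s_d$ is trivially feasible. For $t \geq 2$, Observation \ref{obs:xt-feasible-for-t+1} guarantees that the iterate $\x^{t-1}$ from the previous iteration is itself feasible for the $t$-th instance of \ref{eq:compact-OP}, and can be passed in as the seed at no additional cost.

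For the runtime, each link in the chain incurs only polynomial overhead in $n$ and in the number of black-box invocations: the ellipsoid-style subroutine underlying Corollary \ref{lemma:aso-to-approx-P3} runs in polynomial time with polynomially many calls to the separation oracle (each of which invokes the black-box a single time); the binary search in Lemma \ref{lemma:red-shallow-solver-to-feas-test} contributes only a logarithmic factor in the desired accuracy; and Algorithm \ref{alg:basic-ordered-Outcomes} itself runs exactly $n$ iterations. Because every intermediate reduction preserves poly-sparseness, the returned $\retSol$ is a poly-sparse distribution that can be written down explicitly in polynomial time.

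The main nontrivial aspect is not the chaining itself, which is essentially bookkeeping, but the careful accounting of the approximation factor as one moves between primal and dual formulations and between maximization and minimization programs. The alternation between $\multApprox$ and $\frac{1}{\multApprox}$ throughout the intermediate objects is precisely calibrated so that the single multiplicative factor $\multApprox$ passes through unchanged from the black-box all the way to the final leximin-approximation; verifying this calibration is the sole piece of the proof that requires any care beyond direct citation of the earlier lemmas.
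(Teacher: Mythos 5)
Your proof chains the exact same sequence of lemmas in the exact same order as the paper's proof of Theorem \ref{thm:maim}, and you correctly identify the subtle point about the seed vector for the binary search — using the degenerate-state distribution for $t=1$ and $\x^{t-1}$ via Observation \ref{obs:xt-feasible-for-t+1} for $t\geq 2$. This matches the paper's argument essentially verbatim, with the runtime remarks as a modest addition.
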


\begin{proof}

By \Cref{lemma:red-approx-P3-to-utiliterian}, given an $\multApprox$-approximate black-box for utilitarian welfare, 
a $\frac{1}{\multApprox}$-approximate-separation-oracle for \ref{eq:dual-vsums-OP} can be constructed.

By \Cref{lemma:aso-to-approx-P3}, 
using this oracle, we can construct a 
$\frac{1}{\multApprox}$-approximately-optimal-sparse-solver for \ref{eq:app-vsums-OP}.

By \Cref{lemma:red-approx-solver-for-P3-to-solver-for-P2},
we can use this solver to design a
$\frac{1}{\multApprox}$-approximately-optimal-sparse-solver for \ref{eq:min-sum-OP}.

By \Cref{lemma:red-feas-test-to-approx-P2},
this solver allows us to construct an $\multApprox$-weak objective-feasibility-oracle for \ref{eq:compact-OP}.

By \Cref{lemma:red-shallow-solver-to-feas-test}, a binary search with (1) this procedure, (2) the given $\multApprox$-approximate black-box for utilitarian welfare, and (3) an arbitrary solution for \ref{eq:compact-OP} (as follows);  allows us to design an $\multApprox$-shallow-solver for \ref{eq:compact-OP}.
As an arbitrary solution, for $t=1$, we take $\x^0$ defined as the \dist with $x^0_d=1$
and $x^0_j=0$ for $j\neq d$.
However, for $t \geq 2$, we rely on the fact that the shallow solver is used within the iterative  \Cref{alg:basic-ordered-Outcomes}, and take the solution returned in the previous iteration, $\x^{t-1}$, which, by \Cref{obs:xt-feasible-for-t+1}, is feasible for the program at iteration $t$ as well.

By \Cref{lemma:alg1-shallow-solver}, when this shallow solver is used inside 
\Cref{alg:basic-ordered-Outcomes}, the output is an 
$\multApprox$-leximin approximation.
\end{proof}
Together with \Cref{claim:1-approx-is-opt}, this implies:
\begin{corollary}\label{corollary:reduction-exact-case}
    Given an exact black-box for the utilitarian welfare, a leximin-optimal \dist can be obtained in polynomial time. 
\end{corollary}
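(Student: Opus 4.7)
The plan is to derive this corollary as an immediate specialization of \Cref{thm:maim} combined with \Cref{claim:1-approx-is-opt}. The argument requires essentially no new technical content; I would simply instantiate the main theorem at the boundary case $\multApprox = 1$ and then invoke the definitional equivalence between $1$-leximin-approximation and leximin-optimality.

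Concretely, I would proceed as follows. First, I would observe that an exact black-box for the utilitarian welfare (one which returns $s^{uo} \in \argmax_{s \in S}\sum_{i=1}^n c_i \cdot u_i(s)$ for any non-negative constants $c_1,\ldots,c_n$) is precisely a $1$-approximate black-box for utilitarian welfare in the sense of (\rom{1}), because the guarantee $\sum_{i} c_i u_i(s^{uo}) \geq \sum_{i} c_i u_i(s)$ is exactly the $\multApprox = 1$ instance of $\sum_{i} c_i u_i(s^{uo}) \geq \multApprox \sum_{i} c_i u_i(s)$. Second, I would apply \Cref{thm:maim} with $\multApprox = 1$ to conclude that a $1$-leximin-approximation can be computed in time polynomial in $n$ and the black-box's running time. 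Finally, I would invoke \Cref{claim:1-approx-is-opt}, which states that an \dist{} is a $1$-leximin-approximation if and only if it is leximin-optimal, to conclude that the produced \dist{} is leximin-optimal.

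I do not anticipate any obstacle here, since the corollary is a direct substitution into the main theorem. The only thing worth double-checking is that every intermediate reduction in the chain behaves gracefully at $\multApprox = 1$: in particular, that $\shallowX$ collapses to $X$ in \Cref{lemma:alg1-shallow-solver} (so the shallow solver becomes an exact solver), that the factor $\frac{1}{\multApprox} = 1$ in the approximate sparse solvers for \ref{eq:min-sum-OP} and \ref{eq:app-vsums-OP} becomes exact optimality, and that the $\frac{1}{\multApprox}$-approximate separation oracle for \ref{eq:dual-vsums-OP} becomes an exact separation oracle, which is the standard regime in which the ellipsoid method yields exact (not merely approximate) primal solutions via the reduction of \citet{karmarkar_efficient_1982}. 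All of these specializations are immediate from the definitions, so the corollary follows with no additional work beyond a single sentence citing \Cref{thm:maim} and \Cref{claim:1-approx-is-opt}.
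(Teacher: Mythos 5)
Your proposal matches the paper's own argument exactly: the corollary is stated as an immediate consequence of \Cref{thm:maim} together with \Cref{claim:1-approx-is-opt}, instantiated at $\multApprox = 1$, which is precisely what you do. The additional sanity checks you mention about the intermediate reductions collapsing to their exact counterparts at $\multApprox = 1$ are sound but not needed beyond the two cited results.
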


% ==============Randomized Solvers
% \subsubsection*{Randomized Solvers.}\label{sec:randomized-solvers} 

\subsection*{Randomized Solvers.}\label{sec:randomized-solvers}
% \eden{maybe to put it as a subsection and without *?}
% We can extend \Cref{thm:maim}
% to work with 
% a \emph{randomized} black-box for the utilitarian welfare, which with probability $p$ returns a state that $\multApprox$-approximates the utilitarian welfare, otherwise returns an arbitrary state.
A \emph{randomized} black-box
% for the utilitarian welfare 
returns a state that $\multApprox$-approximates the utilitarian welfare with probability $p>0$, otherwise returns an arbitrary state.
\Cref{thm:maim} can be extended as follows:
\begin{theorem}\label{lemma:randomized-solvers}
    Given a randomized $\multApprox$-approximate black-box for the utilitarian welfare with success probability $p \in (0,1)$. %\``
    ~~    An $\multApprox$-leximin-approximation
    can be obtained with the same success probability $p$ in time polynomial in $n$ and the running time of the black-box. 
\end{theorem}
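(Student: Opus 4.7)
The plan is to amplify the success probability of each individual call to the randomized black-box, so that the failures do not accumulate across the many calls that the reduction of \Cref{thm:maim} makes.

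First, I would observe that running the reduction chain end-to-end invokes the utilitarian black-box a total of $T = \mathrm{poly}(n)$ times: Algorithm~\ref{alg:basic-ordered-Outcomes} has $n$ iterations, each performs a binary search of logarithmic depth via \Cref{lemma:red-shallow-solver-to-feas-test}, every feasibility query inside the binary search invokes the ellipsoid-style procedure for \ref{eq:app-vsums-OP} with $\mathrm{poly}(n)$ separation-oracle queries, and every separation-oracle query invokes the black-box exactly once (\Cref{lemma:red-approx-P3-to-utiliterian}). If each of these $T$ calls returns a genuine $\multApprox$-approximate state, the randomized reduction is pointwise indistinguishable from its deterministic counterpart and, by \Cref{thm:maim}, outputs an $\multApprox$-leximin-approximation.

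Next, I would replace each black-box call with an \emph{amplified} call: given weights $(c_1,\ldots,c_n)$, invoke the randomized black-box $k$ times independently to obtain candidates $s^{(1)},\ldots,s^{(k)}$ and return the one $s^*$ maximizing $\sum_{i=1}^n c_i\, u_i(s^{(j)})$. The crucial observation is that if at least one of the $k$ calls succeeds --- say call $j_0$ returns $s^{(j_0)}$ satisfying $\sum_i c_i u_i(s^{(j_0)}) \geq \multApprox \sum_i c_i u_i(s)$ for every $s\in S$ --- then by construction $\sum_i c_i u_i(s^*) \geq \sum_i c_i u_i(s^{(j_0)}) \geq \multApprox \sum_i c_i u_i(s)$, so $s^*$ is itself a valid $\multApprox$-approximation. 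The comparison is computable in polynomial time since each $u_i$ is given as a value oracle. Hence the amplified black-box fails only when all $k$ independent calls fail, which occurs with probability at most $(1-p)^k$.

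Finally, set $k = \lceil \log(T/(1-p))/\log(1/(1-p)) \rceil$, which is polynomial in $n$ for any fixed $p\in(0,1)$. A union bound over the $T$ amplified calls bounds the overall failure probability by $T(1-p)^k \leq 1-p$. Conditioning on the complementary event and invoking \Cref{thm:maim} gives an $\multApprox$-leximin-approximation, so the procedure succeeds with probability at least $p$. The chief subtlety is the \emph{faithfulness} of the ``best-of-$k$'' amplification --- one must argue that picking the candidate with the highest weighted sum dominates any single successful candidate, which crucially relies on the weighted utilitarian objective being evaluable directly from the value oracles. The other delicate point is the careful tally of the call count $T$ across the six-layer reduction, which is needed to certify that the amplification parameter $k$ remains polynomial.
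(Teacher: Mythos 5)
Your proof is correct and matches the paper's approach: both bound the total number of black-box calls by a polynomial, amplify each call by taking the best of several independent repeats (observing that the candidate maximizing $\sum_i c_i u_i(\cdot)$ is a valid $\multApprox$-approximation whenever any single repeat is), and choose a polylogarithmic number of repeats so the union/product bound yields an overall success probability of at least $p$. The only piece the paper proves that you omit is an explicit argument (via the ``half-randomized'' separation oracle, \Cref{lemma:random-ellipsoid}) that the algorithm always returns a valid $\worldStates$-distribution even when some black-box calls fail; that feasibility-always guarantee is a bonus beyond what the theorem statement requires, so its absence is not a gap in your argument.
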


\begin{proofsketch}
We first prove that the use of the randomized black-box does not effect feasibility --- that is, the output returned by Algorithm \ref{alg:basic-ordered-Outcomes} is always an \dist. Then, we prove the required guarantees regarding its optimally.
    
    Recall that the black-box is used only in two places --  to obtain an upper bound for the binary search over the potential
    objective-values for \ref{eq:compact-OP} (\Cref{sec:designing-shallow-solver}), and as part of the separation oracle for \ref{eq:dual-vsums-OP} (\Cref{sec:separation-oracle}).
    
    Inside the binary search, if the obtained value does not approximate the optimal utilitarian welfare, it might cause us to overlook larger possible objective values. 
    While this could affect optimality, it will not impact feasibility.
    
    As for the separation oracle, this change might cause us to determine that Constraint (\dualApp.1) is approximately-feasible even though it is not. However, in \appendixName{\Cref{sec:random-sep-oracle}}{Appendix F.1}, we prove that even with this change, the solution produced by the ellipsoid method remains feasible for the primal (although its optimality guarantees might no longer hold).  Consequently, the solution ultimately returned by \Cref{alg:basic-ordered-Outcomes} is also feasible (i.e., an \dist in $X$).

    We can now move on to the optimally guarantees. 
    Assume that the black-box for the utilitarian welfare is randomized and has a success probability $p<1$. 
    It is clear that iteratively calling this solver within the algorithm reduces the overall success probability. 
    
    To address this, we first boost the success probability of each iteration
    % \eden{iteration? use?} 
    by calling the original black-box multiple times (on the same instance) and taking the best result among these calls. This boosts the success probability of the iteration (as failure corresponds to the probability that none of the calls to the original black-box was successful). 
    We prove that with an appropriate choice for the number of such repetitive calls, the total success probability of the entire algorithm can be as high as that of the original black-box ($p$), while maintaining polynomial complexity.
    % The details appear in the proof of Theorem 9.3 in the Appendix.
\end{proofsketch}

%==============================================
%============================= Applications
\section{Applications}\label{sec:apps}
This section provides three applications of our general reduction framework. 
Each application employs a different black-box for the associated utilitarian welfare.

%======== Stochastic Indivisible Allocation
\subsection{Stochastic Indivisible Allocations}\label{sec:app-stoc-alloc}
In the  problem of \textit{fair stochastic allocations of indivisible goods}, described by \citet{kawase_max-min_2020}, 
% In this problem, we have a
there is a set of $m$ indivisible goods, $G$, that needs to be distributed fairly among the $n$ agents. 

\paragraph{The Set $S$.} 
The states are the possible allocations of the goods to the agents. Each state can be described by a function mapping each good to the agent who gets it.
Accordingly, 
$S = \{ s \mid s \colon G \to N\}$,
and $|S|=n^m$.
% \footnote{As the number of goods $m$ might be higher than the number of agents $n$, in the general case, the size is exponential in $n$\eden{?...}}.
% Note that as the term "deterministic" is used also when discussing algorithms, we will use the term \emph{simple allocation} from now on.
% \eden{is it clear enough? should we define the specific set $\mathcal{A}$ in these settings formally?}

% \paragraph{Additional Assumptions.} 
We assume that agents care only about their own share, so we can abuse notation and let each $u_i$ take a bundle $B$ of goods.
The utilities are assumed to be normalized such that $u_i(\emptyset) = 0$, and monotone -- $u_i(B_1) \leq u_i(B_2)$ if $B_1 \subseteq B_2$.
Under these assumptions, different black-boxes for the utilitarian welfare exist.
% \eden{should say somewhere that for these classes of utilities, multiplication by a constant keeps the utility within the same class, so if there exists a black-box for this class, it can also be used for the utilities $u'_i$}

\paragraph{The Utilitarian Welfare.} For any $n$ constants $c_1, \ldots, c_n$, the goal is to maximize the following:
\begin{align*}
    \max_{s \in S} \sum_{i=1}^n c_i \cdot u_i(s)
\end{align*}
% We need a black-box that, for any $n$ constants $(c_i)_{i \in N}$, approximates $\max_{s \in S} \sum_{i \in N} c_i \cdot u_i(s)$. 
Many algorithms for approximating utilitarian welfare are already designed for classes of utilities, which are closed under multiplication by a constant.
This means that, given such an algorithm for the original utilities $(u_i)_{i \in N}$, we can use it as-is for the utilities $(c_i \cdot u_i)_{i \in N}$.

% \eden{I think maybe not to present the result for gross-substitutes}
%  or even gross-substitutes\eden{todo}
\paragraph{Results.} 
%
% $u_i(B_1 \cup B_2) = u_i(B_1) + u_i(B_2)
When the utilities are additive,
it is known that a leximin-optimal solution can be found in polynomial time by introduce variables describing the allocation probability of each agent–item pair, and then solve it using an iterative linear-programming algorithm (e.g., \cite{willson, Ogryczak_2006}). However, this result can also be derived independently from our approach, since utilitarian welfare maximization is computationally easy—it can be solved in polynomial time by greedily assigning each item to the agent who values it most.
Together with Corollary \ref{corollary:reduction-exact-case}:
\begin{corollary}
    For additive utilities, a leximin-optimal \dist can be obtained in polynomial time. 
\end{corollary}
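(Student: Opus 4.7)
The plan is to invoke \Cref{corollary:reduction-exact-case} (the exact-case reduction) and show that, for additive utilities, an \emph{exact} utilitarian black-box runs in polynomial time. By \Cref{corollary:reduction-exact-case}, an exact black-box (satisfying specification (\rom{1}) with $\multApprox = 1$) for the utilitarian welfare implies a polynomial-time procedure for computing a leximin-optimal \dist. Hence the entire task reduces to implementing such a black-box for the stochastic indivisible allocation setting with additive utilities.

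Concretely, given non-negative constants $c_1,\ldots,c_n$, I would define the scaled utilities $\tilde{u}_i := c_i \cdot u_i$. The first step is to observe that additivity is closed under scaling by a non-negative constant: if $u_i(B)=\sum_{g\in B}u_i(\{g\})$, then $\tilde{u}_i(B)=\sum_{g\in B} c_i \cdot u_i(\{g\})$ is again additive. Consequently, for any allocation $s \in S$, which assigns each good $g$ to the agent $s(g)$,
\begin{align*}
\sum_{i=1}^n c_i \cdot u_i(s) \;=\; \sum_{i=1}^n \sum_{g:\, s(g)=i} c_i \cdot u_i(\{g\}) \;=\; \sum_{g \in G} c_{s(g)} \cdot u_{s(g)}(\{g\}).
\end{align*}
The contribution of each good is independent of the assignments of the other goods, so the sum is maximized by independently assigning each $g \in G$ to an agent $i^*(g) \in \argmax_{i \in N} c_i \cdot u_i(\{g\})$.

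The second step is to observe that this greedy procedure runs in time $\mathcal{O}(n\cdot m)$: for every good we query the $n$ singleton values once (permitted by the value-oracle model) and take an argmax. This exactly implements specification~(\rom{1}) with $\multApprox = 1$, since the constructed allocation $s^{uo}$ maximizes $\sum_i c_i \cdot u_i(\cdot)$ over all of $S$. The third and final step is to plug this black-box into \Cref{corollary:reduction-exact-case}, which immediately yields a polynomial-time algorithm producing a leximin-optimal \dist.

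There is no real obstacle here: the main difficulty (building the reduction that turns a utilitarian black-box into a leximin-optimal lottery solver, while preserving exactness/approximation and sparsity) was already handled by \Cref{thm:maim} and \Cref{corollary:reduction-exact-case}. The only thing one must verify is that the class of additive utilities is genuinely closed under non-negative scaling (so that the rescaling required by specification~(\rom{1}) does not take us outside the class for which the greedy algorithm is defined), which is immediate from the definition of additivity.
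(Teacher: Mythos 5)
Your proposal is correct and follows exactly the paper's route: the paper also observes that additivity is closed under non-negative scaling, that the greedy item-by-item assignment (to $\argmax_i c_i u_i(\{g\})$) is an exact polynomial-time utilitarian black-box, and then invokes \Cref{corollary:reduction-exact-case}. You simply spell out the details (the decomposition $\sum_i c_i u_i(s) = \sum_{g} c_{s(g)} u_{s(g)}(\{g\})$ and the $\mathcal{O}(nm)$ runtime) that the paper leaves implicit.
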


When the utilities are submodular,
%(i.e., $u_i(B_1) + u_i(B_2) \geq u_i(B_1 \cup B_2) + u_i(B_1 \cap B_2)$ for any bundles $B_1,B_2$), 
approximating leximin to a factor better than $(1-\frac{1}{e})$ is NP-hard~\cite{kawase_max-min_2020}.\footnote{\citet{kawase_max-min_2020} prove that approximating the egalitarian welfare to a factor better than $(1-\frac{1}{e})$ is NP-hard. However, since an $\multApprox$-leximin-approximation is first-and-foremost an $\multApprox$-approximation to the egalitarian welfare, the same hardness result applies to leximin as well.}
However, as there is a deterministic $\frac{1}{2}$-approximation algorithm for the utilitarian welfare~\cite{Fisher1978}, by \Cref{thm:maim}:
\begin{corollary}
    For submodular utilities, a $\frac{1}{2}$-leximin-approximation can be found in polynomial time.
\end{corollary}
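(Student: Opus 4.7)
The plan is to reduce the claim directly to \Cref{thm:maim} by exhibiting a polynomial-time $\tfrac{1}{2}$-approximate black-box for utilitarian welfare (oracle (\rom{1})) on instances with monotone submodular utilities. Once such a black-box is in hand, \Cref{thm:maim} immediately yields a $\tfrac{1}{2}$-leximin-approximation in time polynomial in $n$ and the running time of the black-box, which is what we want.

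To build the black-box, I first need to verify that the rescaling built into oracle (\rom{1}) preserves the problem class. Given non-negative constants $c_1,\ldots,c_n$, the scaled functions $c_i\cdot u_i$ are again monotone submodular, since the class of non-negative monotone submodular functions is closed under multiplication by a non-negative scalar. Hence the instance $\max_{s\in S}\sum_{i=1}^n c_i u_i(s)$ is still a welfare-maximization instance with submodular valuations over $S = \{s \colon G \to N\}$, accessed via the same value oracles.

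Next, I would invoke the classical result of \citet{Fisher1978}, already cited in the excerpt, which gives a deterministic polynomial-time $\tfrac{1}{2}$-approximation algorithm for maximizing utilitarian welfare with monotone submodular valuations (the greedy item-by-item assignment). Applying this algorithm to $(c_i\cdot u_i)_{i\in N}$ produces a state $s^{uo}\in S$ with
\[
\sum_{i=1}^n c_i\cdot u_i(s^{uo}) \;\geq\; \tfrac{1}{2}\,\sum_{i=1}^n c_i\cdot u_i(s) \qquad \text{for all } s\in S,
\]
which is exactly the guarantee required by (\rom{1}) for $\alpha=\tfrac{1}{2}$. Finally, plugging this black-box into \Cref{thm:maim} yields a $\tfrac{1}{2}$-leximin-approximation in polynomial time.

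There is no real obstacle here beyond the closure-under-scaling check: the entire machinery of the paper (Sections~\ref{sec:main-loop}--\ref{sec:proof-main-thm}) already packages the reduction from leximin-approximation to a rescalable utilitarian black-box, and the combinatorial work is delegated to the known greedy $\tfrac{1}{2}$-approximation. The only point one should be careful about is that the approximation guarantee of \citet{Fisher1978} is stated with respect to \emph{deterministic} allocations (single states), but this is precisely what oracle (\rom{1}) requires, and \Cref{lemma:stoc-unnecessary-for-utilitarian} has already assured us that stochasticity is unnecessary for the utilitarian objective.
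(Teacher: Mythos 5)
Your argument is the same as the paper's: exhibit the deterministic $\tfrac{1}{2}$-approximation of \citet{Fisher1978} as an instance of the black-box (\rom{1}), noting that monotone submodular utilities are closed under scaling by non-negative constants so the rescaled instance is still a submodular welfare problem, and then apply \Cref{thm:maim}. The paper states this more tersely (the closure-under-scaling point is handled once, in the preceding discussion of Section~\ref{sec:app-stoc-alloc}), but the route and all its ingredients match.
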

There is also a randomized $(1-\frac{1}{e})$-approximation algorithm for the case where utilities are submodular, with high success probability~\cite{vondrak_optimal_2008}. Thus, by \Cref{lemma:randomized-solvers}:
\begin{corollary}
    For submodular utilities, a $(1-\frac{1}{e})$-leximin-approximation can be obtained with high probability in polynomial time.
\end{corollary}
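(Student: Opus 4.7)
My plan is to derive this corollary as an essentially immediate application of \Cref{lemma:randomized-solvers}, instantiating the randomized black-box for utilitarian welfare with Vondr\'ak's continuous-greedy algorithm (composed with pipage rounding). First, I would verify that the black-box applies to the \emph{scaled} welfare problem produced by our reduction: for any non-negative constants $c_1,\dots,c_n$, the scaled utilities $c_i \cdot u_i$ remain monotone submodular, because monotone submodularity is preserved under multiplication by non-negative scalars. Hence $\argmax_{s\in S}\sum_{i=1}^n c_i u_i(s)$ is still an instance of the submodular welfare maximization problem on the ground set $G$ of goods, and Vondr\'ak's algorithm handles it unchanged via its value oracle for each $c_i u_i$.

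Next, I would recall that Vondr\'ak's algorithm outputs an allocation whose utilitarian welfare is, with high probability, at least $(1-1/e)$ times the optimum, in time polynomial in $n$ and $m$. If necessary, one can drive the success probability to any target $p\in(0,1)$ by independent repetition and selection of the best outcome, at only polynomial overhead. This shows that the hypothesis of the randomized variant of the utilitarian black-box (\rom{1}) is met with $\multApprox = 1-1/e$ and success probability $p$ as high as desired.

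Finally, I would invoke \Cref{lemma:randomized-solvers} with this black-box: it returns an $(1-1/e)$-leximin-approximation over the lotteries $X$, with the same success probability $p$, in time polynomial in $n$ and the running time of the black-box, the latter being polynomial in $n$ and $m$. This is exactly the statement of the corollary.

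I do not anticipate any serious obstacle. The only conceptual check is the closure of monotone submodularity under non-negative scaling, which is immediate from the definition and is precisely what guarantees that the scaled welfare instances handed to the reduction remain inside the class on which Vondr\'ak's guarantee holds. Everything else is already packaged inside \Cref{lemma:randomized-solvers}.
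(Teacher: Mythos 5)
Your proposal is correct and follows the paper's route exactly: the paper also derives this corollary by citing Vondr\'ak's randomized $(1-1/e)$-approximation for submodular welfare and plugging it into \Cref{lemma:randomized-solvers}, having already noted (in the preceding discussion of the utilitarian black-box) that submodularity is closed under non-negative constant scaling. Your explicit check of that closure property is a helpful unpacking of a step the paper treats as implicit.
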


% We extend their results from egalitarian welfare to leximin as follows.
% We prove that, under the setting described bellow, a $\frac{1}{2}$-approximation can be obtained deterministically, whereas a $(1-\frac{1}{e}) \approx 0.632$-approximation can be obtained with high probability.
%As a reference point, it is worth noting that the problem of maximizing the egalitarian welfare for submodular utilities has been shown to be NP-hard to approximate to a factor better than that~\cite{kawase_max-min_2020}.
% \eden{should we say, as \textcite{kawase_max-min_2020}, that given an approximation algorithm with a multiplicative error of $\multError$ for welfare maximization, we obtain a approximate leximin with a multiplicative error of $\frac{\multError}{1-\multError +\multError^2}$?}.
%However, an $\multApprox$-leximin-approximation is first and foremost an $\multApprox$-approximation to the egalitarian welfare, the same hardness result applies to leximin as well.
% \eden{should prove it somewhere...}

% We also note that, for this problem, a better upper-bound on the highest utility (for the binary search) can be provided to make it more efficient. This can be done by checking the maximum utility among all agents when each agent receives all the items, that is: $U = \max_{i \in N} u_i(\{1,\ldots,m\})$.

%========= Giveaway Lotteries
\subsection{Giveaway Lotteries}
In \textit{giveaway lotteries}, described by \citet{arbiv_fair_2022}, there is an event with a limited capacity, and groups who wish to attend it - but only-if all the members of the group can attend together.
Here, each group of people is an agent.
We denote the size of group $i$ by $w_i \in \mathbb{N}_{\geq 0}$ and the event capacity by $W \in \mathbb{N}_{\geq 0}$. It is assumed that $w_i \leq W$ for $i \in N$ and $\sum_{i\in N} w_i > W$.\footnote{\citet{arbiv_fair_2022} provide an algorithm to compute a leximin-optimal solution. However, their algorithm is polynomial only for a unary representation of the capacity.
% ,
% whereas our FPTAS is polynomial also for a binary representation.
}

\paragraph{The Set $S$.}
Each state describes a set of the groups that can attend the event together: $S = \{s \subseteq N \mid \sum_{i \in s} w_i \leq W\}$. Here, $|S|$ is only bounded by $2^n$.
%
% , whose total size is at most the capacity, that are admitted to the event.
%
% The utility of a group in a given state is 1 if it is admitted and 0 otherwise.
%
% We denote the size of group $i$ by $w_i \in \mathbb{N}_{\geq 0}$ and the event capacity by $W \in \mathbb{N}_{\geq 0}$ (and assume that $w_i \leq W$ for $\forXinY{i}{n}$).
% Accordingly, a world-state is described by a subset of groups that can attend the event together. That is, the set $S$ is defined as follows: $S = \{s \subseteq N \mid \sum_{i \in s} w_i \leq W\}$.
% 

The utility of group $i \in N$ from a state $s$ is $1$ if they being chosen according to $s$ (i.e., if $i \in s$) and $0$ otherwise.
% This means, that the expected value of group $j$ from a stochastic solution, simply describes the probability that group $i$ will be chosen.

% Here, the size of the set $S$ 

\paragraph{The Utilitarian Welfare.}
% We need a black-box that finds, for any $n$ constants $c_1, \ldots, c_n$, a state that approximates the following:
For any $n$ constants $c_1, \ldots, c_n$, the goal is to maximize the following:
\begin{align*}
    \max_{s \in S} \sum_{i=1}^n c_i \cdot u_i(s) = \max_{s \in S} \sum_{i\in s}  c_i
\end{align*}
This is just a knapsack problem with $n$ item (one for each group), where the weights are the group sizes $w_i$ (as we only look at the legal packing $s\in S$), and the values are the constants $c_i$.

\paragraph{Result.} 
% Here, we provide a FPTAS (fully polynomial-time approximation scheme) for leximin --- that is, for any desired approximation factor $\multApprox \in (0,1)$, an $\multApprox$-leximin-approximation can be obtained in time polynomial in $n$ and $1/\multApprox$
It is well known that there is an FPTAS for the Knapsack problem.  %\eden{add reference}
By Theorem \ref{thm:maim}:
\begin{corollary}
    There exists an FPTAS for leximin for the problem of giveaway lotteries.
\end{corollary}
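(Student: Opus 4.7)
The plan is to instantiate Theorem \ref{thm:maim} with the classical FPTAS for the knapsack problem as the $\alpha$-approximate black-box for utilitarian welfare. The earlier discussion has already reduced the utilitarian-welfare subroutine in the giveaway-lotteries setting to a 0/1 knapsack instance (items $= $ groups, weights $= w_i$, values $= c_i$, capacity $= W$), and the feasibility set of that knapsack instance is exactly the state space $S$. So the black-box we need is literally a knapsack approximator.

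Concretely, I would proceed as follows. Fix $\epsilon \in (0,1)$ and set $\alpha := 1-\epsilon$. Invoke the standard Ibarra--Kim / scaling FPTAS for knapsack on the instance above; this runs in time polynomial in $n$ and $1/\epsilon$ and returns a state $s^{uo}\in S$ with $\sum_{i} c_i u_i(s^{uo}) \ge (1-\epsilon)\max_{s\in S}\sum_{i} c_i u_i(s)$. This is, by definition, an $\alpha$-approximate black-box for utilitarian welfare in the sense of item (\rom{1}). Note that the black-box must be robust to the rescaling $u_i \mapsto c_i \cdot u_i$; in our case the utilities $u_i$ are $0/1$, so the rescaling just produces weights/values for knapsack, which the FPTAS handles natively, so robustness is automatic.

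Now apply Theorem \ref{thm:maim}: the reduction yields an $\alpha$-leximin-approximation, that is, a $(1-\epsilon)$-leximin-approximation, in time polynomial in $n$ and the running time of the black-box. Since the FPTAS runs in time polynomial in $n$ and $1/\epsilon$, the overall running time is polynomial in $n$ and $1/\epsilon$, which is precisely the definition of an FPTAS for leximin.

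There is essentially no obstacle here beyond checking these two sanity points: (i) that the utilitarian subproblem really is knapsack (already argued in the preceding paragraph of the paper), and (ii) that the knapsack FPTAS fits the interface of (\rom{1}), including robustness to arbitrary nonnegative rescaling of the $u_i$, which is immediate because the rescaled utilities $c_i u_i(s)=c_i\cdot\mathbf{1}[i\in s]$ just become knapsack values. Once these are in place, the corollary follows by a direct application of Theorem \ref{thm:maim}.
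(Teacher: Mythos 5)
Your proposal is correct and follows essentially the same route as the paper: instantiate \Cref{thm:maim} with the classical knapsack FPTAS (applied to the knapsack instance with weights $w_i$, values $c_i$, capacity $W$, feasible packings $S$) as the $\alpha$-approximate black-box, choosing $\alpha = 1-\epsilon$. Your extra sanity checks — that the feasibility set of the knapsack matches $S$, and that rescaling by $c_i$ is native to knapsack so robustness is automatic — are details the paper leaves implicit but is plainly relying on.
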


% In addition, there is a pseudopolynomial-time algorithm for knapsack, which implies:
% \begin{corollary}
%     There exists an pseudopolynomial-time algorithm for leximin for the problem of giveaway lotteries.
% \end{corollary}

% Here, it is clear that the value $1$ can be used an upper bound on the highest-utility to make the binary search more efficient.

%======================= Participatory Budgeting Lottery
\subsection{Participatory Budgeting Lotteries}
The problem of \textit{fair lotteries for participatory budgeting}, was described by \citet{aziz2024participatoryBudgeting}. 
Here, the $n$ agents are voters, who share a common budget $B \in \mathbb{R}_{> 0}$ and must decide which projects from a set $P$ to fund.  
Each voter, $i \in N$, has an \emph{additive} utility over the set of projects, $u_i$; while the projects have costs described by  $cost \colon P \to \mathbb{R}_{>0}$.
\footnote{\citet{aziz2024participatoryBudgeting} study fairness properties based on \emph{fair share} and \emph{justified representation}.}

% EREL: If there is room, we should add more details about what they do, and explain that they do not do leximin.

% We show that this problem also admits a FPTAS for leximin when agents have additive utilities.  

\paragraph{The set $S$.} 
% We assume that (1) the cost of a set of projects is the sum of their costs, (2) the cost of each project is smaller than the budget, and (3) the cost of the set of all projects is higher than the budget.
The states are the subsets of projects that fit in the given budget: 
$S = \{s \subseteq P \mid cost(s) \leq B\}$.
The size of $S$ in this problem is only bounded by $2^{|P|}$.
% \eden{\cite{aziz2024participatoryBudgeting} considered binary utilities (i.e., approval)} 

\paragraph{The Utilitarian Welfare.} For any $n$ constants $c_1, \ldots, c_n$, the goal is to maximize the following:
\begin{align*}
    &\max_{s \in S} \sum_{i=1}^n c_i \cdot u_i(s) = \max_{s \in S} \sum_{i=1}^n \sum_{p \in s} c_i \cdot u_i(p)  && \text{(Additivity)}\\
    &=\max_{s \in S} \sum_{p \in s} \left(\sum_{i=1}^n c_i \cdot u_i(p) \right)
\end{align*}
% As in the previous section, we show that t
This can also be seen as a knapsack problem where: the items are the projects, the weights are the costs, and the value of item $p \in P$ is $\sum_{i=1}^n c_i \cdot u_i(p)$.

\paragraph{Result.} 
As before, the existence of a FPTAS for the Knapsack problem together with Theorem \ref{thm:maim}, give:
\begin{corollary}
    There is an FPTAS for leximin for participatory budgeting lotteries.
\end{corollary}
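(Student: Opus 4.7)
The plan is to invoke Theorem~\ref{thm:maim} with the classical Knapsack FPTAS serving as the utilitarian black-box. First I would verify that the scaled utilitarian subproblem is genuinely a $0/1$ Knapsack instance. Given non-negative constants $c_1,\ldots,c_n$, additivity of each $u_i$ over the project set gives
\[
    \max_{s \in S}\, \sum_{i=1}^n c_i\, u_i(s) \;=\; \max_{\substack{s \subseteq P \\ cost(s)\leq B}}\, \sum_{p \in s} V(p),
    \qquad V(p) := \sum_{i=1}^n c_i\, u_i(p),
\]
which is a $0/1$ Knapsack whose items are the projects, with weights $cost(p)$, values $V(p)$, and capacity $B$. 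The data $V(p)$ is computable in $O(n)$ value-oracle queries per project, so the Knapsack instance is constructed in polynomial time from the inputs.

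Second, I would instantiate the Ibarra--Kim / Lawler FPTAS for Knapsack: for any $\epsilon \in (0,1)$, it returns a feasible subset $s$ with total value at least $(1-\epsilon)\max_{s' \in S} \sum_{p \in s'} V(p)$, in time polynomial in $|P|$ and $1/\epsilon$. Composing this routine with the reduction above yields a $(1-\epsilon)$-approximate black-box for the utilitarian welfare matching the specification of (\rom{1}); the black-box is robust to arbitrary non-negative rational rescalings of the utilities, because Knapsack FPTASes are oblivious to the magnitudes of the item values.

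Finally I would feed this black-box into Theorem~\ref{thm:maim} with $\multApprox = 1-\epsilon$. The theorem outputs a $(1-\epsilon)$-leximin approximation in time polynomial in $n$ and the running time of the black-box, hence polynomial in $n$, $|P|$, and $1/\epsilon$. Since $\epsilon > 0$ is arbitrary, this is exactly an FPTAS for leximin. There is no real obstacle; the only mild checks are that (a) the utilitarian subproblem is Knapsack for \emph{every} rescaling $c_1,\ldots,c_n$ produced during the reduction, and (b) the Knapsack FPTAS operates within polynomial time under the bit complexity of these rescaled values, both of which hold because the reduction only ever introduces polynomially many rationals with polynomially bounded encoding.
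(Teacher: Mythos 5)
Your proposal is correct and follows exactly the paper's route: rewrite the weighted utilitarian objective via additivity as a $0/1$ Knapsack with item values $\sum_i c_i u_i(p)$, invoke the Knapsack FPTAS as the $(1-\epsilon)$-approximate utilitarian black-box, and apply Theorem~\ref{thm:maim}. The extra remarks about rescaling-robustness and bit complexity are fine sanity checks but do not change the argument.
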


% \eden{to remove (maybe to fix for arxiv)}
% While by the existence of the pseudopolynomial-time algorithm:
% \begin{corollary}
%     There is an pseudopolynomial-time algorithm for leximin for participatory budgeting lotteries.
% \end{corollary}

%======================================= Conclusion and Future Work
\section{Conclusion and Future Work}\label{sec:conclusions-and-future}

In this work, we establish a strong connection between leximin fairness and utilitarian optimization, demonstrated by a reduction.
% from the first to the second. 
It is robust to errors in the sense that, given a black-box that approximates the utilitarian value, a leximin-approximation with respect to the same approximation factor can be obtained in polynomial time.

% Our current technique requires to assume that all utilities are non-negative, as we use a degenerate-state. We are not sure if a similar reduction would work for the setting where all utilities are at most $0$ (if some utilities are positive and some are negative, then the meaning of multiplicative approximation is unclear).
% \erel{TODO: rephrase}
\paragraph{Negative Utilities.} Our current technique requires to assume that the utilities are non-negative since we use the degenerate-state. 
For non-positive utilities, we believe that a similar reduction might still be possible by redefining the degenerate state to represent a lower bound on the worst outcome with respect to negative utilities. For example, in chores allocation, an appropriate choice for the degenerate state can be an artificial allocation in which every agent is assigned all tasks.
We note that combining positive and negative utilities poses even more challenges -- as even the meaning of multiplicative approximation  in this case is unclear.

% The Nash welfare (the product of utilities) is another maximization goal, which is considered a viable compromise between the efficiency of utilitarianism and the fairness of egalitarianism. Computationally, maximizing the Nash welfare is usually hard, similarly to egalitarian welfare. It is very interesting whether one can construct an analogous reduction from Nash welfare optimization (in expectation) to utilitarian optimization 
% \erel{TODO: rephrase}

\paragraph{Nash Welfare.} The Nash welfare (product of utilities) is another prominent objective in social choice, offering a compelling compromise between the efficiency of utilitarianism and the fairness of egalitarianism. From a computational perspective, maximizing Nash welfare is typically as challenging as maximizing egalitarian welfare. An interesting question is whether a similar reduction can be constructed from Nash welfare (in expectation) to utilitarian optimization.

\paragraph{Applications.} We believe this method can also be applied to a variety of other problems, such as: Selecting a Representative Committee \cite{henzinger_et_al_FORC_2022}, Allocating Unused Classrooms \cite{Kurokawa2018leximinRealWorld}, Selecting Citizens’ Assemblies \cite{flanigan2021fair}, Cake-Cutting \cite{aumann2012computing,Edith21}, Nucleolus \cite{Elkind09}.
 %for which the use of stochasticity makes sense; 

\paragraph{Best-of-both-worlds.} \citet{aziz2024participatoryBudgeting}, who study participatory budgeting lotteries, focus on fairness that is both ex-ante (which is a guarantee on the distribution) and ex-post (which is a guarantee on the deterministic support). In this paper, we guarantee only ex-ante fairness. Can ex-post fairness be achieved alongside it?
We note that our method ensures both ex-ante and ex-post efficiency -- since leximin guarantees Pareto-optimality with respect to the expected utilities $E_i$, any state in the support is Pareto-optimal with respect to the utilities $u_i$.

%On a technical level, it is interesting to examine whether the ellipsoid method employed in our solution can be more efficient and practical algorithm with similar guarantees, particularly one that does not rely on the ellipsoid method. 
\paragraph{Deterministic Setting.} When the objective is to find a leximin-optimal (or approximate) \emph{state} rather than a distribution, it remains an open question whether a black-box for utilitarian welfare can still contribute, even if for a different approximation factor.
% Here, it is clear that we cannot achieve the same approximation factor -- as there are problems\footnote{For instance, allocating indivisible goods among agents with additive utilities.} in which utilitarian welfare is solvable in polynomial time, while egalitarian welfare is NP-hard

\paragraph{Truthfulness.} \citet{arbiv_fair_2022}, who study giveaway lotteries, prove that a leximin-optimal solution is not only fair but also truthful in this case. Can our leximin-approximation be connected to some notion of truthfulness?

\paragraph{Leximin-Approximation Definitions.} This paper suggests a weaker definition of leximin-approximation than the one proposed by \citet{hartman2023leximin} (see Appendix H for more details). Can similar results be obtained with the stronger definition?

% \clearpage

\section*{Acknowledgments}
This research is partly supported by the Israel Science Foundation grants 712/20, 1092/24, 2697/22, and 3007/24. 
We sincerely appreciate Arkadi Nemirovski, Yasushi Kawase, Nikhil Bansal, Shaddin Dughmi, Edith Elkind, and Dinesh Kumar Baghel for their valuable insights, helpful answers, and clarifications. 
We are also grateful to the following members of the stack exchange network for their very helpful answers to our technical questions:
Neal Young,
\footnote{
    \url{https://cstheory.stackexchange.com/questions/51206} and 
    \url{https://cstheory.stackexchange.com/questions/51003}
    % \eden{I think we didn't use it eventually, but I thought it would be better to save it here}
    and
    \url{https://cstheory.stackexchange.com/questions/52353}
}
1Rock,
\footnote{
    \url{https://math.stackexchange.com/questions/4466551}
}
Mark L. Stone,%
\footnote{
    \url{https://or.stackexchange.com/questions/8633}
    and 
    \url{https://or.stackexchange.com/questions/11007}
    and 
    \url{https://or.stackexchange.com/questions/11311}
}
Rob Pratt,
\footnote{
    \url{https://or.stackexchange.com/questions/8980}
}
mtanneau,
\footnote{
    \url{https://or.stackexchange.com/questions/11910}
}
and mhdadk.
\footnote{
    \url{https://or.stackexchange.com/questions/11308}
}
Lastly, we would like to thank the reviewers in AAAI 2025 and COMSOC 2025 for their helpful comments.

\bibliography{main}

\clearpage

\appendix
\section*{Appendices Outline.} \Cref{apx:stoc-unnecessary-for-utilitarian} provides the proof of \Cref{lemma:stoc-unnecessary-for-utilitarian} that says that stochasticity is unnecessary for
utilitarian welfare .

In \Cref{apx:using-shallow-solver}, we give a comprehensive analysis of Algorithm \ref{alg:basic-ordered-Outcomes} when using a \emph{shallow} solver for the program \ref{eq:compact-OP}, including the proof of \Cref{lemma:alg1-shallow-solver}. 

\Cref{apx:desiging-shallow} proves \Cref{lemma:red-shallow-solver-to-feas-test}, by describing the implementation of the required shallow solver. 

\Cref{apx:eqv-linear} focuses on the equivalence between the programs \eqref{eq:min-sum-OP} and \eqref{eq:app-vsums-OP} -- including proof of \Cref{lemma:red-approx-solver-for-P3-to-solver-for-P2}.

\Cref{apx:primal-dual} provides the primal-dual derivation of programs \ref{eq:app-vsums-OP} and \ref{eq:dual-vsums-OP}.

\cref{apx:ellipsoid} presents the variant of the ellipsoid method used in this paper. 

\Cref{apx:random} addresses the case where the black-box for utilitarian welfare is randomized, proving \Cref{lemma:randomized-solvers}. 

Lastly, \Cref{apx:def-lex-approx} provides an extensive comparison between the different types of definitions for leximin-approximation.

\section{Stochasticity Is Unnecessary for Utilitarian Welfare ~~ (Proof of \Cref{lemma:stoc-unnecessary-for-utilitarian})}\label{apx:stoc-unnecessary-for-utilitarian}
Recall that the \Cref{lemma:stoc-unnecessary-for-utilitarian} says that: 
\begin{lemma*}
Let 
$
\displaystyle
    s^{uo}\in \argmax_{s \in S} \sum_{i=1}^n u_i(s).
$
Then 
\begin{align*}
     \forall \mathbf{x} \in X \colon \quad \sum_{i=1}^n u_i(s^{uo}) \geq \sum_{i=1}^n E_i(\mathbf{x}).
\end{align*}
\end{lemma*}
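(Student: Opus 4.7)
The plan is to unfold the definitions, swap the order of summation, and apply the defining property of $s^{uo}$ together with the fact that the probabilities $x_j$ sum to $1$.

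First I would expand $\sum_{i=1}^n E_i(\mathbf{x})$ using the definition of $E_i$, obtaining a double sum $\sum_{i=1}^n \sum_{j=1}^{|S|} x_j \cdot u_i(s_j)$. Since this is a finite double sum of non-negative terms (recall utilities are non-negative), I would swap the order of summation to group terms by state, rewriting it as $\sum_{j=1}^{|S|} x_j \cdot \bigl(\sum_{i=1}^n u_i(s_j)\bigr)$. This puts the inner sum in the exact form that $s^{uo}$ is chosen to maximize.

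Next I would apply the definition of $s^{uo}$, which yields $\sum_{i=1}^n u_i(s_j) \leq \sum_{i=1}^n u_i(s^{uo})$ for every state $s_j \in S$. Combined with $x_j \geq 0$ (from the definition of $X$), this lets me bound each term, giving $\sum_{j=1}^{|S|} x_j \cdot \sum_{i=1}^n u_i(s_j) \leq \sum_{j=1}^{|S|} x_j \cdot \sum_{i=1}^n u_i(s^{uo})$. Finally, since $\sum_{i=1}^n u_i(s^{uo})$ does not depend on $j$, it can be pulled out of the outer sum, and the remaining factor $\sum_{j=1}^{|S|} x_j$ equals $1$ by the defining constraint of an $S$-distribution. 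This yields the desired inequality.

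There is no real obstacle here; the statement is essentially the standard observation that a deterministic maximizer of a linear objective over a simplex of distributions attains a value at least as large as any randomization. The only detail worth being careful about is making sure $s^{uo}$ is well defined (which follows from $S$ being finite) and that the sum-to-one constraint on $\mathbf{x}$ is invoked at the final step.
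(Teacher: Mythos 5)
Your proof is correct and follows essentially the same computation as the paper's: expand the expectations, swap the order of summation, bound the per-state utilitarian sum by its maximum at $s^{uo}$, and invoke $\sum_j x_j = 1$. The only (cosmetic) difference is that you argue directly for an arbitrary $\mathbf{x} \in X$, whereas the paper first introduces the maximizer $\mathbf{x}^{uo}$ of the sum of expected utilities and proves the inequality for it, leaving the extension to all $\mathbf{x}$ implicit; your version is marginally cleaner.
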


\begin{proof}
Let $\mathbf{x}^{uo}$ be an \dist that maximizes the sum of expected-utilities:
\begin{align*}
\textbf{x}^{uo} \in \argmax_{x \in X} \sum_{i=1}^n E_i(s).
\end{align*}
It follows that:
\begin{align*}
    &\sum_{i=1}^n u_i(s^{uo}) =   \sum_{j=1}^{|S|} x^{uo}_j \sum_{i=1}^n u_i(s^{uo}) && \text{(As $\sum_{j=1}^{|S|} x^{uo}_j = 1$)}\\
    & \geq \sum_{j=1}^{|S|} x^{uo}_j \sum_{i=1}^n u_i(s_j)&& \text{(By def. of $s^{uo}$)}\\
    & = \sum_{i=1}^n \sum_{j=1}^{|S|} x^{uo}_j \cdot u_i(s_j) = \sum_{i=1}^n E_i(\mathbf{x}^{uo}).
\end{align*}
\end{proof}

\newcommand{\up}[1]{up(#1)}
\newcommand{\upS}{up}
\newcommand{\down}[1]{down(#1)}
\newcommand{\downS}{down}

% \begin{table*}[htb]
%     \centering
%     \begin{tabular}{|>{\arraybackslash}m{0.22\textwidth}|>{\arraybackslash}m{0.22\textwidth}||>{\arraybackslash}m{0.22\textwidth}|>{\arraybackslash}m{0.22\textwidth}|}
%     \hline
%     \centering \textbf{Exact Solver} & \centering \textbf{Approx. Optimal Solver} & \centering \textbf{Shallow Solver} & \centering \textbf{Approx. Optimal Shallow Solver} \tabularnewline
%     \centering \cite{Ogryczak_2006} & \centering \cite{hartman2023leximin} & \centering (This paper) & \centering (This paper) \tabularnewline
%     \hline
%     \multicolumn{4}{|c|}{Returns $(\mathbf{x},z_t)$ that satisfies constraints (\ref{eq:general-comp}.1--4) (i.e., a solution for \ref{eq:general-comp}) such that}\tabularnewline
%     \hline
%       \centering $z_t\geq z_t'$
%     & \centering $z_t\geq \alpha\cdot z_t'$ 
%     & \centering $z_t\geq z_t'$
%     & \centering $z_t\geq z_t'$
%     \tabularnewline
%      for all $(\mathbf{x}',z_t')$ that satisfy (\ref{eq:general-comp}.1--4).
%     &for all $(\mathbf{x}',z_t')$ that satisfy (\ref{eq:general-comp}.1--4).
%     & for all $(\mathbf{x}',z_t')$ that satisfy (\ref{eq:general-comp}.1--4)
%     and $\mathbf{x}'\in \shallowXg$.
%     & for all $(\mathbf{x}',z_t')$ that satisfy (\ref{eq:general-comp}.1--4)
%     and $\mathbf{x}'\in \shallowXg$.
%     \tabularnewline
%     \hline
%     \end{tabular}
%     \caption{Comparing the three solvers for (\progCompact), where $\multApprox \in (0,1]$ is the approximation-factor.
%     }
%     \label{tab:3-solvers}  
% \end{table*}

\begin{table*}[htb]
    \centering
    \begin{tabular}{|>{\arraybackslash}m{0.3\textwidth}|>{\arraybackslash}m{0.3\textwidth}|>{\arraybackslash}m{0.3\textwidth}|}
    \hline
    \centering \textbf{Exact Solver} & \centering \textbf{$\multApprox$-Approx.Optimal Solver} & \centering \textbf{$\multApprox$-Shallow  Solver} \tabularnewline
    \centering \cite{Ogryczak_2006}& \centering \cite{hartman2023leximin} & \centering (This paper) \tabularnewline
    \hline
    \multicolumn{3}{|c|}{Returns $\x^t \in \feasC$ (i.e., a feasible solution for \ref{eq:general-comp}) such that}\tabularnewline
    \hline
      \centering $\objCx{\x^t} \geq \objCx{\x}$
    & \centering $\objCx{\x^t} \geq \multApprox \cdot \objCx{\x}$
    & \centering $\objCx{\x^t} \geq \objCx{\x}$
    \tabularnewline
     \centering for all $\x \in \feasC$.
    &\centering for all $\x \in \feasC$.
    &\centering for all $\x \in \feasC \cap \shallowXg$.
    \tabularnewline
    \hline
    \end{tabular}
    \caption{Comparing the three solvers for (\progCompact), where $\multApprox \in (0,1]$ is the approximation-factor.
    }
    \label{tab:3-solvers}  
\end{table*}

\section{Using a Shallow Solver ~(Including Proof of \Cref{lemma:alg1-shallow-solver})}\label{apx:using-shallow-solver}

This appendix provides an analysis of Algorithm \ref{alg:basic-ordered-Outcomes} when using a \emph{shallow} solver for the program \ref{eq:compact-OP}. 
We use a bit more compact representation of the program where the two constraints (\ref{eq:compact-OP}.1--2) are merged:
\begin{align*}
    \max &&& \sum_{i=1}^{t}\expectedValBy{i}{\mathbf{x}} - \sum_{i=1}^{t-1}  z_i \tag{\ref{eq:compact-OP}}\label{eq:general-comp}\\
    s.t. &&& (\text{\progCompact.1--2}) 
    \Hquad  
    \x \in X
    \nonumber\\
                    &&& (\text{\progCompact.3}) \Hquad \sum_{i=1}^{\ell}\expectedValBy{i}{\mathbf{x}} \geq \sum_{i=1}^{\ell}  z_i && \forXinY{\ell}{t-1} \nonumber
\end{align*}
Recall that this program is parameterized by an integer $t \in N$ and $t-1$ constants $(z_1, \ldots, z_{t-1})$; its  only variable is $\x$ (a vector of size $|S|$ representing an \dist).

\subsection{The Three Different Solver Types}
\paragraph{Comparing the Types.} In Section \ref{sec:main-loop}, three types of solvers for  \ref{eq:general-comp}  were mentioned - exact, approximately-optimal, and shallow. 
Table \ref{tab:3-solvers} provides a comparison between the three types. 
We denote the solution returned by the solver by $\x^t$,
% For clarity, 
and recall that $\feasC$ is the set of feasible solutions for \ref{eq:compact-OP} (i.e., those who satisfy all its constraints); and that $\objCx{\x}$ describes the objective value of such a solution $\x \in \feasC$.

\paragraph{Exact Solver.} An exact solver returns a solution, $\x^t \in \feasC$,
% $(\x^t, \ztCons)$ 
whose objective value is maximum among the objective values of all the solutions; formally,
% $(\x, \ztVar)$ 
% satisfies Constraints (\ref{eq:general-comp}.1--4) and
$\objCx{\x^t} \geq \objCx{\x}$ for any solution $\x \in \feasC$.
% that satisfies Constraints (\ref{eq:general-comp}.1--4).

\paragraph{Approximately-Optimal Solver.} An approximately-optimal solver returns a solution, $\x^t \in \feasC$, whose objective value is \emph{at least $\multApprox$ times the} maximum among the objective values of all the solutions; equivalently, $\objCx{\x^t} \geq \multApprox \cdot \objCx{\x}$ for any solution $\x \in \feasC$.

\paragraph{Shallow Solver.} In contrast, a \emph{shallow} solver returns a solution, $\x^t \in \feasC$, whose objective value is at least the maximum among the objective values of a specific \emph{subset} of solutions — specifically,  the solutions in $\shallowXg$; formally, $\objCx{\x^t} \geq \objCx{\x}$ for any solution $\x \in \feasC \cap \shallowXg $.

The name 'shallow' aims to reflect that this solver only considers a subset of the entire solution space.
It is important to note that the objective value of the solution returned by a shallow solver might be \emph{strictly-higher} than the maximum within $\shallowXg$, this is because $\x$ is not restricted to be in $\shallowXg$ (it might be in $X \setminus \shallowXg$).

\paragraph{} \Cref{apx:hierarchy-solvers} provides an extensive analysis of the hierarchy between the solvers.

% To provide an even clearer view, one can consider the following program, in which we change constraint (\ref{eq:general-comp}.1) to (\generalCompS.1): 
% \begin{align*}
%     \max &&& z_t 
%     \tag{\generalCompS}\label{eq:general-comp-s}\\
%     s.t. &&& (\text{\generalCompS.1}) 
%     \Hquad  
%     \x \in \shallowXg
%     \nonumber\\
%                     &&& (\text{\generalComp.2}) \Hquad \sum_{i=1}^{\ell}\expectedValBy{i}{\mathbf{x}} \geq \sum_{i=1}^{\ell}  z_i && \forXinY{\ell}{t-1} \nonumber\\
%                     &&& (\text{\generalComp.3}) \Hquad \sum_{i=1}^{t}\expectedValBy{i}{\mathbf{x}} \geq \sum_{i=1}^{t-1}  z_i + z_t
% \end{align*}
% Giving that, we can provide the following equivalent definition: a  shallow solver returns a solution $(\x, z_t)$ for \ref{eq:general-comp}, whose objective value is at lease as high as the maximum objective value for \ref{eq:general-comp-s}.

% ===========================

\newcommand{\xApprox}{\x^A}
\subsection{Preparations for the Proof}
In this section, we present another definition of the leximin approximation and prove that it is equivalent to the definition provided in the main paper. We will use this new definition in the proof of \Cref{lemma:alg1-shallow-solver}.
Specifically, we prove that an \dist $\xApprox \in X$ is an $\multApprox$-leximin-approximation if and only if $\mathbf{E}(\xApprox) \weaklyPreferred \mathbf{E}(\x)$ for any $\x \in \shallowXg$.

Recall that $s_d$ is the dummy-selection that gives all agents utility $0$, and that $x_d$ is the probability the \dist $\x$ assigns to the dummy-selection; and also that $\shallowXg$ is a subset of $X$ where $\x \in \shallowXg$ if $x_d \geq (1-\multApprox)$.
We start by defining two operations.

\begin{definition}[$\multApprox$-Upgrade]
    For $\x \in \shallowXg$, an \emph{$\multApprox$-upgrade} of $\x$ is the output vector of the following function:
    \begin{align*}
       \up{\x} = &\left( \up{\x,1},\ldots, \up{\x,|S|} \right) \\
       s.t. \Hquad & \up{\x,j} = 
        \begin{cases}
            \frac{1}{\multApprox} \cdot x_j &  \forall j \neq d\\
            1- \frac{1}{\multApprox} \sum_{j\neq d} x_j & \text{otherwise}
        \end{cases}
    \end{align*}
\end{definition}

\begin{lemma}\label{lemma:upgrade}
    Let $\x \in \shallowXg$, and let $\x^{\upS} := \up{\x}$ be its $\multApprox$-upgrade. ~~Then, $\x^{\upS} \in X$ and $\mathbf{E}(\x^{\upS}) = \frac{1}{\multApprox}\mathbf{E}(\x)$.
\end{lemma}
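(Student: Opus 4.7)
The plan is to verify the two claims directly from the definition of $\upS$ by elementary bookkeeping.

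For the first claim, $\x^{\upS} \in X$, I would check the two conditions (non-negativity of every entry, and entries summing to $1$). For $j \neq d$, non-negativity is immediate since $\up{\x,j} = x_j/\multApprox$ and $x_j \geq 0$, $\multApprox > 0$. The only non-trivial coordinate is $d$: the value is $1 - \frac{1}{\multApprox} \sum_{j \neq d} x_j$, and this is non-negative precisely because $\x \in \shallowXg$, i.e.\ $\sum_{j \neq d} x_j \leq \multApprox$. This is exactly why $\shallowXg$ was defined this way --- it is the largest set on which the upgrade operation stays inside $X$. The sum condition then drops out from a one-line cancellation:
\begin{align*}
\sum_{j=1}^{|S|} \up{\x,j} &= \left(1 - \tfrac{1}{\multApprox}\sum_{j \neq d} x_j\right) + \sum_{j \neq d} \tfrac{1}{\multApprox} x_j = 1.
\end{align*}

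For the second claim, $\mathbf{E}(\x^{\upS}) = \frac{1}{\multApprox}\mathbf{E}(\x)$, I would use the defining property of the degenerate state $s_d$, namely $u_i(s_d) = 0$ for every $i \in N$. This makes the value assigned to coordinate $d$ irrelevant for expected utilities. Concretely, for each agent $i$,
\begin{align*}
E_i(\x^{\upS}) &= \up{\x,d}\cdot u_i(s_d) + \sum_{j \neq d} \up{\x,j}\cdot u_i(s_j) \\
&= 0 + \sum_{j \neq d} \tfrac{1}{\multApprox}\, x_j \cdot u_i(s_j) = \tfrac{1}{\multApprox}\, E_i(\x),
\end{align*}
where the last equality again uses $u_i(s_d) = 0$ to freely include the $j=d$ term in the sum defining $E_i(\x)$.

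There is essentially no obstacle here: this lemma is a definitional check whose content is that the constant $\multApprox$ in the definition of $\shallowXg$ was chosen precisely so that the upgrade map lands in $X$ while scaling every expected utility by exactly $1/\multApprox$. This is exactly the property that will be leveraged in the proof of \Cref{lemma:alg1-shallow-solver} (and in the alternative characterization of $\multApprox$-leximin-approximation restricted to $\shallowXg$), since it allows us to transfer a hypothetical witness $\x \in X$ that contradicts approximate-leximin into a witness $\x^{\upS} \in \shallowXg$ that contradicts the guarantee of the shallow solver.
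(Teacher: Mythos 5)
Your proof is correct and follows the same elementary bookkeeping as the paper's. One small point in your favor: when computing $E_i(\x^{\upS})$, the paper writes $\sum_{j} x^{\upS}_j u_i(s_j) = \sum_j \frac{1}{\multApprox} x_j u_i(s_j)$ without explicitly flagging that equality requires $u_i(s_d)=0$ (since $x^{\upS}_d \neq x_d/\multApprox$ in general); you make that invocation of the degenerate-state property explicit, which is cleaner.
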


\begin{proof}
    As $\x \in \shallowXg$, we know that $x_j \geq 0$ for all $j$ and that $\sum_{j \neq d} x_j \leq \multApprox$. 
    
    For $j \neq d$, it is clear that $x^{\upS}_j \geq 0$.
    It is also true that $x^{\upS}_d \geq 0$:
    \begin{align*}
        x^{\upS}_d = 1- \frac{1}{\multApprox} \sum_{j\neq d} x_j \geq 1 - \frac{1}{\multApprox} \cdot \multApprox = 0
    \end{align*}
    We can also conclude that $\sum_{j} x^{\upS}_j = 1$:
    \begin{align*}
        \sum_{j = 1}^{|S|} x^{\upS}_j &= \sum_{j \neq d} x^{\upS}_j + x^{\upS}_d\\
        &= \sum_{j \neq d} \frac{1}{\multApprox} x_j + \left(1 - \frac{1}{\multApprox} \sum_{j\neq d} x_j\right) = 1
    \end{align*}
    Thus, $\x^{\upS} \in X$.
    
    In addition, regarding the expected utilities, $\mathbf{E}(\x^{\upS}) = \frac{1}{\multApprox}\mathbf{E}(\x)$ as the following holds for any $i \in N$: 
    \begin{align*}
       E_i(\x^{\upS}) &= \sum_{j = 1}^{|S|} x^{\upS}_j \cdot u_i(s_j) = \sum_{j = 1}^{|S|} \frac{1}{\multApprox} x_j \cdot u_i(s_j)\\
       &=  \frac{1}{\multApprox} \sum_{j = 1}^{|S|} x_j \cdot u_i(s_j) = \frac{1}{\multApprox}E_i(\x)
    \end{align*}
\end{proof}

% =======================================

\begin{definition}[$\multApprox$-Downgrade]
    For $\x \in X$, an \emph{$\multApprox$-downgrade} of $\x$ is the output vector of the following function:
    \begin{align*}
       \down{\x} = &\left( \down{\x,1},\ldots, \down{\x,|S|} \right) \\
       s.t. \quad & \down{\x,j} = 
        \begin{cases}
            \multApprox \cdot x_j &  \forall j \neq d\\
            1- \multApprox\sum_{j\neq d} x_j & \text{otherwise}
        \end{cases}
    \end{align*}
\end{definition}

\begin{lemma}\label{lemma:downgrade}
    Let $\x \in X$, and let $\x^{\downS} := \down{\x}$ be its $\multApprox$-downgrade. ~~Then, $\x^{\downS} \in \shallowXg$ and $\mathbf{E}(\x^{\downS}) = \multApprox\mathbf{E}(\x)$.
\end{lemma}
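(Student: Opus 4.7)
The plan is to mirror the proof of \Cref{lemma:upgrade} almost verbatim, exploiting the fact that the downgrade operation is the formal inverse of the upgrade operation (up to the degenerate-state mass) and that the degenerate state contributes nothing to any agent's utility.

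First I would establish $\x^{\downS} \in \shallowXg$. Non-negativity of the coordinates $j\neq d$ is immediate from $x^{\downS}_j = \multApprox x_j$ with $x_j\geq 0$ and $\multApprox\in(0,1]$. For the degenerate coordinate, since $\x\in X$ gives $\sum_{j\neq d}x_j \leq 1$ and $\multApprox\leq 1$, we obtain $x^{\downS}_d = 1-\multApprox\sum_{j\neq d}x_j \geq 1-\multApprox\cdot 1 \geq 0$. A direct calculation
\begin{align*}
\sum_{j=1}^{|S|} x^{\downS}_j \;=\; \sum_{j\neq d}\multApprox x_j + \Bigl(1-\multApprox\sum_{j\neq d}x_j\Bigr) \;=\; 1
\end{align*}
shows $\x^{\downS}\in X$. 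To see $\x^{\downS}\in\shallowXg$, observe
\begin{align*}
\sum_{j\neq d} x^{\downS}_j \;=\; \multApprox\sum_{j\neq d}x_j \;\leq\; \multApprox,
\end{align*}
using once more that $\sum_{j\neq d}x_j\leq 1$.

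Second, I would compute the expected-utility vector. Because the degenerate state $s_d$ satisfies $u_i(s_d)=0$ for every $i\in N$, the degenerate coordinate contributes nothing to $E_i$, and for any $i\in N$:
\begin{align*}
E_i(\x^{\downS}) \;=\; \sum_{j\neq d} x^{\downS}_j\, u_i(s_j) \;=\; \multApprox\sum_{j\neq d} x_j\, u_i(s_j) \;=\; \multApprox\, E_i(\x).
\end{align*}
Hence $\mathbf{E}(\x^{\downS})=\multApprox\,\mathbf{E}(\x)$, completing the claim.

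There is no real obstacle here; the only subtlety to flag is that the bound $\sum_{j\neq d}x_j\leq 1$ (needed for $x^{\downS}_d\geq 0$) relies on $\multApprox\leq 1$, and that the identity $\mathbf{E}(\x^{\downS})=\multApprox\mathbf{E}(\x)$ crucially uses $u_i(s_d)=0$, which is exactly the role of the degenerate state postulated in the model. This lemma together with \Cref{lemma:upgrade} will later supply the equivalence between the two leximin-approximation formulations used in the proof of \Cref{lemma:alg1-shallow-solver}.
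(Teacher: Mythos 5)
Your proof is correct and follows essentially the same approach as the paper's: verify non-negativity of the coordinates, compute the total mass to show $\x^{\downS}\in X$, bound the non-degenerate mass by $\multApprox$ (the paper phrases this as $x^{\downS}_d\geq 1-\multApprox$, an equivalent characterization of $\shallowXg$ given that the coordinates sum to $1$), and then compute $E_i(\x^{\downS})=\multApprox E_i(\x)$. Your presentation is in fact slightly tidier on the last step: the paper writes $\sum_{j} x^{\downS}_j u_i(s_j)=\sum_j \multApprox x_j u_i(s_j)$ over all $j$, which is only valid term-by-term because the $j=d$ term vanishes on both sides via $u_i(s_d)=0$; you make that explicit by restricting the sum to $j\neq d$ from the outset, which you correctly flag as the crucial use of the degenerate state.
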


\begin{proof}
    As $\x \in X$, we know that $x_j \geq 0$ for all $j$ and that $\sum_{j \neq d} x_j \leq 1$. 
    
    For $j \neq d$, it is clear that $x^{\downS}_j \geq 0$.
    Also, $x^{\downS}_d \geq (1- \multApprox)$, since:
    \begin{align*}
        x^{\downS}_d &= 1- \multApprox \sum_{j\neq d} x_j \\
        &\geq 1 - \multApprox && \text{(as  $\sum_{j\neq d} x_j \leq 1 $)}
    \end{align*}
    We can also conclude that $\sum_{j} x^{\downS}_j = 1$:
    \begin{align*}
        \sum_{j = 1}^{|S|} x^{\downS}_j &= \sum_{j \neq d} x^{\downS}_j + x^{\downS}_d\\
        &= \sum_{j \neq d} \multApprox \cdot x_j + \left(1 - \multApprox \sum_{j\neq d} x_j\right) = 1
    \end{align*}
    Thus, $\x^{\downS} \in \shallowXg$. 
    
    In addition, $\mathbf{E}(\x^{\downS}) = \multApprox \mathbf{E}(\x)$, as the following holds for any $i \in N$: 
    \begin{align*}
       E_i(\x^{\downS}) &= \sum_{j = 1}^{|S|} x^{\downS}_j \cdot u_i(s_j) = \sum_{j = 1}^{|S|} \multApprox \cdot x_j \cdot u_i(s_j)\\
       & =  \multApprox \sum_{j = 1}^{|S|} x_j \cdot u_i(s_j) = \multApprox E_i(\x)
    \end{align*}
\end{proof}

\paragraph{}
Essentially, $\multApprox$-upgrade multiplies the probability of all non-dummy selections by $1/\multApprox$, and $\multApprox$-downgrade multiplies them by $\multApprox$. The operations are converse:

\begin{observation}
\label{obs:upgrade-downgrade}
    For all $\x\in \shallowXg$, the $\multApprox$-downgrade of the 
    $\multApprox$-upgrade of $\x$ equals $\x$.
    ~ Similarly, 
        for all $\x \in X$, The $\multApprox$-upgrade of the 
    $\multApprox$-downgrade of $\x$ equals $\x$.
    \begin{align*}
        &\forall \x \in \shallowXg \colon \Hquad \down{\up{\x}} = \x\\
        &\forall \x \in X \colon \quad \up{\down{\x}} = \x
    \end{align*}
\end{observation}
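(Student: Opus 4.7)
The plan is a direct computation checking the two compositions entry by entry, separating the dummy index $d$ from the other indices. Both operations modify the non-dummy entries by a scalar multiplication (by $\frac{1}{\multApprox}$ or by $\multApprox$), and then re-set the dummy entry so that the components sum to $1$. Since the two scalars are reciprocals, composing them must be the identity on non-dummy entries, and the constraint $\sum_j x_j = 1$ (which holds for any $\x \in X$, and in particular for $\x \in \shallowXg \subseteq X$) forces the dummy entry to recover as well.

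For the first claim, I would fix $\x \in \shallowXg$, let $\x^{\upS} := \up{\x}$, and compute $\down{\x^{\upS}}$ coordinate by coordinate. For $j \neq d$, $\down{\x^{\upS}}_j = \multApprox \cdot \x^{\upS}_j = \multApprox \cdot \frac{1}{\multApprox} x_j = x_j$. For the dummy coordinate, $\down{\x^{\upS}}_d = 1 - \multApprox \sum_{j \neq d} \x^{\upS}_j = 1 - \multApprox \sum_{j \neq d} \frac{1}{\multApprox} x_j = 1 - \sum_{j \neq d} x_j$, and this equals $x_d$ because $\x \in X$ so $\sum_{j} x_j = 1$. Note that $\up{\x}$ is well-defined since $\x \in \shallowXg$, and by Lemma B.1 we have $\up{\x} \in X$, so $\down{\up{\x}}$ is also well-defined.

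For the second claim, I would fix $\x \in X$, let $\x^{\downS} := \down{\x}$, and compute $\up{\x^{\downS}}$. By Lemma B.2, $\x^{\downS} \in \shallowXg$, so $\up{\x^{\downS}}$ is well-defined. For $j \neq d$, $\up{\x^{\downS}}_j = \frac{1}{\multApprox} \x^{\downS}_j = \frac{1}{\multApprox} \cdot \multApprox x_j = x_j$. For the dummy coordinate, $\up{\x^{\downS}}_d = 1 - \frac{1}{\multApprox} \sum_{j \neq d} \x^{\downS}_j = 1 - \frac{1}{\multApprox} \sum_{j \neq d} \multApprox x_j = 1 - \sum_{j \neq d} x_j = x_d$, again using $\sum_j x_j = 1$.

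There is no real obstacle here; the only subtlety is remembering to invoke the preservation results (Lemmas B.1 and B.2) to ensure that the outer operation in each composition is applied to a vector in its legal domain. Otherwise, the argument is a two-line arithmetic check on each coordinate type.
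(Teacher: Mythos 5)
Your proof is correct, and the paper gives no proof of this observation at all (it is stated as an immediate consequence of the definitions); your coordinate-wise computation is precisely the straightforward unwinding the paper treats as obvious, including the one non-trivial point — that recovering the dummy coordinate uses $\sum_j x_j = 1$ — and the (minor but good) observation that the domain-preservation lemmas are needed to make the outer operation well-defined.
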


Upgrades and downgrades preserve the leximin order:
\begin{observation}
\label{obs:upgrade-leximin}
For two vectors $\x,\x' \in \shallowXg$, 
$\mathbf{E}(\x) \weaklyPreferred \mathbf{E}(\x')$
if-and-only-if the same relation holds for their $\multApprox$-upgrades $\mathbf{E}(\up{\x}) \weaklyPreferred \mathbf{E}(\up{\x'})$.

Similarly, for two vectors $\x,\x' \in X$,
$\mathbf{E}(\x) \weaklyPreferred \mathbf{E}(\x')$
if-and-only-if the same relation holds for their $\multApprox$-downgrades
 $\mathbf{E}(\down{\x}) \weaklyPreferred \mathbf    {E}(\down{\x'})$.
\end{observation}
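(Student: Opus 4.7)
The plan is to reduce the observation directly to the two preceding lemmas (on upgrade and downgrade) plus a simple auxiliary fact about the leximin order. Concretely, \Cref{lemma:upgrade} tells us that $\mathbf{E}(\up{\x})=\tfrac{1}{\multApprox}\mathbf{E}(\x)$ for every $\x\in\shallowXg$, and \Cref{lemma:downgrade} tells us that $\mathbf{E}(\down{\x})=\multApprox\cdot\mathbf{E}(\x)$ for every $\x\in X$. Thus each of the two equivalences to be proven has the shape: for vectors $\mathbf{v},\mathbf{u}\in\mathbb{R}^n_{\geq 0}$ and a positive constant $c\in\{\multApprox,\tfrac{1}{\multApprox}\}$, the relation $\mathbf{v}\weaklyPreferred\mathbf{u}$ is equivalent to $c\mathbf{v}\weaklyPreferred c\mathbf{u}$. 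So the heart of the proof is a single auxiliary claim: the leximin preorder is invariant under multiplication by a positive scalar.

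To establish that auxiliary claim, I would first note that sorting in non-decreasing order commutes with multiplication by a positive scalar: for every $c>0$ and every $\mathbf{v}\in\mathbb{R}^n$ we have $(c\mathbf{v})^{\uparrow}=c\cdot\mathbf{v}^{\uparrow}$, because multiplying every entry of $\mathbf{v}$ by the same positive number $c$ does not change the relative order of the entries. Consequently, for any $\mathbf{v},\mathbf{u}\in\mathbb{R}^n$ and any index $k$, the condition $v^{\uparrow}_i=u^{\uparrow}_i$ is equivalent to $c\cdot v^{\uparrow}_i=c\cdot u^{\uparrow}_i$, and similarly strict inequality $v^{\uparrow}_k>u^{\uparrow}_k$ is equivalent to $c\cdot v^{\uparrow}_k>c\cdot u^{\uparrow}_k$. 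Applying this entry-wise to the definition of $\weaklyPreferred$ given in \Cref{sec:leximin-def} yields $\mathbf{v}\weaklyPreferred\mathbf{u}\iff c\mathbf{v}\weaklyPreferred c\mathbf{u}$, as required.

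With this auxiliary claim in hand, both halves of the observation follow immediately. For the upgrade direction, take $c=\tfrac{1}{\multApprox}$, substitute $\mathbf{E}(\up{\x})=c\mathbf{E}(\x)$ and $\mathbf{E}(\up{\x'})=c\mathbf{E}(\x')$, and conclude. For the downgrade direction, take $c=\multApprox$ and argue analogously. I do not foresee any real obstacle here: the content of the observation is essentially bookkeeping on top of the two preceding lemmas, and the only mild subtlety is checking that the sorted-order identity $(c\mathbf{v})^{\uparrow}=c\cdot\mathbf{v}^{\uparrow}$ is handled cleanly, which is immediate for $c>0$.
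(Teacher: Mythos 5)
Your proof is correct. The paper states this as an unproved Observation (i.e., treats it as obvious), and your argument — reduce both halves to the scalar-invariance of the leximin preorder via $\mathbf{E}(\up{\x})=\tfrac{1}{\multApprox}\mathbf{E}(\x)$ and $\mathbf{E}(\down{\x})=\multApprox\,\mathbf{E}(\x)$, then observe that $(c\mathbf{v})^{\uparrow}=c\,\mathbf{v}^{\uparrow}$ for $c>0$ so that equalities and strict inequalities between sorted entries are preserved — is exactly the natural justification the authors had in mind.
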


\subsubsection{Relation Between $X$ and $\shallowXg$.}

Recall that an \dist, $\x^* \in X$, is leximin optimal\footnote{Notice that there might be different solutions that are leximin-optimal, but all of their expected vector are leximin-equivalent.} if $\mathbf{E}(\x^*) \weaklyPreferred \mathbf{E}(\x)$ for all $\x \in X$.

\begin{lemma}\label{lemma:threshold-is-opt-in-shallow}
    Let $\x^*$ be a leximin optimal \dist. Then, 
    $\mathbf{E}(\down{\x^*}) \weaklyPreferred \mathbf{E}(\x)$ for all $\x \in \shallowXg$.
\end{lemma}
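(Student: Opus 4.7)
The plan is to exploit the machinery of $\multApprox$-upgrades and $\multApprox$-downgrades developed just above this lemma. Given an arbitrary $\x \in \shallowXg$, the natural first move is to promote it into $X$ so that leximin-optimality of $\x^*$ can be invoked. Concretely, I would apply \Cref{lemma:upgrade} to $\x$ and obtain its $\multApprox$-upgrade $\up{\x}$, which the lemma guarantees lies in $X$. Since $\x^*$ is leximin-optimal over the entire set $X$, this immediately yields
$\mathbf{E}(\x^*) \weaklyPreferred \mathbf{E}(\up{\x})$.

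Next I would transfer this preference back into $\shallowXg$ by applying the $\multApprox$-downgrade to both $\x^*$ and $\up{\x}$ and invoking \Cref{obs:upgrade-leximin}, which states that the downgrade operation preserves the leximin preorder on pairs of vectors in $X$. This produces $\mathbf{E}(\down{\x^*}) \weaklyPreferred \mathbf{E}(\down{\up{\x}})$. Finally, \Cref{obs:upgrade-downgrade} tells us that $\down{\up{\x}} = \x$ whenever $\x \in \shallowXg$, so the right-hand side collapses to $\mathbf{E}(\x)$, which is precisely the desired conclusion.

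I do not anticipate any real obstacle here; the substantive content is already packed into the upgrade/downgrade lemmas and the two observations about them, and the argument is essentially a three-line chain. The only subtlety worth flagging is that \Cref{obs:upgrade-leximin} has to be applied in its ``$X$-version'' (both arguments in $X$, which is satisfied by $\x^*$ and $\up{\x}$) and \Cref{obs:upgrade-downgrade} in its ``$\shallowXg$-version'' (the argument in $\shallowXg$, satisfied by our $\x$). These are exactly the sides on which the two auxiliary facts were stated, so the composition goes through without any additional work.
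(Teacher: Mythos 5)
Your proposal is correct and follows essentially the same route as the paper's own proof: upgrade $\x$ into $X$, invoke leximin-optimality of $\x^*$, downgrade both sides via \Cref{obs:upgrade-leximin}, and collapse $\down{\up{\x}}$ to $\x$ via \Cref{obs:upgrade-downgrade}. Nothing to add.
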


\begin{proof}
Let $\x \in \shallowXg$. By \Cref{lemma:upgrade}, $\up{\x} \in X$. As $\x^*$ is leximin optimal, we get that
$\mathbf{E}(\x^*) \weaklyPreferred \mathbf{E}(\up{\x})$.
By \Cref{obs:upgrade-leximin},
$\mathbf{E}(\down{\x^*}) \weaklyPreferred \mathbf{E}(\down{\up{\x}}$; and by \Cref{obs:upgrade-downgrade}, $\mathbf{E}(\down{\up{\x}} = \mathbf{E}(\x)$. Thus, $\mathbf{E}(\down{\x^*}) \weaklyPreferred \mathbf{E}(\x)$.

\iffalse
\erel{New proof, positive:}

Let $\x^{\multApprox}$ be any vector in $\shallowXg$,
and let $\x$ be its $\multApprox$-upgrade. Since $\x\in X$
and $\x^*$ is leximin optimal, 
$\mathbf{E}(\x^*) \weaklyPreferred \mathbf{E}(\x)$.
By \Cref{obs:upgrade-leximin},
$\mathbf{E}(\x^{*\multApprox}) \weaklyPreferred \mathbf{E}(\x^{\multApprox})$.
\qed

----
\erel{Old proof:}

    Assume by contradiction that there exists a $\xt \in \shallowXg$ such that $\mathbf{E}(\xt) \succ \mathbf{E}(\x^{*\multApprox})$.
    By definition, there exists an integer $1 \leq k\leq n $ such that $\expectedValBy{i}{\xt} = \expectedValBy{i}{\x^{*\multApprox}}$ for $i \leq k$, and $\expectedValBy{k}{\xt} > \expectedValBy{k}{\x^{*\multApprox}}$.

    Let $\xtt$ be the vector result by an $\multApprox$-upgrade of $\xt$. 
    We shall now see that this means that $\mathbf{E}(\xtt) \succ \mathbf{E}(x^*)$, in contradiction to the optimality of $\x^*$.
    First, for $i<k$:
    \begin{align*}
        \expectedValBy{i}{\xtt} &= \frac{1}{\multApprox}\expectedValBy{i}{\xt} = \frac{1}{\multApprox}\expectedValBy{i}{\x^{*\multApprox}} \\
        &= \frac{1}{\multApprox}\cdot \multApprox \cdot \expectedValBy{i}{\x^*} = \expectedValBy{i}{\x^*}
    \end{align*}
    While for $k$:
    \begin{align*}
        \expectedValBy{k}{\xtt} &= \frac{1}{\multApprox}\expectedValBy{k}{\xt} > \frac{1}{\multApprox}\expectedValBy{k}{\x^{*\multApprox}} \\
        &= \frac{1}{\multApprox}\cdot \multApprox \cdot \expectedValBy{k}{\x^*} = \expectedValBy{k}{\x^*}
    \end{align*}
    \fi
\end{proof}

% We now prove that $\down{\x^*}$ is an $\multApprox$-leximin-approximation

% \begin{lemma}\label{lemma:threshold-is-approx-opt}
%     Let $\x^*$ be a leximin optimal \dist. Then, 
%     $\mathbf{E}(\down{\x^*}) \weaklyPreferred \multApprox \mathbf{E}(\x)$ for all $\x \in X$.
% \end{lemma}

% \begin{proof}
% Let $\x \in X$. Since $\x^*$ is leximin optimal, 
% $\mathbf{E}(\x^*) \weaklyPreferred \mathbf{E}(\x)$. 
% As the leximin order is also preserved under multiplication by a positive constant, this also implies that $\multApprox \mathbf{E}(\x^*) \weaklyPreferred \multApprox \mathbf{E}(\x)$.
% By \Cref{lemma:downgrade},
% $\mathbf{E}(\down{\x^*}) =\multApprox \mathbf{E}(\x^*)$.
% Thus, $\mathbf{E}(\down{\x^*}) \weaklyPreferred \multApprox \mathbf{E}(\x) $

% \end{proof}

Now, recall that an \dist, $\xApprox \in X$, is an $\multApprox$-leximin-approximation if $\mathbf{E}(\xApprox) \weaklyPreferred \multApprox \cdot \mathbf{E}(\x)$ for all $\x \in X$.

\begin{lemma}\label{lemma:aprox-eqv-def}
    An \dist $\xApprox$ is an $\multApprox$-leximin-approximation if and only if $\mathbf{E}(\xApprox) \weaklyPreferred \mathbf{E}(\x)$ for all $\x \in \shallowXg$.
\end{lemma}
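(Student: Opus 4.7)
The plan is to prove the two directions of the equivalence using the upgrade/downgrade operations established in \Cref{lemma:upgrade,lemma:downgrade}, which provide a utility-scaling correspondence between $X$ and $\shallowXg$.

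For the forward direction, I assume $\xApprox$ is an $\multApprox$-leximin-approximation and fix an arbitrary $\x \in \shallowXg$. I apply the $\multApprox$-upgrade to obtain $\up{\x} \in X$, which by \Cref{lemma:upgrade} satisfies $\mathbf{E}(\up{\x}) = \frac{1}{\multApprox}\mathbf{E}(\x)$. Applying the approximation hypothesis to $\up{\x}$ yields $\mathbf{E}(\xApprox) \weaklyPreferred \multApprox \cdot \mathbf{E}(\up{\x}) = \multApprox \cdot \frac{1}{\multApprox}\mathbf{E}(\x) = \mathbf{E}(\x)$, as required.

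For the backward direction, I assume $\mathbf{E}(\xApprox) \weaklyPreferred \mathbf{E}(\x)$ for every $\x \in \shallowXg$, and fix an arbitrary $\x \in X$. This time I apply the $\multApprox$-downgrade to obtain $\down{\x} \in \shallowXg$, which by \Cref{lemma:downgrade} satisfies $\mathbf{E}(\down{\x}) = \multApprox\mathbf{E}(\x)$. By the hypothesis applied to $\down{\x}$, I get $\mathbf{E}(\xApprox) \weaklyPreferred \mathbf{E}(\down{\x}) = \multApprox \cdot \mathbf{E}(\x)$, which is exactly the $\multApprox$-leximin-approximation condition.

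There is no real obstacle here: all the heavy lifting is done by \Cref{lemma:upgrade,lemma:downgrade}, which together establish that upgrade maps $\shallowXg$ into $X$ by scaling the expected-utility vector by $1/\multApprox$, and downgrade maps $X$ into $\shallowXg$ by scaling it by $\multApprox$. Since the leximin order is preserved under multiplication by the positive constant $\multApprox$ (and this is implicit in the way $\multApprox \cdot \mathbf{E}(\x)$ is used on both sides of the argument), the two quantifications -- over $X$ with an $\multApprox$ factor, and over $\shallowXg$ without one -- encode the same condition on $\xApprox$.
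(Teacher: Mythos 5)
Your proof is correct. The backward direction is identical to the paper's. Your forward direction, however, takes a genuinely more direct route: you apply the $\multApprox$-upgrade of \Cref{lemma:upgrade} to the arbitrary $\x \in \shallowXg$ and then feed $\up{\x}$ straight into the approximation hypothesis, so the chain $\mathbf{E}(\xApprox) \weaklyPreferred \multApprox\,\mathbf{E}(\up{\x}) = \mathbf{E}(\x)$ closes in one step. The paper instead detours through a leximin-optimal distribution $\x^*$: it downgrades $\x^*$, invokes \Cref{lemma:threshold-is-opt-in-shallow} (which asserts that $\down{\x^*}$ dominates every $\x \in \shallowXg$ in the leximin order), and then appeals to transitivity of $\weaklyPreferred$. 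Your version avoids both the auxiliary optimal solution and \Cref{lemma:threshold-is-opt-in-shallow} entirely, is more symmetric with the backward direction (each direction just transforms the quantified element via the appropriate operation and applies the hypothesis), and would let one drop \Cref{lemma:threshold-is-opt-in-shallow} from the appendix if it were used only here. The paper's route is not wrong, merely less economical; it essentially re-derives, via $\x^*$, the same comparison your upgrade step gives directly.
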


\begin{proof}
    Let $\x^*$ be a leximin optimal \dist. 

    Let $\xApprox$ be an $\multApprox$-leximin-approximation.
    By definition, $\mathbf{E}(\xApprox) \weaklyPreferred \multApprox \cdot \mathbf{E}(\x)$ for all $\x \in X$.
    Since $\x^* \in X$, we get that $\mathbf{E}(\xApprox) \weaklyPreferred \multApprox \cdot \mathbf{E}(\x^*)$.
    Now, consider $\down{x^*}$. By \Cref{lemma:downgrade}, $\mathbf{E}(\down{x^*}) = \multApprox \mathbf{E}(\x^*)$.
    This implies that $\mathbf{E}(\xApprox) \weaklyPreferred \mathbf{E}(\down{x^*})$.
    Together with \Cref{lemma:threshold-is-opt-in-shallow}, and by transitivity, this means that $\mathbf{E}(\xApprox) \weaklyPreferred \mathbf{E}(\x)$ for all $\x \in \shallowXg$.

    On the other hand, let $\xApprox$ be an \dist such that $\mathbf{E}(\xApprox) \weaklyPreferred \mathbf{E}(\x)$ for all $\x \in \shallowXg$. 
    Let $\xt \in X$. By \Cref{lemma:downgrade}, $\down{\xt} \in \shallowXg$ and $\mathbf{E}(\down{\xt}) = \multApprox \mathbf{E}(\xt)$.
    As $\down{\xt} \in \shallowXg$, we get that $\mathbf{E}(\xApprox) \weaklyPreferred \mathbf{E}(\down{\xt})$; and as $\mathbf{E}(\down{\xt}) = \multApprox \mathbf{E}(\xt)$. this implies that $\mathbf{E}(\xApprox) \weaklyPreferred \multApprox \mathbf{E}(\xt)$.
    % Now, consider $\down{x^*}$. By \Cref{lemma:downgrade}, $\down{x^*} \in \shallowXg$ and so, $\mathbf{E}(\xt) \weaklyPreferred \mathbf{E}(\down{x^*})$. By \Cref{lemma:threshold-is-approx-opt}, 
\end{proof}

% \begin{proof}
%     Let $\xt$ be an $\multApprox$-leximin-approximation.
%     Then, by definition, $\mathbf{E}(\xt) \weaklyPreferred \multApprox \cdot \mathbf{E}(\x)$ for all $\x \in X$.
%     Suppose by contradiction that there exists an $\xtt \in \shallowXg$ such that $\mathbf{E}(\xtt) \succ \mathbf{E}(\xt)$.
%     Since $\x^* \in X$, by the definition of the approximation, we get that $\mathbf{E}(\xt) \weaklyPreferred \multApprox \cdot \mathbf{E}(\x^*)$.
%     By the definition of $\x^{*\multApprox}$, this means that $\mathbf{E}(\xt) \weaklyPreferred \mathbf{E}(\x^{*\multApprox})$.
%     But, this means that $\mathbf{E}(\xtt) \succ \mathbf{E}(\xt) \weaklyPreferred  \mathbf{E}(\x^{*\multApprox})$ --- in contradiction to Lemma \ref{lemma:threshold-is-approx-opt}.

%     On the other hand, let $\xt$ be an \dist such that $\mathbf{E}(\xt) \weaklyPreferred \mathbf{E}(\x)$ for all $\x \in \shallowXg$. As $\x^{*\multApprox} \in \shallowXg$, we get that $\mathbf{E}(\xt) \weaklyPreferred \mathbf{E}(\x^{*\multApprox})$. By the definition of $\x^{*\multApprox}$, this means that $\mathbf{E}(\xt) \weaklyPreferred \multApprox \mathbf{E}(\x^*)$. By the optimality of $\x^*$, this mans that $\mathbf{E}(\xt) \weaklyPreferred \multApprox \mathbf{E}(\x)$ for $\x \in X$ --- as required. 
% \end{proof}

\subsection{Proof of \Cref{lemma:alg1-shallow-solver}}
We can now use the new definition to prove a slightly different version of our  \Cref{lemma:alg1-shallow-solver}.

\begin{lemma}\label{lemma:main-general}
    Given an $\multApprox$-shallow-solver for \ref{eq:compact-OP}, Algorithm \ref{alg:basic-ordered-Outcomes} returns an \dist $\retSol$ such that $\mathbf{E}(\retSol) \weaklyPreferred \mathbf{E}(\x)$ for all $\x \in \shallowXg$.
\end{lemma}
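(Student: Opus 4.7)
I would prove the statement directly rather than by combining \Cref{lemma:aprox-eqv-def} with \Cref{lemma:alg1-shallow-solver}, since the proof of \Cref{lemma:alg1-shallow-solver} is precisely what this appendix is delivering. The starting point is the feasibility invariant that $\mathbf{x}^n$ lies in $\feasC$ for every one of the $n$ programs solved during the run of \Cref{alg:basic-ordered-Outcomes}: it is feasible for the final program by construction, and the constraint set of the $t$-th program is a subset of that of the $(t+1)$-th. A direct consequence is that $\mathbf{x}^n$ satisfies constraint (\progCompact.3) with $\ell=t$ at iteration $t+1$, which after rearranging $\sum_{i=1}^t \expectedValBy{i}{\mathbf{x}^n} \geq \sum_{i=1}^t z_i$ amounts to saying that its objective value $\objCx{\mathbf{x}^n}$ at iteration $t$ is at least $z_t$ for every $t<n$; for $t=n$ we have the stronger equality $\objCx{\mathbf{x}^n} = z_n$ by definition of $z_n$.

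Next I would assume toward contradiction that some $\mathbf{x}^* \in \shallowXg$ satisfies $\mathbf{E}(\mathbf{x}^*) \succ \mathbf{E}(\mathbf{x}^n)$, and let $k \in [n]$ be the minimal index at which $\expectedValBy{k}{\mathbf{x}^*} > \expectedValBy{k}{\mathbf{x}^n}$, so that $\expectedValBy{i}{\mathbf{x}^*} = \expectedValBy{i}{\mathbf{x}^n}$ for all $i<k$. A short check shows that $\mathbf{x}^*$ is feasible for the program at iteration $k$: it is an \dist, and the constraints (\progCompact.3) at iteration $k$ compare $\sum_{i=1}^\ell \expectedValBy{i}{\mathbf{x}^*}$ to $\sum_{i=1}^\ell z_i$ for $\ell<k$, and these left-hand sides coincide with those for $\mathbf{x}^n$, which satisfies the same constraints by the invariant above. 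Since $\mathbf{x}^* \in \shallowXg$ by hypothesis, we have $\mathbf{x}^* \in \feasC \cap \shallowXg$ at iteration $k$, so the $\multApprox$-shallow-solver guarantee yields $z_k \geq \objCx{\mathbf{x}^*}$.

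Finally I would close the contradiction by expanding both objectives. Chaining $\objCx{\mathbf{x}^n} \geq z_k$ (from the feasibility invariant, or from the definition when $k=n$) with $z_k \geq \objCx{\mathbf{x}^*}$ gives $\sum_{i=1}^k \expectedValBy{i}{\mathbf{x}^n} \geq \sum_{i=1}^k \expectedValBy{i}{\mathbf{x}^*}$, and cancelling the $i<k$ terms (which agree by the choice of $k$) collapses this to $\expectedValBy{k}{\mathbf{x}^n} \geq \expectedValBy{k}{\mathbf{x}^*}$, directly contradicting the choice of $k$. The main obstacle is conceptual rather than computational: the shallow solver is not required to return a vector in $\shallowXg$, so its output $\mathbf{x}^k$ cannot be compared to $\mathbf{x}^*$ directly; the argument works only because the ordered-outcomes framework records the achieved value $z_k$ and forces every subsequent iterate --- in particular $\mathbf{x}^n$ --- to retain it as a lower bound via the freshly added $\ell=k$ instance of (\progCompact.3), which is precisely what bridges $\mathbf{x}^*$ and $\mathbf{x}^n$ through $z_k$.
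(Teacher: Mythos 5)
Your proposal matches the paper's proof essentially step for step: the same feasibility invariant for $\mathbf{x}^n$ across all $n$ programs, the same contradiction hypothesis with index $k$, the same argument that $\mathbf{x}^*$ is feasible for the iteration-$k$ program, and the same appeal to the shallow-solver guarantee at iteration $k$; the only cosmetic difference is that the paper expands $\objCx{\mathbf{x}^*} > z_k$ directly while you chain $\objCx{\mathbf{x}^n} \geq z_k \geq \objCx{\mathbf{x}^*}$ and cancel the shared $i<k$ terms, which is the same inequality read in the other direction. The concluding remark about why the argument must pass through the recorded value $z_k$ rather than the solver's output $\mathbf{x}^k$ is a nice observation but not a deviation from the paper's route.
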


Clearly, together with \Cref{lemma:aprox-eqv-def}, this proves our \Cref{lemma:alg1-shallow-solver} --- as required.

\begin{proof}
As a first observation, we note that $\x^n$ returned by the algorithm is a solution for \ref{eq:general-comp} that was solved in the last iteration. However, as constraints are only added along the way, it implies that:
\begin{observation}\label{obs:retsol-is-feas-to-all}
    $\retSol$ is a solution for the program \ref{eq:general-comp} that was solved in each one of the iterations $t=1,\ldots,n$. 
\end{observation}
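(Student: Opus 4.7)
The plan is to follow the two-step structure sketched in the proof fragment: first establish Observation~\ref{obs:retsol-is-feas-to-all} (that $\retSol$ is feasible for every iteration's program), and then use it to argue by contradiction that no $\x \in \shallowXg$ can strictly leximin-dominate $\retSol$.

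For the observation, I would note that the $(t+1)$-st program differs from the $t$-th one only by the addition of the $(\progCompact.3)$ constraint at $\ell = t$, namely $\sum_{i=1}^{t}\expectedValBy{i}{\x} \ge \sum_{i=1}^{t} z_i$. The feasible regions therefore form a decreasing chain across iterations, so the fact that $\retSol = \x^n$ is returned by the shallow solver at iteration $n$ (hence satisfies every constraint of the $n$-th program) immediately gives feasibility for every earlier iteration $t \le n$ as well. In particular, $\retSol \in X$, so $\retSol$ is a valid \dist.

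For the main leximin comparison, suppose for contradiction that some $\x \in \shallowXg$ satisfies $\mathbf{E}(\x) \leximinPreferred \mathbf{E}(\retSol)$; this is the only alternative to $\mathbf{E}(\retSol) \weaklyPreferred \mathbf{E}(\x)$ by Observation~\ref{obs:leximin-order-total}. Let $k \in [n]$ be the smallest index with $\expectedValBy{k}{\x} > \expectedValBy{k}{\retSol}$, so that $\expectedValBy{i}{\x} = \expectedValBy{i}{\retSol}$ for every $i < k$. I would first verify that $\x$ is feasible for the program at iteration $k$: it is an \dist by hypothesis, and for each $\ell < k$ the prefix constraint $\sum_{i=1}^{\ell}\expectedValBy{i}{\x} \ge \sum_{i=1}^{\ell} z_i$ holds because its left-hand side equals $\sum_{i=1}^{\ell}\expectedValBy{i}{\retSol}$, which satisfies the same bound by Observation~\ref{obs:retsol-is-feas-to-all}. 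Together with the hypothesis $\x \in \shallowXg$, this places $\x$ inside $\feasC \cap \shallowXg$ at iteration $k$.

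Next I would bound the objective value of $\x$ at iteration $k$:
\[
\objCx{\x} \;=\; \sum_{i=1}^{k}\expectedValBy{i}{\x} - \sum_{i=1}^{k-1} z_i \;>\; \sum_{i=1}^{k}\expectedValBy{i}{\retSol} - \sum_{i=1}^{k-1} z_i \;\ge\; z_k,
\]
where the strict step uses $\expectedValBy{k}{\x} > \expectedValBy{k}{\retSol}$ together with equality below $k$, and the final step uses the feasibility of $\retSol$ at iteration $k+1$ (for $k<n$), which is exactly $\sum_{i=1}^{k}\expectedValBy{i}{\retSol} \ge \sum_{i=1}^{k} z_i$; the case $k=n$ is immediate from $z_n := \objCx{\retSol}$ at the final iteration. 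Hence $\objCx{\x} > z_k = \objCx{\x^k}$ with $\x \in \feasC \cap \shallowXg$, contradicting the shallow solver's guarantee at iteration $k$. The main obstacle I expect is keeping the indexing of the sorted expected-utility vectors straight across the two uses of Observation~\ref{obs:retsol-is-feas-to-all} — once to transfer feasibility from $\retSol$ to $\x$ at iteration $k$, and once to lower-bound the displayed quantity by $z_k$ — and, for the latter, separating the boundary case $k=n$ (which invokes the definition of $z_n$ directly) from the generic case $k<n$ (which invokes the $(k+1)$-st program's additional constraint).
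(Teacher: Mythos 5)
Your argument for the observation — that the feasible regions form a decreasing chain because each iteration only adds a constraint, so feasibility of $\retSol$ at iteration $n$ transfers to every earlier iteration — is exactly the paper's reasoning. The remainder of your proposal (the contradiction argument for the full lemma) also mirrors the paper's proof of \Cref{lemma:main-general}, including the case split between $k<n$ and $k=n$ when deriving $\sum_{i=1}^{k}\expectedValBy{i}{\retSol} - \sum_{i=1}^{k-1} z_i \geq z_k$.
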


Now, suppose by contradiction that there exists a $\xt \in \shallowXg$ such that $\mathbf{E}(\xt) \succ \mathbf{E}(\retSol)$.
By definition, there exists an integer $1 \leq k\leq n $ such that $\expectedValBy{i}{\xt} = \expectedValBy{i}{\retSol}$ for $i \leq k$, and $\expectedValBy{k}{\xt} > \expectedValBy{k}{\retSol}$.

As $\retSol$ is a solution for the program \ref{eq:general-comp} that was solved in the last iteration ($t=n$), we can conclude that $\sum_{i=1}^k \expectedValBy{i}{\retSol} \geq \sum_{i=1}^{k} z_i$ (by constraint (\ref{eq:general-comp}.3) if $k<n$ and by its objective otherwise).
This implies that the objective value of $\retSol$ for the program \ref{eq:general-comp} that was solved in $k$-th iteration at least $z_t$:
\begin{align}\label{eq:fk-to-zk}
    \sum_{i=1}^k \expectedValBy{i}{\retSol} - \sum_{i=1}^{k-1} z_i \geq z_k
\end{align}

% Consider the program \ref{eq:general-comp} that was solved in $k$.
% By \Cref{obs:retsol-is-feas-to-all}, $\retSol$ is a solution for this program.
We shall now see that $\xt$ is also a solution to this problem. Constraints (\ref{eq:general-comp}.1--2) are satisfied since $\xt \in \shallowXg \subseteq X$.
For Constraint (\ref{eq:general-comp}.3), we notice that the $(k-1)$ least expected values of $\xt$ equals to those of $\retSol$, which means that, for any $\ell < k$:
\begin{align*}
    \sum_{i=1}^{\ell} \expectedValBy{i}{\xt}  = \sum_{i=1}^{\ell} \expectedValBy{i}{\retSol} \geq \sum_{i=1}^{\ell} z_i
\end{align*}
Therefore, $\xt$ is also a solution for this program, and its objective value for it is:
\begin{align*}
    \sum_{i=1}^k \expectedValBy{i}{\xt} - \sum_{i=1}^{k-1} z_i
\end{align*}
We shall now see that this means that the objective value of $\xt$ is \emph{strictly-higher} than the objective value of the solution returned by the solver in this iteration, namely $z_k$.
\begin{align*}
    &\sum_{i=1}^k \expectedValBy{i}{\xt} - \sum_{i=1}^{k-1} z_i = \sum_{i=1}^{k-1} \expectedValBy{i}{\xt} + \expectedValBy{k}{\xt}- \sum_{i=1}^{k-1} z_i \\
    &\equWithExp{\text{By definition of $\xt$ for $i< k$}}{= \sum_{i=1}^{k-1} \expectedValBy{i}{\retSol} + \expectedValBy{k}{\xt}- \sum_{i=1}^{k-1} z_i}\\
    &\equWithExp{\text{By definition of $\xt$ for $k$}}{> \sum_{i=1}^{k-1} \expectedValBy{i}{\retSol} + \expectedValBy{k}{\retSol}- \sum_{i=1}^{k-1} z_i}\\
    &\equWithExp{\text{By \Cref{eq:fk-to-zk}}}{\geq z_t}
\end{align*}

But this contradicts the guarantees of our \emph{shallow} solver --- by definition, the objective value of the solution returned by the solver, namely $z_k$, is at least as high as the objective of any solution in $\feasC \cap \shallowXg$. 

\end{proof}

% \subsection{Binary Search Error Analysis}

\subsection{A more general theorem}
The proof of \Cref{lemma:alg1-shallow-solver} does not use the specific structure of $X$, $\shallowXg$, or the functions $E_i$.
Therefore, we have in fact proved a more general theorem.

Let $X$ be any subset of $\mathbb{R}^m$ for some $m \in \mathbb{N}$,
let $Y$ be any subset of $X$,
and let $E_i$ for $i \in N$ be any functions from $X$ to $\mathbb{R}_{\geq 0}$.

Define a \emph{$Y$-shallow-solver for \ref{eq:compact-OP}}
as a solver that returns a solution $\x^t \in \feasC$ such that $\objCx{\x^t}\geq \objCx{\x}$ for all solutions $\x \in \feasC \cap Y$. Then:
\begin{lemma}
    Given a $Y$-shallow-solver for \ref{eq:compact-OP}, Algorithm \ref{alg:basic-ordered-Outcomes} returns an $\x^n \in X$ such that $\mathbf{E}(\x^n) \weaklyPreferred \mathbf{E}(\x)$ for all $\x \in Y$.
\end{lemma}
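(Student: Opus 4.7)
The plan is to imitate the proof of \Cref{lemma:main-general} almost verbatim, taking care only to track which properties of $X$, $\shallowXg$, and the $E_i$ were actually used. In that proof, the functions $E_i$ appeared only through the defining constraints (\progCompact.3) and through the expected-vector comparison; the sets $X$ and $\shallowXg$ appeared only via the two statements ``$\retSol \in X$'' and ``the witness $\xt$ lies in $\shallowXg$.'' Replacing $X$ by an arbitrary ambient set, $\shallowXg$ by an arbitrary subset $Y \subseteq X$, and reading the $E_i$ as arbitrary nonneg.\ functionals, none of the algebraic manipulations change.

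Concretely, I would first reuse \Cref{obs:retsol-is-feas-to-all}: because Algorithm~\ref{alg:basic-ordered-Outcomes} only \emph{adds} constraints from iteration to iteration, $\retSol$ is feasible for the program solved in every iteration $t=1,\dots,n$. I would then assume for contradiction that some $\xt \in Y$ satisfies $\mathbf{E}(\xt) \leximinPreferred \mathbf{E}(\retSol)$, and let $k$ be the smallest index at which the sorted expected-value vectors differ, so that $\orderedVby{i}(\xt)=\orderedVby{i}(\retSol)$ for $i<k$ and $\orderedVby{k}(\xt)>\orderedVby{k}(\retSol)$.

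Next, I would verify that $\xt$ is feasible for the $k$-th program. The ``$\x \in X$'' part of (\progCompact.1--2) holds because $\xt \in Y \subseteq X$; and for each $\ell<k$, constraint (\progCompact.3) is inherited from $\retSol$, since $\sum_{i=1}^{\ell}\expectedValBy{i}{\xt}=\sum_{i=1}^{\ell}\expectedValBy{i}{\retSol}\geq\sum_{i=1}^{\ell}z_i$ by \Cref{obs:retsol-is-feas-to-all}. Thus $\xt \in \feasC \cap Y$ for the $k$-th program. Comparing objective values, the same telescoping calculation as in \Cref{lemma:main-general} gives
\begin{align*}
    \sum_{i=1}^{k}\expectedValBy{i}{\xt} - \sum_{i=1}^{k-1} z_i
    > \sum_{i=1}^{k}\expectedValBy{i}{\retSol} - \sum_{i=1}^{k-1} z_i
    \geq z_k,
\end{align*}
which strictly exceeds the objective value $z_k$ produced by the $Y$-shallow-solver at iteration $k$, contradicting its defining guarantee that no element of $\feasC\cap Y$ beats $z_k$.

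I do not expect any real obstacle: the proof is essentially a renaming exercise. The one place to be careful is to confirm that the shallow-solver guarantee is still usable, namely that the membership $\xt \in \feasC\cap Y$ (rather than $\feasC\cap\shallowXg$) is exactly what the generalized definition of a $Y$-shallow-solver demands — so the contradiction step goes through unchanged.
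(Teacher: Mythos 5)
Your proposal is correct and follows exactly the same argument as the paper, whose entire proof of this lemma is the one-line remark that it is identical to the proof of \Cref{lemma:main-general} (the $\shallowXg$-specialization). Your walk-through — feasibility of $\retSol$ across all iterations, constructing the witness $\xt\in\feasC\cap Y$ at index $k$, and deriving the strict objective gap above $z_k$ to contradict the $Y$-shallow-solver guarantee — is precisely the argument the paper is implicitly invoking.
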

The proof is identical to that of \Cref{lemma:alg1-shallow-solver}. 

%  Notice that the subset $\shallowXg$ does not necessarily rely on the approximation parameter $\multApprox$; here it used purely to denote the subset.

% Formally, 
% \begin{lemma}
%     Given an $\multApprox$-shallow-solver for \ref{eq:compact-OP}, Algorithm \ref{alg:basic-ordered-Outcomes} returns an $\x \in X$ such that $\mathbf{E}(\x) \weaklyPreferred \mathbf{E}(\x)$ for all $\x \in \shallowXg$.
% \end{lemma}

% \newpage
% \newpage

% \section{Proof of Lemma \ref{lemma:alg1-shallow-solver}}\label{apx:prf-alg1-with-shallow}
% Given an $\multApprox$-shallow-solver for \eqref{eq:compact-OP}, Algorithm \ref{alg:basic-ordered-Outcomes} returns an $\multApprox$-leximin-approximation. 

% \section{Proof of Lemma \ref{lemma:red-shallow-solver-to-feas-test}}\label{apx:prf-shallow-solver-to-feas-test}
% Given an $\multApprox$-approximate-feasibility-oracle for \eqref{eq:compact-OP} (***), and an $\multApprox$-approximate black-box for the utilitarian welfare (*). a shallow solver for \eqref{eq:compact-OP} (**) can be obtained.

% \section{Proof of Lemma \ref{lemma:red-feas-test-to-approx-P2}}\label{apx:prf-feas-test-to-approx-P2}
% Given a polynomial-size vector $\mathbf{x'}$ that is $\frac{1}{\multApprox}$-approximately-optimal for \eqref{eq:min-sum-OP}.
% Then, an $\multApprox$-approximate-feasibility-oracle for \eqref{eq:compact-OP} can be obtained.

% \eden{to return to the following if there is time (Hierarchy of the Solvers)...}
% \iffalse

\subsection{Hierarchy of the Solvers for \ref{eq:compact-OP}}\label{apx:hierarchy-solvers}
Consider the two non-exact types: approximately-optimal, and shallow.
First, both are relaxations of the exact solver. 
One can easily verify it by taking $\multApprox=1$ for the approximately-optimal solver and by taking $\shallowXg = X$ for the shallow solver.

In addition, we shall now prove that every $\multApprox$-approximately-optimal solver
is an $\multApprox$-shallow-solver.

\begin{claim}\label{lemma:approx-to-shallow}
    Any solution $\x^t$ that satisfies the requirements of the approximately-optimal solver also satisfies the requirements of the shallow solver.
\end{claim}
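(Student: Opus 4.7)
The plan is to take an arbitrary $\x \in \feasC \cap \shallowXg$ and exhibit an ``upgraded'' companion $\xt \in \feasC$ whose objective value is large enough that applying the $\multApprox$-approximate guarantee at $\xt$ already dominates $\objCx{\x}$. Specifically, given $\x \in \feasC \cap \shallowXg$, set $\xt := \up{\x}$. By \Cref{lemma:upgrade} we have $\xt \in X$ and $\mathbf{E}(\xt) = \frac{1}{\multApprox}\mathbf{E}(\x)$, so in particular $\expectedValBy{i}{\xt} = \frac{1}{\multApprox}\expectedValBy{i}{\x}$ for every $i$.

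Next I would verify that $\xt \in \feasC$. Constraints (\progCompact.1--2) are automatic because $\xt \in X$. For (\progCompact.3), for every $\ell \in [t-1]$,
\begin{align*}
\sum_{i=1}^{\ell}\expectedValBy{i}{\xt} = \frac{1}{\multApprox}\sum_{i=1}^{\ell}\expectedValBy{i}{\x} \geq \frac{1}{\multApprox}\sum_{i=1}^{\ell} z_i \geq \sum_{i=1}^{\ell} z_i,
\end{align*}
where the first inequality uses $\x\in \feasC$ and the second uses $\multApprox \le 1$ together with $z_i\ge 0$, established below.

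The key inequality now falls out of a direct computation. Since the approximately-optimal solver guarantees $\objCx{\x^t} \geq \multApprox \cdot \objCx{\xt}$, it suffices to show $\multApprox\cdot\objCx{\xt} \geq \objCx{\x}$. Expanding,
\begin{align*}
\multApprox\cdot\objCx{\xt} - \objCx{\x}
&= \multApprox\Bigl(\tfrac{1}{\multApprox}\sum_{i=1}^{t}\expectedValBy{i}{\x} - \sum_{i=1}^{t-1} z_i\Bigr) \\
&\quad - \Bigl(\sum_{i=1}^{t}\expectedValBy{i}{\x} - \sum_{i=1}^{t-1} z_i\Bigr) \\
&= (1-\multApprox)\sum_{i=1}^{t-1} z_i,
\end{align*}
which is non-negative provided $\sum_{i=1}^{t-1} z_i \geq 0$.

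The one remaining obligation, and the main (small) obstacle, is to confirm that the $z_i$ produced along the run of \Cref{alg:basic-ordered-Outcomes} are non-negative. I would do this by induction on $t$: trivially $z_1 = \orderedVby{1}(\x^1) \geq 0$ since utilities are non-negative; and for $t\ge 2$, constraint (\progCompact.3) at $\ell = t-1$ gives $\sum_{i=1}^{t-1}\expectedValBy{i}{\x^t} \geq \sum_{i=1}^{t-1} z_i$, so
\begin{align*}
z_t = \sum_{i=1}^{t}\expectedValBy{i}{\x^t} - \sum_{i=1}^{t-1} z_i \geq \orderedVby{t}(\x^t) \geq 0.
\end{align*}
Combining the inequalities above yields $\objCx{\x^t} \geq \objCx{\x}$ for every $\x \in \feasC \cap \shallowXg$, which is exactly the shallow-solver guarantee.
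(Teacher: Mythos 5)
Your proof is correct and follows essentially the same route as the paper's: upgrade $\x$ to $\up{\x}$, verify feasibility, and apply the $\multApprox$-approximation guarantee at the upgraded point. One genuine improvement in your write-up is that you make explicit the hypothesis $\sum_{i=1}^{t-1} z_i \geq 0$ (and supply the short induction establishing it along the run of \Cref{alg:basic-ordered-Outcomes}): the paper's own proof uses this silently in the step $\frac{1}{\multApprox}\sum_{i=1}^{t}\expectedValBy{i}{\x} - \sum_{i=1}^{t-1} z_i \geq \frac{1}{\multApprox}\bigl(\sum_{i=1}^{t}\expectedValBy{i}{\x} - \sum_{i=1}^{t-1} z_i\bigr)$, which is equivalent to $\sum_{i=1}^{t-1} z_i \geq 0$, without acknowledging it. (A tiny cosmetic difference: for the feasibility check of constraint (\progCompact.3) you pass through $\frac{1}{\multApprox}\sum z_i \geq \sum z_i$, which again needs $z_i\ge 0$, whereas the paper uses $\frac{1}{\multApprox}\sum \expectedValBy{i}{\x} \geq \sum \expectedValBy{i}{\x}$, which only needs nonnegativity of the expected utilities; both are fine once $z_i\ge 0$ is established.)
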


\begin{proof}
    Let $\x^t$ be a solution that satisfies the requirements of the approximately-optimal solver.
    By definition, $\x^t \in \feasC$ and $\objCx{\x^t} \geq \multApprox \cdot\objCx{\x}$ for all $\x \in \feasC$.
    We need to prove that $\objCx{\x^t} \geq \objCx{\x}$ for all $\x \in \feasC \cap \shallowXg$.

    Suppose by contradiction that there exists $\x \in \feasC \cap \shallowXg$ such that $\objCx{\x} > \objCx{\x^t}$.
    % Next, we construct $(\xt,\ztVar')$ as follows. $x'_j = \frac{1}{\multApprox}x_j$ for $j \neq d$ and $x'_d = 1-\frac{1}{\multApprox}\sum_{j\neq d}x_j$; and $\ztVar' = \frac{1}{\multApprox} \ztVar$.
    Consider $\up{\x}$. By \Cref{lemma:upgrade}, $\up{\x} \in X$ and $\mathbf{E}(\up{\x}) = \frac{1}{\multApprox}\mathbf{E}(\x)$.
    We shall now see that $\up{\x} \in \feasC$. 
    Constraints (\ref{eq:general-comp}.1--2) are satisfied as $\up{\x} \in X$.
   Constraint (\ref{eq:general-comp}.3) is satisfied as it does by $\x$ and as all the expected values of $\up{\x}$ are at least as those of $\x$; specifically, for any $\ell < t$:
    \begin{align*}
        \sum_{i=1}^{\ell} \expectedValBy{i}{\up{\x}}  = \frac{1}{\multApprox} \sum_{i=1}^{\ell} \expectedValBy{i}{\x} \geq \sum_{i=1}^{\ell} \expectedValBy{i}{\x} \geq \sum_{i=1}^{\ell} z_i
    \end{align*}
    However, $\objCx{\up{\x}} > \frac{1}{\multApprox}\objCx{\x^t}$:
    \begin{align*}
        &\objCx{\up{\x}} = \sum_{i=1}^t \expectedValBy{i}{\up{\x}} - \sum_{i=1}^{t-1} z_i\\
        &= \frac{1}{\multApprox} \expectedValBy{i}{\x} - \sum_{i=1}^{t-1} z_i \geq \frac{1}{\multApprox} \left(\expectedValBy{i}{\x} - \sum_{i=1}^{t-1} z_i\right)\\
        &= \frac{1}{\multApprox} \objCx{\x}> \frac{1}{\multApprox} \objCx{\x^t}
    \end{align*}
    Which means that $\multApprox \cdot\objCx{\up{\x}} > \objCx{\x^t}$ --- in contradiction to $\x^t$ being approximately-optimal.

    \end{proof}

Next, we prove that the opposite is true for $t=1$, but only for $t=1$.
\begin{claim}\label{lemma:t=1-shallow-to-approx}
    For $t=1$, any solution $\x^t$ that satisfies the requirements of the shallow solver also satisfies the requirements of the approximately-optimal solver.
\end{claim}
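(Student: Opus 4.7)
The plan is to exploit the fact that for $t=1$, the program \ref{eq:general-comp} has no constraints of type (\progCompact.3) (since $[t-1] = \emptyset$) and its objective is simply $\expectedValBy{1}{\x}$, the smallest expected utility. Thus $\feasC = X$, and a shallow solver's guarantee reduces to: $\expectedValBy{1}{\x^1} \geq \expectedValBy{1}{\x}$ for every $\x \in \shallowXg$.

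First I would fix an arbitrary $\x \in \feasC = X$ and consider its $\multApprox$-downgrade $\down{\x}$. By \Cref{lemma:downgrade}, $\down{\x} \in \shallowXg$ and $\mathbf{E}(\down{\x}) = \multApprox\,\mathbf{E}(\x)$; in particular $\expectedValBy{1}{\down{\x}} = \multApprox\,\expectedValBy{1}{\x}$. Applying the shallow guarantee to $\down{\x}$ yields
\[
\expectedValBy{1}{\x^1} \;\geq\; \expectedValBy{1}{\down{\x}} \;=\; \multApprox\,\expectedValBy{1}{\x},
\]
which is exactly the approximately-optimal guarantee for $t=1$. Since $\objCx{\x} = \expectedValBy{1}{\x}$ when $t=1$, this gives $\objCx{\x^1} \geq \multApprox\,\objCx{\x}$ for every $\x \in \feasC$, as required.

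The argument is short because for $t=1$ the objective function is homogeneous of degree one in the expected-utility vector, so multiplying $\x$ by $\multApprox$ on the non-dummy states scales the objective by exactly $\multApprox$. I do not expect any real obstacle here; the only subtlety to flag is why this does \emph{not} extend to $t > 1$: for larger $t$ the objective includes the additive terms $-\sum_{i=1}^{t-1} z_i$ and the constraints (\progCompact.3) also involve these constants, so downgrading $\x$ scales the expected utilities but not the $z_i$'s, and the clean scaling argument breaks. Highlighting this gap is natural, since the next item in the paper presumably shows by example that the converse fails for $t \geq 2$.
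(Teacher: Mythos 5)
Your proof is correct and follows essentially the same route as the paper's: both apply the $\multApprox$-downgrade operation from \Cref{lemma:downgrade} to a feasible $\x$, use the fact that for $t=1$ constraint (\progCompact.3) is vacuous so $\feasC = X$ and hence $\down{\x} \in \feasC \cap \shallowXg$, and then observe the objective scales by exactly $\multApprox$. The only stylistic difference is that you argue directly while the paper phrases it as a proof by contradiction; the mathematical content is identical, and your closing remark about why the argument breaks for $t > 1$ accurately anticipates the paper's subsequent counterexample.
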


\begin{proof}
    Let $\x^t$ be a solution that satisfies the requirements of the shallow solver. 
     By definition, $\x^t \in \feasC$ and $\objCx{\x^t} \geq \objCx{\x}$ for all $\x \in \feasC \cap \shallowXg$.
    We need to prove that $\objCx{\x^t} \geq \multApprox \cdot\objCx{\x}$ for all $\x \in \feasC$.
    
    Suppose by contradiction that there exists $\x \in \feasC$ such that $\multApprox \cdot\objCx{\x} > \objCx{\x^t}$.
    Consider $\down{\x}$. By \Cref{lemma:downgrade}, $\down{\x} \in \shallowXg$ and $\mathbf{E}(\down{\x}) = \multApprox \mathbf{E}(\x)$.
    We shall now see that $\down{\x} \in \feasC$. 
    Constraints (\ref{eq:general-comp}.1--2) are satisfied as $\down{\x} \in \shallowXg \subseteq X$.
   Constraint (\ref{eq:general-comp}.3) is empty for $t=1$ and is therefore vacuously satisfied by $\down{\x}$.
   However, as $t=1$, we get that $\objCx{\down{\x}} > \objCx{\x^t}$:
    \begin{align*}
        &\objCx{\down{\x}} = \expectedValBy{1}{\down{\x}}\\
        &= \multApprox\expectedValBy{1}{\x}= \multApprox \cdot \objCx{\x}>  \objCx{\x^t}
    \end{align*}
    This in contradiction to $\x^t$ being the returned solution by the shallow solver.
    
\end{proof}

Lastly, we prove that for $t=1$ a solution returned by the shallow solver might not satisfies the requirements of the approximately-optimal solver.
% We prove this with an example.
This implies that the requirements for our algorithm are weaker than those of \citet{hartman2023leximin}.

\begin{claim}
    There exists an instance for which the solution returned by the shallow solver does not satisfy the requirements of the approximately-optimal solver.
\end{claim}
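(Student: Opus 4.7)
The strategy is to exhibit an explicit small instance with $t = 2$ where a legitimate output of a $\multApprox$-shallow-solver fails the $\multApprox$-approximately-optimal requirement. The previous claim established the two notions coincide at $t = 1$ by downgrading a purported violator $\x$ by factor $\multApprox$ and observing that $\down{\x}$ lies in $\feasC$ with objective $\multApprox \cdot \objCx{\x}$. For $t \geq 2$ the objective of \ref{eq:compact-OP} is $\sum_{i=1}^{t} \expectedValBy{i}{\x} - \sum_{i=1}^{t-1} z_i$; downgrading multiplies the first sum by $\multApprox$ but leaves the constant term $\sum z_i$ untouched, so $\objCx{\down{\x}}$ undershoots $\multApprox \cdot \objCx{\x}$ by exactly $(1-\multApprox)\sum_{i=1}^{t-1} z_i$. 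Hence any strictly positive $z_1$ should open a gap between the two guarantees, and the plan is to turn this gap into a concrete counterexample.

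Concretely, I would take $n = 2$, $\multApprox = 1/2$, and $S = \{s_d, s_1\}$, with the degenerate state $s_d$ giving $(0,0)$ and a single additional state $s_1$ giving $(1,1)$; then fix $t = 2$ and $z_1 = \epsilon$ for some $\epsilon \in (0, 1/2]$ as part of the input to the shallow solver. Since both agents receive the same utility in every state, any \dist $\x \in X$ satisfies $E_1(\x) = E_2(\x) = x_1$, so the program reduces to maximizing $2 x_1 - \epsilon$ subject to $x_1 \geq \epsilon$ (from constraint $(\ref{eq:compact-OP}.3)$) and $x_1 \in [0,1]$. A one-line calculation then gives $\max_{\x \in \feasC} \objCx{\x} = 2 - \epsilon$, attained at $x_1 = 1$, while $\max_{\x \in \feasC \cap \shallowXg} \objCx{\x} = 1 - \epsilon$, attained at $x_1 = 1/2$ (which lies in $\shallowXg$ since $\sum_{j \neq d} x_j = 1/2 \leq \multApprox$).

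To finish, I would take $\x^t = (x_d, x_1) = (1/2, 1/2)$: it lies in $\feasC \cap \shallowXg$, and its objective $1 - \epsilon$ matches the maximum over $\feasC \cap \shallowXg$, so it is a legitimate output of some $\multApprox$-shallow-solver. However, $\multApprox \cdot \max_{\x \in \feasC} \objCx{\x} = (1/2)(2 - \epsilon) = 1 - \epsilon/2 > 1 - \epsilon = \objCx{\x^t}$, so $\x^t$ violates the $\multApprox$-approximately-optimal requirement. There is no genuine obstacle in the argument; the only point to get right is that the constant $-\sum z_i$ inside the objective of \ref{eq:compact-OP} is what breaks the downgrade-and-rescale proof of the previous claim, and an instance as small as two states and one nontrivial $z_1 > 0$ already suffices to turn this loss into an actual separation.
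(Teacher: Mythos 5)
Your proof is correct and takes essentially the same approach as the paper (exhibiting an explicit counterexample at iteration $t=2$), but with a simpler instance: two states instead of three, and both agents with identical utilities, which collapses the analysis to a one-variable calculation. You also add a nice conceptual explanation the paper does not spell out: the gap arises precisely because downgrading a purported violator $\x$ scales $\sum_{i=1}^{t}\expectedValBy{i}{\x}$ by $\multApprox$ but leaves the constant $\sum_{i=1}^{t-1}z_i$ untouched, so the reduction-to-$t=1$ argument in Claim B.9 loses exactly $(1-\multApprox)\sum_{i=1}^{t-1}z_i$, which vanishes iff all $z_i = 0$.

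One thing the paper does that you skip: it traces through Algorithm \ref{alg:basic-ordered-Outcomes} from $t=1$ to establish that the chosen $z_1$ is one a shallow solver could actually have produced, so the counterexample arises in a genuine run. You instead treat $z_1$ as a free input parameter. Both readings of the claim are defensible (the solver specification does take arbitrary rationals $z_1,\ldots,z_{t-1}$), and in your instance the endpoint $\epsilon = 1/2$ of your range is indeed the objective value a shallow solver could return at $t=1$ (since $\max_{\x\in X\cap\shallowXg}\expectedValBy{1}{\x}=1/2$), so the separation does survive the stricter algorithmic reading. It would have been cleaner to say so explicitly rather than allow the entire interval $\epsilon \in (0,1/2]$, most of which is not realizable in a run of the algorithm.
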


\begin{proof}
    Let $\multApprox = 0.9$, $N = \{1,2\}$, $S = \{s_1,s_2,s_d\}$ and $u_1,u_2$ as follows:
    \begin{align*}
        &u_1(s_1) = 10 && u_1(s_2) = 0  &&& u_1(s_d) = 0 \\
        &u_2(s_1) = 10 && u_1(s_2) = 1000  &&& u_1(s_d) = 0 
    \end{align*}

    At the first iteration, $t=1$, the optimal solution that maximizes the minimum expected value is $(1,0,0)$ with objective value $10$. 
    However, the optimal solution \emph{among $\shallowXg$} is $(0.9,0,0.1)$ with objective value $9$.
    Thus, any solution with a minimum expected value $9$ satisfies the requirements of the shallow solver (and also the requirements of the approximately-optimal solver by \cref{lemma:t=1-shallow-to-approx}).
    % However, the optimal solution that maximizes the minimum expected value \emph{among $\shallowXg$} is $(0.9,0,0.1)$ with objective value $9$.
    %
    %
    % By \Cref{lemma:approx-to-shallow} and \Cref{lemma:t=1-shallow-to-approx}, in the first iteration, both the shallow solver and approximately-optimal solver have the same requirements
    %
    %
    
    Suppose that the solver returned this solution, $(0.9,0,0.1)$, and so $z_1 := 9$.
    Then, at the second iteration $t=2$, Constraint (\ref{eq:general-comp}.3) says that the smallest expected value is at least $9$.
    As $s_1$ is the only selection that gives agent $1$ a positive utility of $10$, any solution that satisfies this constraint must give this selection a probability of at least $0.9$.
    Thus, the only solution in $\shallowXg$ that satisfies this constraint is $(0.9,0,0.1)$ with objective value $9$:
    \begin{align*}
        &\objCx{(0.9,0,0.1)} \\
        &= \expectedValBy{1}{(0.9,0,0.1)} + \expectedValBy{2}{(0.9,0,0.1)} - z_1 \\
        &= 9 + 9 - 9 = 9
    \end{align*}
    Therefore, any solution with objective value $9$ satisfies the requirements of the shallow solver.
    Specifically, $(0.9,0,0.1)$ does.
    
    However, as we have the solution $(0.9,0.1,0)$ with objective value $109$, the solution $(0.9,0,0.1)$ does \emph{not} satisfy the requirements of the approximately-optimal solver ---  as $0.9 \times 109 > 9$.
    
\end{proof}

\section{Designing a Shallow Solver for \ref{eq:compact-OP} ~~(Proof of \Cref{lemma:red-shallow-solver-to-feas-test})}\label{apx:desiging-shallow}
Recall \Cref{lemma:red-shallow-solver-to-feas-test}:
\begin{lemma*}
Given an $\multApprox$-approximate-feasibility-oracle for \ref{eq:compact-OP} (\rom{3}), an $\multApprox$-approximate black-box for the utilitarian welfare (\rom{1}),
and an arbitrary vector in $\feasC$.~~ An efficient $\multApprox$-shallow-solver for \ref{eq:compact-OP} (\rom{2}) can be design.
\end{lemma*}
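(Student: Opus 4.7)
The plan is to implement the shallow solver by binary-searching over the potential objective value $\ztCons$ of \ref{eq:compact-OP} and querying the weak objective-feasibility-oracle at each candidate. Concretely, I would maintain a running lower bound $L$ and upper bound $U$ on the optimum, together with a ``best so far'' solution $\x^{\mathrm{best}}\in\feasC$. At each step I would call the oracle on $\ztCons:=(L+U)/2$; a Feasible answer yields an $\x\in\feasC$ with $\objCx{\x}\geq\ztCons$, so I would update $\x^{\mathrm{best}}\leftarrow\x$ and set $L\leftarrow\ztCons$; an Infeasible-Under-$\shallowX$ answer certifies that no $\x\in\feasC\cap\shallowX$ reaches objective $\ztCons$, so I would set $U\leftarrow\ztCons$. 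The final output would be $\x^{\mathrm{best}}$.

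The first step is to initialize the search correctly. For $L$ I would use $\objCx{\x^{0}}$, where $\x^{0}$ is the arbitrary vector in $\feasC$ provided as input; this guarantees that the Feasible branch is taken at least once, so $\x^{\mathrm{best}}$ is always well-defined and lies in $\feasC$. For $U$ I would invoke the utilitarian black-box with weights $c_i=1$ for all $i$ to obtain a state $s^{u}$, and set
\[
U \;:=\; \tfrac{1}{\multApprox}\sum_{i=1}^{n} u_i(s^{u}) \;-\; \sum_{i=1}^{t-1} z_i.
\]
The black-box guarantee $\sum_i u_i(s^{u})\geq \multApprox \sum_i u_i(s)$ for all $s\in S$ gives $\sum_i u_i(s)\leq \tfrac{1}{\multApprox}\sum_i u_i(s^{u})$ for every $s$, and hence, by taking a convex combination, $\sum_i E_i(\x)\leq \tfrac{1}{\multApprox}\sum_i u_i(s^{u})$ for every $\x\in X$. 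Since the objective of \ref{eq:compact-OP} is at most the sum of all expected utilities minus $\sum_{i=1}^{t-1} z_i$, this $U$ is a valid upper bound on $\objCx{\x}$ for every $\x\in\feasC$, in particular for every $\x\in\feasC\cap\shallowX$.

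Correctness of the shallow-solver guarantee would then follow from the invariant maintained throughout the search: (i) $\x^{\mathrm{best}}\in\feasC$ and $\objCx{\x^{\mathrm{best}}}\geq L$, and (ii) $\objCx{\x}\leq U$ for every $\x\in\feasC\cap\shallowX$. Invariant (i) is immediate from how the Feasible branch updates $\x^{\mathrm{best}}$ and $L$. Invariant (ii) holds at initialization by the argument above and is preserved by each Infeasible-Under-$\shallowX$ update via the oracle's contract. At termination, once $U-L$ is driven to zero (or below the desired tolerance), we have $\objCx{\x^{\mathrm{best}}}\geq L\geq U\geq \objCx{\x}$ for every $\x\in\feasC\cap\shallowX$, which is exactly the shallow-solver requirement.

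The main obstacle I expect is the precision argument needed to turn this binary search into a genuinely polynomial-time procedure (as opposed to merely $\epsilon$-close, which is what the footnote assumes can be absorbed). Handling this cleanly requires bounding the bit-complexity of the optimal objective value of \ref{eq:compact-OP}; one would argue, in the style of standard LP complexity, that the optimum is attained at a vertex of a polytope whose defining data have polynomial encoding length in $n$ and the sizes of the $u_i(s)$ and $z_i$ values, so the optimum is a rational with polynomially bounded denominator. Running the binary search to accuracy smaller than the reciprocal of that denominator, which only adds an $\mathcal{O}(\log)$ factor, then lets one round the final $\x^{\mathrm{best}}$ to the exact optimum among $\feasC\cap\shallowX$. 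Each iteration uses one oracle call plus at most one black-box call for the initial $U$, so the total complexity is polynomial in $n$ and the running times of the oracle and the black-box, yielding the claimed efficient shallow solver.
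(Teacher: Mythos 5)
Your construction matches the paper's Appendix~C proof almost exactly: binary search over candidate objective values $\ztCons$, lower bound initialized from the objective of the supplied feasible vector, upper bound from a single black-box call with unit weights (scaled by $1/\multApprox$), and invariants that the tracked solution stays in $\feasC$ with objective at least the current lower bound while everything in $\feasC\cap\shallowX$ stays below the current upper bound. Your upper bound subtracts $\sum_{i<t}z_i$, which is a slight sharpening of the paper's $\frac{1}{\multApprox}U'$; both are valid because the sum of the $t$ smallest expected utilities is at most the total expected utility, which is at most $\frac{1}{\multApprox}\sum_i u_i(s^u)$.

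The one place you genuinely diverge is the termination/precision step. The paper does \emph{not} round to the exact optimum; it stops when $u-l\le\epsilon$ and proves (Appendix~C.1) that this yields an $(\multApprox,\epsilon)$-shallow solver, which propagates through Algorithm~1 to an $(\multApprox,\epsilon)$-leximin approximation --- the additive $\epsilon$ is accepted and merely driven small in $\mathcal{O}(\log(1/\epsilon))$ iterations. You instead propose to bound the bit-complexity of the optimum and round to exactness. That is a plausible alternative, but as written it glosses over a real subtlety: \ref{eq:compact-OP} (via its linearization \ref{eq:app-vsums-OP}) has $|S|$ variables with $|S|$ possibly exponential, so ``the defining data have polynomial encoding length'' is not immediate. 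One must first observe that the relevant LP has only polynomially many rows, hence a basic optimal solution is determined by a polynomial-size subsystem of the constraint matrix, and only then does the standard vertex bit-complexity bound apply. Also note that the quantity you are trying to pin down is the optimum of $\objCx{\cdot}$ over $\feasC\cap\shallowX$, not over $\feasC$, so the constraint $\sum_{j\neq d}x_j\le\multApprox$ enters the encoding-length estimate. If you fill in these details the rounding route works; otherwise it is cleaner (and matches the paper) to carry the additive $\epsilon$ through to the end as an $(\multApprox,\epsilon)$-leximin approximation.
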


\renewcommand\algorithmiccomment[1]{%
  \hfill $\rhd$ {#1}%
}

\begin{algorithm}[tbp]
\caption{$\multApprox$-Shallow Solver for \eqref{eq:compact-OP}}
\label{alg:shallow-solver}
\textbf{Input}: 
 An integer $t \in N$ and rationals $z_1,\ldots, z_{t-1}$.
\\
 * If $t\geq 2$, then also $\x^{t-1}$.
 \\
\textbf{Oracles}: 
 an $\multApprox$-approximate-feasibility-oracle for \ref{eq:compact-OP} (\rom{3}), an $\multApprox$-approximate black-box for the utilitarian welfare (\rom{1}), and an arbitrary vector in $\feasC$.\\
\textbf{Parameter}: An error factor $\epsilon > 0$.\\
\begin{algorithmic}[1] 

\STATE Let 
retSol := the given arbitrary vector in $\feasC$.\label{alg-line:bounds-s}
\STATE Let $l :=\objCx{\text{retSol}}$
\STATE Let $U'$ be the utilitarian welfare obtained by using the approximate black-box with $c_i =1$ for $i \in N$.

\STATE Let $u := \frac{1}{\multApprox} U'$.\\ \quad \label{alg-line:bounds-e}

% \WHILE{l \leq (1-\epsilon) u}  % This might run forever if l=0
% \WHILE{$u - l > \epsilon u$}   % This might run forever if l=0
\WHILE{$u - l > \epsilon$} \label{alg-line:search-s}

\STATE Let $\ztCons :=(l+u) /2$.
\STATE Let (ans, $\xt$) be the answer of the  approximate-feasibility-oracle for the value $\ztCons$.
\IF{ans = Feasible}
\STATE update $l := \ztCons$.
\STATE update retSol := $\xt$.
\ELSE[ans = Infeasible-Under-$\shallowXg$]
% \COMMENT{ans = Infeasible-Under-$\shallowXg$}
\STATE update $u := \ztCons$.
\ENDIF
\ENDWHILE \\ \quad \label{alg-line:search-e}
\RETURN retSol \label{alg-line:ret}
\end{algorithmic}
\end{algorithm}

\begin{proof}
    The solver is described in \Cref{alg:shallow-solver}.  It performs a binary search over the potential objective-values $\ztCons$ for the program \ref{eq:compact-OP}. 
    
We start by proving that performing a binary search makes sense as we have monotonicity. 
First, 
if some objective value $\ztCons$ is Feasible (i.e., there exits a solution $\x \in \feasC$ with objective value at least $z_t$), 
then any value $\ztCons^- \leq \ztCons$ is also Feasible.  To see that, assume that $\ztCons$ is Feasible, and let $\x \in \feasC$ be a solution with objective value at least $\ztCons$. Clearly, the same $\x$ also has an objective value at least $\ztCons^-$.
    Similarly, if $\ztCons$ is Infeasible, then any value $\ztCons^+ \geq \ztCons$ is also Infeasible.
    % To see that, suppose by contradiction that $\ztCons^+$ is Feasible, then, by definition, there exists an $\x \in \feasC$ with objective value at least $\ztCons^+$. Then, clearly, the same $\x$ has an objective value at least $\ztCons^-$ --- in contradiction.

    Lines \ref{alg-line:bounds-s}--\ref{alg-line:bounds-e} set bounds for the binary search.
    
    As a lower bound, we use the objective value of the  solution given as input.
    
    For an upper bound, we use the given 
    $\multApprox$-approximate black-box for the utilitarian welfare with $c_i = 1$ for all $i \in N$ (i.e., with the original utilities $u_i$), to obtain a value $U'$.
    By definition of $\multApprox$-approximation, $\frac{1}{\multApprox} U'$ is an upper bound on the sum of utilities. 
    Recall that the objective function is the sum of the smallest $t$ utilities minus a positive constant. Thus, it is clear that an upper bound on the sum of all utilities can be used also as an upper bound on the maximum objective value.

    To perform the search, we use $u$ that holds the upper bound and $l$ that holds the lower bound; one of which is updated at each iteration:  
    \rmark{In addition, we use retSol that is initialized 
    to the  solution given as input.
    % either to $\x^0$ if $t=1$, or to $\x^{t-1}$ if $1 < t \leq n$;
    It is updated only in some of the iterations as follows.}

     % Lines \ref{alg-line:search-s}--\ref{alg-line:search-e} describe the search between the two bounds.
     At each iteration, we examine the midpoint value between the upper and lower bounds, $\ztCons = \frac{u + l}{2}$, and query the oracle about this value. If the value is determined to be Feasible, we update retSol to the solution returned by the oracle, and the lower bound $l := \ztCons$ to search for larger values. Otherwise, we update the upper bound $u:= \ztCons$ to search for smaller values. 
     % The search stops when $u - l \leq \epsilon $, and the solver returns retSol.
     \rmark{We stop the search when $l$ and $u$ are sufficiently close --- for now let us assume that we stop it when $l = u$; we revisit this issue extensively in \Cref{apx:binary-search-error}.} 

    To prove that the
    solver acts as described above, we need to show that 
    (a) the returned $\x$ (retSol) is feasible for \ref{eq:compact-OP}; and
    (b) $\objCx{\text{retSol}} \geq \objCx{\x}$ for all $\x \in \feasC \cap \shallowX$.

% \eden{===}
% there exists $\x \in \shallowX$ such that $\x \succ_{\epsilon} \xt$.
% There exists an integer $k$ such that $\expectedValBy{i}{\x} \geq \expectedValBy{i}{\xt} $ and $\expectedValBy{i}{\x} > \expectedValBy{i}{\xt} - \epsilon$.
% \eden{}

(a) Suppose first that the solver returns the initial value of retSol - by definition, it is feasible for \ref{eq:compact-OP}.
If the solver returns a modified value of retSol, then this value must have been returned by the approximate-feasibility-oracle. By definition of the oracle, the returned solution is feasible.

% \erel{Part (a) can be considerably shortened; the relevant parts should be moved to the proof of Algorithm 1.}

(b) We first note that $\objCx{\text{retSol}}$ is always at least the lower bound $l$.
Next, suppose by contradiction that there exists $\x \in \feasC \cap \shallowX $ such that  $\objCx{\x} > \objCx{\text{retSol}}$.
Therefore, $\objCx{\x} > l$.
As we stop the search when $l=u$, this implies $\objCx{\x} > u$.
But $u$ is a value for which the approximate feasibility oracle has asserted Infeasible-Under-$\shallowXg$.
By monotonicity, as $\objCx{\x} > u$, this implies that$\objCx{\x}$ is Infeasible-Under-$\shallowXg$ too. 
But $\x \in \feasC \cap \shallowX$ and has an objective value $\objCx{\x}$ --- a contradiction.

\end{proof}

\newcommand{\epsVec}{\Delta(\epsilon)}

\subsection{The Binary Search Error}\label{apx:binary-search-error}

    % In Section \ref{sec:designing-shallow-solver},
The shallow solver described in \Cref{alg:shallow-solver} actually has an additional additive error $\epsilon >0$ that arises from the binary search ending condition.
So far, we have assumed that this error is negligible, as it can be made smaller than $\epsilon$ in $\mathcal{O} (\log \frac{1}{\epsilon})$, for any $\epsilon > 0$.
Here we provide a more accurate analysis, that does not neglect this error.
    
Formally, let $\epsilon > 0$ be the error obtained from the binary search;  that is, we stop the search when $u - l \leq \epsilon$.

We claim that, with this modification,
the solver described by \Cref{alg:shallow-solver} returns 
    a solution retSol such that $\objCx{\text{retSol}} \geq \objCx{\x} - \epsilon$ for all solutions $\x \in \feasC \cap \shallowX$. That is, the solver returns a solution whose objective is at least the maximum objective among the subset $\shallowX$ \emph{minus $\epsilon$}.
    Clearly, this reduces to the definition of $\multApprox$-shallow solver when $\epsilon$ is negligible.
    We call it an $(\multApprox, \epsilon)$-shallow-solver.

    \begin{tcolorbox}[left=2pt, right=2pt,  
% top=2pt, bottom=2pt, middle=2pt, 
colback=black!5!white,colframe=black!50!black, colbacktitle=black!75!black,
title=\textbf{(\rom{2}')~$(\multApprox, \epsilon)$-Shallow-Solver for \ref{eq:compact-OP}}]
  \textbf{Input:}~ An integer $t \in N$ and rationals $z_1,\ldots, z_{t-1}$.
  \tcblower
  \textbf{Output:} A solution
  $\x^t \in \feasC$ such that\\ $\objCx{\x^t} \geq \objCx{\x} - \epsilon$ for any $\x \in \feasC \cap \shallowX$.
\end{tcolorbox}

%     \begin{definition}[An $(\multApprox, \epsilon)$-shallow-solver for \ref{eq:compact-OP}]
%         A solver that returns 
%     a solution
%      $\x^t$ such that $\objCx{\x^t} \geq \objCx{\x} - \epsilon$ for all solutions $\x \in \feasC \cap \shallowX$.
%     \end{definition}
% \erel{Suggestion: put in a "black box" II'}

    We also claim that using this type of solver in \Cref{alg:basic-ordered-Outcomes} affects the guarantees on its output as follows. 
    % Let $\epsVec$ be a vector of size $n$, where each of its element is equal to $\epsilon$.
    Let $\mathbf{1}_n$ be a vector of size $n$, where each one of its component is $1$.
    Then, \Cref{alg:basic-ordered-Outcomes} returns a solution $\retSol$ such that $\mathbf{E}(\retSol) \weaklyPreferred \multApprox \cdot \mathbf{E}(\x) - \epsilon \cdot \mathbf{1}_n$ for all $\x \in X$.
    % \eden{this is weaker than our real result, but I think it is the best we can do without using the approximate leximin order. 
    %
    % Maybe we can define something like $\mathbf{1}_i$ as a vector of size $n$, where its  $i$-th component is $1$ and the rest is $0$, and then say that $\mathbf{E}(\retSol) \weaklyPreferred \multApprox \cdot \mathbf{E}(\x) - \epsilon \cdot \mathbf{1}_i$ for all $i \in N$ and all $\x \in X$.
    % }
    
    Comparing the the previous guarantees, here we have an additional subtraction of $\epsilon$ from each component. 
    It is again clear, that this reduces to the definition of $\multApprox$-leximin-approximation when $\epsilon$ is negligible.
    We call it an $(\multApprox, \epsilon)$-leximin-approximation.
    
     \begin{definition}[An $(\multApprox, \epsilon)$-leximin-approximation]
        An \dist, $\xApprox \in X$, is an \rmark{$(\multApprox, \epsilon)$}-leximin-approximation if $\mathbf{E}(\xApprox) \weaklyPreferred \multApprox \cdot \mathbf{E}(\x)  - \epsilon \cdot \mathbf{1}_n$ for all $\x \in X$.
    \end{definition} 

    \paragraph{} We shall now provide a re-analysis of \Cref{alg:shallow-solver} and \Cref{alg:basic-ordered-Outcomes}, which considers the binary search error.

    \paragraph{Re-Analysis of
    \Cref{alg:shallow-solver}.} We prove a more accurate version of \Cref{lemma:red-shallow-solver-to-feas-test}:
    \begin{lemma}
    Given an $\multApprox$-approximate-feasibility-oracle for \ref{eq:compact-OP} (\rom{3}), an $\multApprox$-approximate black-box for the utilitarian welfare (\rom{1}),
    and an arbitrary vector in $\feasC$.~~ An efficient \rmark{$(\multApprox,\epsilon)$-shallow-solver} for \ref{eq:compact-OP} (\rom{2}) can be design.
    \end{lemma}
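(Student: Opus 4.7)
The plan is to keep the algorithm essentially as in \Cref{alg:shallow-solver}, with the single modification that the binary search terminates when $u - l \leq \epsilon$ rather than when $u = l$, and to propagate this slack into the conclusion. Initialization is unchanged: retSol is the given feasible vector, $l := \objCx{\text{retSol}}$, and $u := \frac{1}{\multApprox} U'$ where $U'$ is obtained from one call to the $\multApprox$-approximate utilitarian black-box with coefficients $c_i = 1$. In each iteration, we query the feasibility oracle at $z_t := (l+u)/2$; if the response is Feasible with witness $\xt$, we set $l := z_t$ and retSol $:= \xt$, otherwise (Infeasible-Under-$\shallowX$) we set $u := z_t$.

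The analysis rests on two invariants maintained throughout the search. The first is $\objCx{\text{retSol}} \geq l$: it holds at initialization by the choice of $l$, and whenever $l$ is raised to $z_t$, retSol is simultaneously updated to a witness $\xt \in \feasC$ whose objective is guaranteed by the oracle to satisfy $\objCx{\xt} \geq z_t$. The second invariant is that every $\x \in \feasC \cap \shallowX$ satisfies $\objCx{\x} \leq u$. At initialization this follows because $\frac{1}{\multApprox} U'$ upper-bounds the full utilitarian sum $\sum_i E_i(\x)$, which dominates $\sum_{i=1}^{t} E_i^{\uparrow}(\x)$, and subtracting the non-negative constant $\sum_{i<t} z_i$ can only shrink the objective. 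Whenever $u$ is decreased to $z_t$, the oracle has declared Infeasible-Under-$\shallowX$, so by definition no $\x \in \feasC \cap \shallowX$ achieves objective $\geq z_t$, preserving the invariant.

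At termination we have $u - l \leq \epsilon$, so for every $\x \in \feasC \cap \shallowX$,
\[
\objCx{\text{retSol}} \;\geq\; l \;\geq\; u - \epsilon \;\geq\; \objCx{\x} - \epsilon,
\]
which is exactly the $(\multApprox,\epsilon)$-shallow-solver guarantee. Feasibility of retSol follows as in the original lemma: either retSol is the given input vector (feasible by assumption) or it is a witness returned by the feasibility oracle under a Feasible response, and in both cases it lies in $\feasC$. Efficiency is also immediate: each iteration performs one oracle call and $O(1)$ additional work, and the number of iterations is $O(\log((u_0 - l_0)/\epsilon))$, which is polynomial in the bit-length of the input and in $\log(1/\epsilon)$.

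The main subtlety I expect is the validity of the initial upper bound $\frac{1}{\multApprox} U'$: the argument above needs $\sum_{i<t} z_i \geq 0$, which is not imposed directly by \ref{eq:compact-OP} but follows inductively from the way the $z_\tau$ arise in \Cref{alg:basic-ordered-Outcomes}. Specifically, each $z_\tau$ is the objective obtained at iteration $\tau$, and by \Cref{obs:xt-feasible-for-t+1} the previous iterate $\x^{\tau-1}$ is feasible for the $\tau$-th program, giving a non-negative objective there; any shallow-solver output then also has non-negative value. Apart from carefully stating this invariant and the monotonicity of the feasibility claims, the rest of the proof is a straightforward accounting of the $\epsilon$-slack introduced by early termination of the binary search.
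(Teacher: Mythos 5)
Your proof is correct and follows essentially the same route as the paper's: the algorithm is the paper's \Cref{alg:shallow-solver} with the stopping rule $u-l\leq\epsilon$, the lower-bound invariant $\objCx{\text{retSol}}\geq l$ is the paper's observation (a)/(b), and the termination chain $\objCx{\text{retSol}}\geq l\geq u-\epsilon\geq \objCx{\x}-\epsilon$ is the paper's contradiction argument unwound into a direct one. Two small points where you go slightly beyond the paper's wording: (1) you phrase the upper-bound argument as an explicit invariant on $u$, which silently also covers the edge case where the oracle never answers Infeasible-Under-$\shallowX$ and $u$ remains at its initial value $\frac{1}{\multApprox}U'$ — the paper's sentence ``$u$ is a value for which the oracle has asserted Infeasible-Under-$\shallowX$'' skips over that case; and (2) you flag that the initial upper bound is only valid if $\sum_{i<t}z_i\geq 0$, and sketch an inductive justification via the objective values produced in \Cref{alg:basic-ordered-Outcomes}, whereas the paper simply asserts the objective subtracts ``a positive constant.'' Both are welcome tightenings rather than a different proof.
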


    \begin{proof}
        The proof is similar to the original proof except (b), which here becomes $\objCx{\text{retSol}} \geq \objCx{\x} - \epsilon$ for all $\x \in \feasC \cap \shallowX$.
        
        Suppose by contradiction that there exists $\x \in \feasC \cap \shallowX $ such that  $\objCx{\x} - \epsilon > \objCx{\text{retSol}}$.
        Then, as $\objCx{\text{retSol}} \geq l $, we get that $\objCx{\x} > l + \epsilon$.
        As we stop the search when $u-l \leq \epsilon$, this implies $\objCx{\x} > u$.
But $u$ is a value for which the approximate feasibility oracle has asserted Infeasible-Under-$\shallowXg$.
By monotonicity, as $\objCx{\x} > u$, this implies that$\objCx{\x}$ is Infeasible-Under-$\shallowXg$ too. 
But $\x \in \feasC \cap \shallowX$ and has an objective value $\objCx{\x}$ --- a contradiction.
\end{proof}

\paragraph{Re-Analysis of
    \Cref{alg:basic-ordered-Outcomes}.}
    We start by proving the following lemma that extends \Cref{lemma:aprox-eqv-def}, and provides another equivalent definition for $(\multApprox,\epsilon)$-leximin-approximation:
    
    \begin{lemma}\label{lemma:approx-def-with-add}
    An \dist $\xApprox$ is an \rmark{$(\multApprox,\epsilon)$}-leximin-approximation if and only if $\mathbf{E}(\xApprox) \weaklyPreferred \mathbf{E}(\x) - \rmark{\epsilon \cdot \mathbf{1}_n}$ for all $\x \in \shallowXg$.
\end{lemma}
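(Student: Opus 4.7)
The plan is to mimic the structure of the proof of \Cref{lemma:aprox-eqv-def} essentially verbatim, using the $\multApprox$-upgrade and $\multApprox$-downgrade operations (\Cref{lemma:upgrade} and \Cref{lemma:downgrade}) to translate between elements of $X$ and $\shallowXg$, and carrying the additive term $-\epsilon \cdot \mathbf{1}_n$ through every inequality. The only new ingredient relative to that lemma is a quick observation about how the leximin order interacts with subtracting a constant vector: since $\epsilon \cdot \mathbf{1}_n$ shifts every entry of the sorted vector by the same amount, the relation $\vv \weaklyPreferred \mathbf{u} - \epsilon \cdot \mathbf{1}_n$ is preserved under scaling/shifting in the same way as the ordinary leximin order.

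For the forward direction, assume $\xApprox$ is an $(\multApprox,\epsilon)$-leximin-approximation, and fix an arbitrary $\x \in \shallowXg$. By \Cref{lemma:upgrade}, $\up{\x} \in X$ and $\mathbf{E}(\up{\x}) = \tfrac{1}{\multApprox}\mathbf{E}(\x)$. Applying the defining property of $(\multApprox,\epsilon)$-leximin-approximation to $\up{\x}$ yields
\begin{align*}
\mathbf{E}(\xApprox) \weaklyPreferred \multApprox \cdot \mathbf{E}(\up{\x}) - \epsilon \cdot \mathbf{1}_n = \mathbf{E}(\x) - \epsilon \cdot \mathbf{1}_n,
\end{align*}
which is exactly the claimed inequality restricted to $\shallowXg$.

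For the converse, assume $\mathbf{E}(\xApprox) \weaklyPreferred \mathbf{E}(\x) - \epsilon \cdot \mathbf{1}_n$ for all $\x \in \shallowXg$, and fix an arbitrary $\xt \in X$. By \Cref{lemma:downgrade}, $\down{\xt} \in \shallowXg$ and $\mathbf{E}(\down{\xt}) = \multApprox \cdot \mathbf{E}(\xt)$. Applying the hypothesis to $\down{\xt}$ gives
\begin{align*}
\mathbf{E}(\xApprox) \weaklyPreferred \mathbf{E}(\down{\xt}) - \epsilon \cdot \mathbf{1}_n = \multApprox \cdot \mathbf{E}(\xt) - \epsilon \cdot \mathbf{1}_n,
\end{align*}
which is precisely the $(\multApprox,\epsilon)$-leximin-approximation condition for $\xt$. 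Since $\xt$ was arbitrary, this completes the proof.

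The proof is essentially routine once the two existing lemmas about $\upS$ and $\downS$ are in hand, so there is no real obstacle; the only point to watch is that the ``$-\epsilon \cdot \mathbf{1}_n$'' passes through the application/removal of the $\multApprox$ scaling unchanged, which is immediate because the upgrade/downgrade transformations scale $\mathbf{E}$ by $\tfrac{1}{\multApprox}$ and $\multApprox$ respectively while the additive correction is independent of the choice of $\x$.
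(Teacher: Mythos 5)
Your proof is correct, but your forward direction actually takes a cleaner route than the paper does. The paper's proof of this lemma (like its proof of \Cref{lemma:aprox-eqv-def}) proceeds indirectly: it fixes a leximin-optimal $\x^*\in X$, applies the approximation hypothesis to $\x^*$, downgrades $\x^*$ into $\shallowXg$, invokes \Cref{lemma:threshold-is-opt-in-shallow} to compare $\down{\x^*}$ against all of $\shallowXg$, notes that subtracting the constant vector $\epsilon\cdot\mathbf{1}_n$ from both sides preserves the leximin order, and finishes with transitivity. You instead take an arbitrary $\x\in\shallowXg$, upgrade it into $X$ via \Cref{lemma:upgrade}, and apply the hypothesis directly to $\up{\x}$; the $\multApprox$ and $\tfrac{1}{\multApprox}$ cancel and the additive term passes through untouched. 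This eliminates the detour through $\x^*$ and \Cref{lemma:threshold-is-opt-in-shallow}, and also makes the observation in your opening paragraph (about constant shifts commuting with the leximin order) unnecessary --- you never actually need it, since you work with a single chain of equalities rather than transitivity across two $\weaklyPreferred$ relations. The converse direction matches the paper's verbatim, using \Cref{lemma:downgrade} on an arbitrary $\xt\in X$.
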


\begin{proof}
    Let $\x^*$ be a leximin optimal \dist. 

    Let $\xApprox$ be an \rmark{$(\multApprox,\epsilon)$}-leximin-approximation.
    By definition, $\mathbf{E}(\xApprox) \weaklyPreferred \multApprox \cdot \mathbf{E}(\x) - \rmark{\epsilon \cdot \mathbf{1}_n}$ for all $\x \in X$.
    Since $\x^* \in X$, we get that $\mathbf{E}(\xApprox) \weaklyPreferred \multApprox \cdot \mathbf{E}(\x^*) - \rmark{\epsilon \cdot \mathbf{1}_n}$.
    Now, consider $\down{x^*}$. By \Cref{lemma:downgrade}, $\mathbf{E}(\down{x^*}) = \multApprox \mathbf{E}(\x^*)$.
    This implies that $\mathbf{E}(\xApprox) \weaklyPreferred \mathbf{E}(\down{x^*}) - \rmark{\epsilon \cdot \mathbf{1}_n}$.
    By \Cref{lemma:threshold-is-opt-in-shallow}, $\mathbf{E}(\down{\x^*}) \weaklyPreferred \mathbf{E}(\x)$ for all $\x \in \shallowXg$.
    \rmark{Since subtracting the same constant from each component preserves the leximin order, it follows that $\mathbf{E}(\down{\x^*})  - \rmark{\epsilon \cdot \mathbf{1}_n} \weaklyPreferred \mathbf{E}(\x)  - \rmark{\epsilon \cdot \mathbf{1}_n} $ for all $\x \in \shallowXg$.
    By transitivity, this means that $\mathbf{E}(\xApprox) \weaklyPreferred \mathbf{E}(\x)  - \epsilon \cdot \mathbf{1}_n$ for all $\x \in \shallowXg$.}

    On the other hand, let $\xApprox$ be an \dist such that $\mathbf{E}(\xApprox) \weaklyPreferred \mathbf{E}(\x) - \rmark{\epsilon \cdot \mathbf{1}_n}$ for all $\x \in \shallowXg$. 
    Let $\xt \in X$. By \Cref{lemma:downgrade}, $\down{\xt} \in \shallowXg$ and $\mathbf{E}(\down{\xt}) = \multApprox \mathbf{E}(\xt)$.
    As $\down{\xt} \in \shallowXg$, we get that $\mathbf{E}(\xApprox) \weaklyPreferred \mathbf{E}(\down{\xt})  - \rmark{\epsilon \cdot \mathbf{1}_n} $; and as $\mathbf{E}(\down{\xt}) = \multApprox \mathbf{E}(\xt)$. this implies that $\mathbf{E}(\xApprox) \weaklyPreferred \multApprox \mathbf{E}(\xt) - \rmark{\epsilon \cdot \mathbf{1}_n} $.
    % Now, consider $\down{x^*}$. By \Cref{lemma:downgrade}, $\down{x^*} \in \shallowXg$ and so, $\mathbf{E}(\xt) \weaklyPreferred \mathbf{E}(\down{x^*})$. By \Cref{lemma:threshold-is-approx-opt}, 
\end{proof}

We shall now prove the following lemma that extends \Cref{lemma:main-general}:

\begin{lemma}
    Given an \rmark{$(\multApprox, \epsilon)$}-shallow-solver for \ref{eq:compact-OP}, \Cref{alg:basic-ordered-Outcomes} returns an \dist $\retSol$ such that $\mathbf{E}(\retSol) \weaklyPreferred \mathbf{E}(\x)  - \rmark{\epsilon \cdot \mathbf{1}_n} $ for all $\x \in \shallowXg$.
\end{lemma}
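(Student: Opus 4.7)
The plan is to adapt the proof of \Cref{lemma:main-general} (the exact analog for an $\multApprox$-shallow-solver), inserting the additive $\epsilon$ slack wherever the solver's guarantee is invoked. The key observation that survives without change is that $\retSol$, being obtained at the last iteration of \Cref{alg:basic-ordered-Outcomes} and since constraints are only added along the iterations, remains feasible for every program \ref{eq:compact-OP} solved during the run.

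Assume for contradiction that some $\xt \in \shallowXg$ satisfies $\mathbf{E}(\xt) - \epsilon\cdot\mathbf{1}_n \succ \mathbf{E}(\retSol)$. Since subtracting the same scalar from all coordinates preserves the sorted order, this yields an index $k \in [n]$ with $\expectedValBy{i}{\xt} - \epsilon = \expectedValBy{i}{\retSol}$ for all $i < k$ and $\expectedValBy{k}{\xt} - \epsilon > \expectedValBy{k}{\retSol}$. Next, I would verify that $\xt$ is feasible for the program at iteration $t = k$: constraints (\progCompact.1--2) follow from $\xt \in X$, and for each $\ell < k$ we have $\sum_{i=1}^\ell \expectedValBy{i}{\xt} = \sum_{i=1}^\ell \expectedValBy{i}{\retSol} + \ell\epsilon \geq \sum_{i=1}^\ell z_i$, using that $\retSol$ satisfies (\progCompact.3) at iteration $k$ (as guaranteed by the paragraph above).

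The final step is to derive a contradiction by bounding the objective value of $\xt$ at iteration $k$. A direct calculation, combining the equalities for $i<k$ with the strict inequality at $i=k$, yields $\objCx{\xt} > \sum_{i=1}^k \expectedValBy{i}{\retSol} + k\epsilon - \sum_{i=1}^{k-1} z_i \geq z_k + k\epsilon \geq z_k + \epsilon$, where the middle inequality follows exactly as in the original proof (from (\progCompact.3) applied at a later iteration if $k<n$, or from the definition $z_n := \objCx{\x^n}$ if $k=n$). On the other hand, the $(\multApprox,\epsilon)$-shallow-solver guarantee applied at iteration $k$, together with $\xt \in \feasC \cap \shallowXg$, gives $z_k = \objCx{\x^k} \geq \objCx{\xt} - \epsilon$, i.e., $\objCx{\xt} \leq z_k + \epsilon$, contradicting the strict inequality just derived. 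The only real subtlety — and the reason the argument actually closes — is that the $k\epsilon$ slack accumulated across the $k$ smallest expected utilities of $\xt$ strictly exceeds the single $\epsilon$ of additive error that the shallow solver is allowed, which holds because $k \geq 1$.
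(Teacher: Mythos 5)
Your proof is correct and follows essentially the same route as the paper's: contradiction, extract the index $k$ from the leximin definition, verify feasibility of $\xt$ at iteration $k$, derive $\objCx{\xt} > z_k + \epsilon$, and contradict the $(\multApprox,\epsilon)$-shallow-solver guarantee $z_k \geq \objCx{\xt} - \epsilon$. The only cosmetic difference is that you carry the accumulated $(k-1)\epsilon$ slack through to obtain $\objCx{\xt} > z_k + k\epsilon$ and then note $k\epsilon \geq \epsilon$, whereas the paper discards the extra slack immediately and keeps only a single $\epsilon$. One small wording issue in your closing remark: you say the argument closes because ``$k\epsilon$ strictly exceeds the single $\epsilon$ \ldots which holds because $k\geq 1$,'' but for $k=1$ we have $k\epsilon = \epsilon$, not $k\epsilon > \epsilon$. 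The strictness that drives the contradiction actually comes from the strict inequality $\expectedValBy{k}{\xt} - \epsilon > \expectedValBy{k}{\retSol}$ at index $k$ in the leximin comparison, not from $k\epsilon > \epsilon$; the relation $k\epsilon \geq \epsilon$ (weak) is all that $k\geq 1$ gives or is needed. Your displayed chain of inequalities is correct, so this is only a mischaracterization in the prose, not a flaw in the proof.
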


Together with \Cref{lemma:approx-def-with-add}, this proves that \Cref{alg:basic-ordered-Outcomes} returns an $(\multApprox, \epsilon)$-leximin-approximation.

\begin{proof}
    suppose by contradiction that there exists a $\xt \in \shallowXg$ such that $\mathbf{E}(\xt)   - \rmark{\epsilon \cdot \mathbf{1}_n} \succ \mathbf{E}(\retSol)$.
By definition, there exists an integer $1 \leq k\leq n $ such that $\expectedValBy{i}{\xt} \rmark{-\epsilon} = \expectedValBy{i}{\retSol}$ for $i \leq k$, and $\expectedValBy{k}{\xt} \rmark{-\epsilon} > \expectedValBy{k}{\retSol}$.

As $\retSol$ is a solution for the program \ref{eq:general-comp} that was solved in the last iteration ($t=n$), we can conclude that $\sum_{i=1}^k \expectedValBy{i}{\retSol} \geq \sum_{i=1}^{k} z_i$ (by constraint (\ref{eq:general-comp}.3) if $k<n$ and by its objective otherwise).
This implies that the objective value of $\retSol$ for the program \ref{eq:general-comp} that was solved in $k$-th iteration at least $z_t$:
\begin{align}\label{eq:fk-to-zk-binary-error}
    \sum_{i=1}^k \expectedValBy{i}{\retSol} - \sum_{i=1}^{k-1} z_i \geq z_k
\end{align}

We shall now see that $\xt$ is also a solution to this problem. Constraints (\ref{eq:general-comp}.1--2) are satisfied since $\xt \in \shallowXg \subseteq X$.
For Constraint (\ref{eq:general-comp}.3), we notice that the $(k-1)$ least expected values of $\xt$ \rmark{are higher than} those of $\retSol$, which means that, for any $\ell < k$:
\begin{align*}
    \sum_{i=1}^{\ell} \expectedValBy{i}{\xt} =  \sum_{i=1}^{\ell} \left(\expectedValBy{i}{\retSol} + \epsilon \right) \geq \sum_{i=1}^{\ell} \expectedValBy{i}{\retSol}\geq \sum_{i=1}^{\ell} z_i
\end{align*}
Therefore, $\xt$ is also a solution for this program, and its objective value for it is:
\begin{align*}
    \sum_{i=1}^k \expectedValBy{i}{\xt} - \sum_{i=1}^{k-1} z_i
\end{align*}
We shall now see that this means that the objective value of $\xt$ \rmark{is higher by more than $\epsilon$} than the objective value of the solution returned by the solver in this iteration, namely $z_k$.
\begin{align*}
    &\sum_{i=1}^k \expectedValBy{i}{\xt} - \sum_{i=1}^{k-1} z_i = \sum_{i=1}^{k-1} \expectedValBy{i}{\xt} + \expectedValBy{k}{\xt}- \sum_{i=1}^{k-1} z_i \\
    &\equWithExp{\text{By definition of $\xt$ for $i< k$}}{= \sum_{i=1}^{k-1} \left(\expectedValBy{i}{\retSol} \rmark{+ \epsilon} \right)+ \expectedValBy{k}{\xt}- \sum_{i=1}^{k-1} z_i}\\
    & \geq \sum_{i=1}^{k-1} \expectedValBy{i}{\retSol} + \expectedValBy{k}{\xt}- \sum_{i=1}^{k-1} z_i\\
    &\equWithExp{\text{By definition of $\xt$ for $k$}}{> \sum_{i=1}^{k-1} \expectedValBy{i}{\retSol} + \expectedValBy{k}{\retSol} \rmark{+ \epsilon}- \sum_{i=1}^{k-1} z_i}\\
    & = \sum_{i=1}^{k} \expectedValBy{i}{\retSol}  - \sum_{i=1}^{k-1} z_i \rmark{+ \epsilon}\\
    &\equWithExp{\text{By \Cref{eq:fk-to-zk-binary-error}}}{\geq z_t \rmark{+ \epsilon}}
\end{align*}

But this contradicts the guarantees of our \emph{shallow} solver --- by definition, the objective value of the solution returned by the solver, namely $z_k$, is at least as high as the objective of any solution in $\feasC \cap \shallowXg$ \rmark{minus $\epsilon$}. 
\end{proof}

\section{Equivalence Between \eqref{eq:min-sum-OP} and \eqref{eq:app-vsums-OP} ~~ (Including Proof of \Cref{lemma:red-approx-solver-for-P3-to-solver-for-P2})}\label{apx:eqv-linear}

The proof uses the following lemma, which considers general vectors:
\begin{lemma}\label{lemma:constaints-eqv}
    Let $c \in \mathbb{R}_{\geq 0}$ be a non-negative constant,  $\vv \in \mathbb{R}^{N}$ any vector, and $k \in N$. 
    Then, 
    \begin{align}
        \sum_{i=1}^k \orderedVby{i} \geq c \label{eq:eqv-first-side} 
    \end{align}
    if and only if there exist $y_k \in \mathbb{R}$ and $\mathbf{m}_k \in \mathbb{R}^N$ s.t.
    \begin{align}
        &k y_k - \sum_{i=1}^n m_{k,i} \geq c\label{eq:aux-1}\\
        &m_{k,i} \geq y_k - v_i && \forall i \in N\label{eq:aux-2}\\
        &m_{k,i} \geq 0 && \forall i \in N\label{eq:aux-3}
    \end{align}
\end{lemma}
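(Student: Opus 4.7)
The plan is to reduce both directions of the equivalence to a single identity,
$$\sum_{i=1}^k v^{\uparrow}_i \;=\; \max_{y \in \mathbb{R}} \Bigl[\, k y - \sum_{i=1}^n \max(y - v_i,\, 0) \,\Bigr],$$
with the maximum attained at $y = v^{\uparrow}_k$. This is the classical linearization of the sum of the $k$ smallest components as a maximum of linear functions, and it is exactly what motivates the shape of \eqref{eq:aux-2}--\eqref{eq:aux-3}: together they force $m_{k,i} \geq \max(y_k - v_i, 0)$.

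For the forward direction I would exhibit the canonical witness: set $y_k := v^{\uparrow}_k$ and $m_{k,i} := \max(y_k - v_i, 0)$. Then \eqref{eq:aux-2} and \eqref{eq:aux-3} hold by the definition of the max, while \eqref{eq:aux-1} follows from the equality case of the identity, yielding $k y_k - \sum_i m_{k,i} = \sum_{i=1}^k v^{\uparrow}_i \geq c$.

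For the reverse direction, suppose $y_k$ and $\mathbf{m}_k$ satisfy \eqref{eq:aux-1}--\eqref{eq:aux-3}. Combining \eqref{eq:aux-2} with \eqref{eq:aux-3} gives $m_{k,i} \geq \max(y_k - v_i, 0)$, so substituting into \eqref{eq:aux-1} and applying the upper-bound half of the identity yields
$$c \;\leq\; k y_k - \sum_{i=1}^n m_{k,i} \;\leq\; k y_k - \sum_{i=1}^n \max(y_k - v_i, 0) \;\leq\; \sum_{i=1}^k v^{\uparrow}_i,$$
which is \eqref{eq:eqv-first-side}.

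The only substantive step is the identity itself, which I would prove by reindexing the sum $\sum_i \max(y - v_i, 0)$ over the order statistics and using the pointwise identity $y - \max(y - v, 0) = \min(y, v)$ to rewrite
$$k y - \sum_{i=1}^n \max(y - v_i, 0) \;=\; \sum_{i=1}^k \min(y, v^{\uparrow}_i) \;-\; \sum_{i=k+1}^n \max(y - v^{\uparrow}_i, 0).$$
The first sum is bounded above by $\sum_{i=1}^k v^{\uparrow}_i$ (termwise, since $\min(y, v^{\uparrow}_i) \leq v^{\uparrow}_i$) and the second is non-negative, so the upper bound holds for every $y$. Equality at $y = v^{\uparrow}_k$ is a short case split: for $i \leq k$ we have $v^{\uparrow}_i \leq y$ so $\min(y, v^{\uparrow}_i) = v^{\uparrow}_i$, and for $i > k$ we have $v^{\uparrow}_i \geq y$ so $\max(y - v^{\uparrow}_i, 0) = 0$; ties at position $k$ cause no issue, since both terms agree when $v^{\uparrow}_i = y$. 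The main obstacle is really just the bookkeeping with order statistics --- there is no conceptual difficulty once the identity is in hand.
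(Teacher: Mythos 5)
Your proof is correct and, at the computational level, essentially the same as the paper's: both directions use the canonical witness $y_k := v^{\uparrow}_k$, $m_{k,i} := \max(0, y_k - v_i)$, and both rely on the transformation $y - \max(0, y - v) = \min(y, v)$ after reindexing to order statistics. The one genuine organizational difference is that you extract and state the variational identity
\begin{align*}
\sum_{i=1}^k v^{\uparrow}_i = \max_{y \in \mathbb{R}} \Bigl[\, k y - \sum_{i=1}^n \max(y - v_i, 0) \,\Bigr]
\end{align*}
as a standalone lemma and derive both directions from its two halves (equality at $y = v^{\uparrow}_k$ for the forward direction, the global upper bound for the reverse). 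The paper instead proves each direction by direct calculation, implicitly re-deriving the relevant half of the identity in situ. Your framing is more modular and makes the conceptual content (this is the classical concave-conjugate linearization of the sum of the $k$ smallest coordinates) explicit, which the paper leaves tacit; the paper's version is self-contained but does the same bookkeeping twice. Either way the underlying argument is the same, so this is a presentational rather than substantive difference.
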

We note that this proof simplifies the proof of Lemma 7 in \cite{hartman2023leximin}.
% \erel{Acknowledge the or.stackexchange post - here or in the acknowledgments.} \eden{added to acknowledgments}

\begin{proof}
    Assume that \Cref{eq:eqv-first-side} holds, that is: $\sum_{i=1}^k \orderedVby{i} \geq c$.
    Let 
    \begin{align*}
        \quad y_{k} &:= \orderedVby{k}
        \\
        m_{k,i} &:= \max(0, \orderedVby{k} -v_i) && \forall i \in N
    \end{align*}

    It is easy to see that \Cref{eq:aux-3} is satisfied.
    For \Cref{eq:aux-2}, observe that:
    \begin{align*}
        m_{k,i} &= \max(0, \orderedVby{k} -v_i) \\
        &\geq \orderedVby{k} -v_i= y_{k}-v_i
    \end{align*}
    Now, consider the sum $\sum_{i \in N} m_{k,i}$, we can change the order of the elements as follows:
    \begin{align*}
        &\sum_{i \in N} m_{k,i} = \sum_{i \in N} \max(0, \orderedVby{k} -v_i)\\
        &= \sum_{i \in N} \max(0, \orderedVby{k} -\orderedVby{i})
    \end{align*}
    
    As $\orderedVby{k}$ is the $k$-th least value of $\vv$, we get that $\orderedVby{k} - \orderedVby{i} \geq 0$ for any $i < k$, that $\orderedVby{k} - \orderedVby{k} = 0$; and that $\orderedVby{k} - \orderedVby{i} \leq 0$ for any $i > k$. It follows that:
    \begin{align*}
        &\sum_{i \in N} m_{k,i} = \sum_{i =1}^{k-1} (\orderedVby{k} -\orderedVby{i}) \\
        &= (k-1)\orderedVby{k} - \sum_{i=1}^{k-1} \orderedVby{i}
    \end{align*}
    We can now prove that \Cref{eq:aux-2} is satisfies as well:
    \begin{align*}
        &ky_k - \sum_{i \in N} m_{k,i} \\
        &= k\orderedVby{k} -(k-1)\orderedVby{k} + \sum_{i=1}^{k-1} \orderedVby{i} \\
        &= \sum_{i=1}^k \orderedVby{i} \geq c && \text{(As \Cref{eq:eqv-first-side} holds)}
    \end{align*}
% \noindent\rule{0.2\linewidth}{0.4pt}
---\\
    On the other hand, assume that there exist $y_k \in \mathbb{R}$ and $\mathbf{m}_k \in \mathbb{R}^N$ that satisfy Equations (\ref{eq:aux-1}--\ref{eq:aux-3}).

    By Equations (\ref{eq:aux-2}--\ref{eq:aux-3}), we get that $m_{k,i} \geq \max(0,y_k-v_i)$ for $i \in N$. Using the same technique of changing the order of the elements we can conclude that:
    \begin{align*}
        \sum_{i\in N}m_{k,i} \geq \sum_{i\in N} \max(0,y_k-v_i) = \sum_{i\in N} \max(0,y_k-\orderedVby{i}) 
    \end{align*}
    Now, consider the left hand side of \Cref{eq:aux-1}, it follows that:
    \begin{align*}
        &k y_k - \sum_{i \in N} m_{k,i} \leq k y_k - \sum_{i\in N} \max(0,y_k-\orderedVby{i}) \\
        & \leq k y_k - \sum_{i=1}^k \max(0,y_k-\orderedVby{i}) \\
        &= \sum_{i=1}^k\left(y_k - \max(0,y_k-\orderedVby{i})\right)
    \end{align*}
    However, each element in the sum can be simplified to $\min(y_k,\orderedVby{i})$: if $\max(0,y_k-\orderedVby{i}) = 0$ (which means that $y_k \leq \orderedVby{i}$) then this element gives $y_k - 0 = y_k$, and otherwise  (if $\max(0,y_k-\orderedVby{i}) > 0 $ which means that $y_k > \orderedVby{i}$) it gives $y_k - (y_k - \orderedVby{i}) = \orderedVby{i}$.
    Which means that we can conclude that:
    \begin{align*}
        & k y_k - \sum_{i \in N} m_{k,i} \leq \sum_{i=1}^k\left(y_k - \max(0,y_k-\orderedVby{i})\right)\\
        & = \sum_{i=1}^k\min(y_k, \orderedVby{i}) \leq \sum_{i=1}^k \orderedVby{i}
    \end{align*}
    However, by \Cref{eq:aux-1}, $k y_k - \sum_{i \in N} m_{k,i} \geq c$, so we can conclude that $\sum_{i=1}^k \orderedVby{i} \geq c$ --- as required. 
\end{proof}
    
Using this general lemma, we now prove an equivalence between Constraint (\ref{eq:min-sum-OP}.2) and the set of Constraints (\ref{eq:app-vsums-OP}.2--4):
    \begin{lemma}\label{lemma:linear-const-eqv}
        $\mathbf{x}$ satisfies Constraint (\ref{eq:min-sum-OP}.2) if-and-only-if there exist $y_{\ell}$ and $m_{\ell,i}$ for $1 \leq \ell \leq t$ and $1 \leq i \leq n$ such that $\left(\mathbf{x}, \mathbf{y}, \mathbf{m}\right)$ satisfies (\ref{eq:app-vsums-OP}.2--4).
    \end{lemma}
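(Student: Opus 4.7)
\begin{proofsketch}
The plan is to derive the lemma by applying \Cref{lemma:constaints-eqv} separately for each $\ell \in [t]$. For a fixed $\ell$, I would set $\mathbf{v} := \mathbf{E}(\mathbf{x}) \in \mathbb{R}^n$, so that $v_i = E_i(\mathbf{x}) = \sum_{j=1}^{|S|} x_j \cdot u_i(s_j)$; set $k := \ell$; and set $c := \sum_{i=1}^{\ell} z_i$. With these substitutions, the hypothesis $\sum_{i=1}^{k}\orderedVby{i} \geq c$ of \Cref{lemma:constaints-eqv} becomes exactly the $\ell$-th case of Constraint (\ref{eq:min-sum-OP}.2), namely $\sum_{i=1}^{\ell} \expectedValBy{i}{\mathbf{x}} \geq \sum_{i=1}^{\ell} z_i$.

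On the other side, \Cref{lemma:constaints-eqv} asserts the existence of $y_\ell \in \mathbb{R}$ and $m_{\ell,1}, \ldots, m_{\ell,n} \in \mathbb{R}$ satisfying Equations (\ref{eq:aux-1})--(\ref{eq:aux-3}), which become respectively: $\ell y_\ell - \sum_{i=1}^{n} m_{\ell,i} \geq \sum_{i=1}^{\ell} z_i$ (matching (\ref{eq:app-vsums-OP}.2)); $m_{\ell,i} \geq y_\ell - \sum_{j=1}^{|S|} x_j u_i(s_j)$ for all $i \in [n]$ (matching (\ref{eq:app-vsums-OP}.3)); and $m_{\ell,i} \geq 0$ for all $i \in [n]$ (matching (\ref{eq:app-vsums-OP}.4)). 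So, for each fixed $\ell$, the $\ell$-th slice of (\ref{eq:min-sum-OP}.2) is equivalent to the existence of auxiliary variables $y_\ell, m_{\ell, \cdot}$ satisfying the $\ell$-th slices of (\ref{eq:app-vsums-OP}.2)--(\ref{eq:app-vsums-OP}.4).

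To finish, I would observe that the auxiliary variables for different values of $\ell$ appear in disjoint blocks of constraints in (\ref{eq:app-vsums-OP}.2)--(\ref{eq:app-vsums-OP}.4), so the choices for different $\ell$ are independent. Thus, for the forward direction, if $\mathbf{x}$ satisfies (\ref{eq:min-sum-OP}.2), I apply \Cref{lemma:constaints-eqv} once for each $\ell \in [t]$ to obtain the $y_\ell$ and $m_{\ell,\cdot}$ witnessing (\ref{eq:app-vsums-OP}.2)--(\ref{eq:app-vsums-OP}.4) at that $\ell$, and concatenate them into $\mathbf{y}$ and $\mathbf{m}$. For the reverse direction, if such $\mathbf{y}$ and $\mathbf{m}$ exist, then for each $\ell$ the triple $(y_\ell, m_{\ell, \cdot})$ witnesses the hypothesis of the backward part of \Cref{lemma:constaints-eqv}, yielding the $\ell$-th inequality of (\ref{eq:min-sum-OP}.2). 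There is no substantive obstacle; the only thing to be careful about is keeping the indexing clean, in particular noting that $E_i(\mathbf{x})$ is linear in $\mathbf{x}$ so the right-hand side of (\ref{eq:app-vsums-OP}.3) is exactly $y_\ell - v_i$ in the notation of \Cref{lemma:constaints-eqv}.
\end{proofsketch}
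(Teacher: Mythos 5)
Your proposal is correct and matches the paper's proof: both apply \Cref{lemma:constaints-eqv} once per $\ell \in [t]$ with the substitutions $\mathbf{v} := \mathbf{E}(\mathbf{x})$, $k := \ell$, $c := \sum_{i=1}^{\ell} z_i$, and then aggregate over $\ell$ using the independence of the auxiliary-variable blocks. The only difference is that you spell out the independence of the $\ell$-blocks explicitly, which the paper leaves implicit.
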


\begin{proof}

    Let $\x \in \mathbb{R}^{|S|}$ that satisfies Constraint (\ref{eq:min-sum-OP}.2)  --- that is,
    \begin{align}
        &\sum_{i=1}^{\ell} \expectedValBy{i}{\x} \geq \sum_{i=1}^{\ell} z_i && \forXinY{\ell}{t}\label{eq:con3}
    \end{align}
    Let $1\leq \ell \leq t$. Combining \Cref{lemma:constaints-eqv} with \Cref{eq:con3} where $c := \sum_{i=1}^{\ell} z_i$, $\vv := \mathbf{E}(x)$ and $k:=\ell$, we get this it possible \emph{if and only if} there exist $y_{\ell} \in \mathbb{R}$ and $\mathbf{m}_{\ell} \in \mathbb{R}^N_{\geq 0}$ s.t.
    \begin{align*}
        &{\ell} y_{\ell} - \sum_{i=1}^n m_{{\ell},i} \geq \sum_{i=1}^{\ell} z_i\\
        &m_{{\ell},i} \geq y_{\ell} - E_i(\x) && \forall i \in N\\
        &m_{k,i} \geq 0 && \forall i \in N
    \end{align*}
    
    Putting it all together, we get that 
    $\mathbf{x}$ satisfies Constraint (\ref{eq:min-sum-OP}.2) if-and-only-if there exist $y_{\ell}$ and $m_{\ell,i}$ for $1 \leq \ell \leq t$ and $1 \leq i \leq n$ such that $\left(\mathbf{x}, \mathbf{y}, \mathbf{m}\right)$ satisfies Constraints (\ref{eq:app-vsums-OP}.2--4) --- as required.
    %
    % $\left(\mathbf{x}, \mathbf{y}, \mathbf{m}\right)$ satisfies Constraints (\ref{eq:app-vsums-OP}.3--6).
    %
    %
    %
    % , as the following holds:
    % \begin{align*}
    %     & (\text{\progAppSecond.3}) \Hquad \ell y_{\ell} - \sum_{i=1}^n m_{\ell,i}\geq \sum_{i=1}^{\ell}  z_i && \forXinY{\ell}{t-1} \nonumber \\
    %     & (\text{\progAppSecond.4}) \Hquad t y_t - \sum_{i=1}^{n} m_{t,i} \geq \sum_{i=1}^{t-1}  z_i + z'_t \nonumber \\
    %     & (\text{\progAppSecond.5}) \Hquad m_{\ell,i} \geq y_{\ell} - \sum_{j=1}^{|S|} x_j \cdot u_i(s_j)  && \forXinY{\ell}{t},\Hquad \forXinY{i}{n} \nonumber \\
    %     & (\text{\progAppSecond.6}) \Hquad m_{\ell,i} \geq 0  && \forXinY{\ell}{t},\Hquad \forXinY{i}{n} \nonumber
    % \end{align*}
\end{proof}

% We now return to our \Cref{lemma:constaints-eqv}. 

It is simple to verify that this implies our \Cref{lemma:red-approx-solver-for-P3-to-solver-for-P2}
\begin{lemma*}
% Let $\multApprox \in (0,1]$. 
Let $\left(\xRet, \yRet, \mRet\right)$ 
be a poly-sparse  $\frac{1}{\multApprox}$-approximately-optimal solution for \ref{eq:app-vsums-OP}.~~
Then, $\xRet$ is a poly-sparse $\frac{1}{\multApprox}$-approximately-optimal solution for \ref{eq:min-sum-OP}.
\end{lemma*}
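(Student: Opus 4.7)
The plan is to exploit the structural equivalence between the feasible regions of \ref{eq:min-sum-OP} and \ref{eq:app-vsums-OP} established in \Cref{lemma:linear-const-eqv}, together with the observation that both programs share the same objective $\sum_j x_j$, which depends only on the $\x$-block and is blind to the auxiliary variables $\mathbf{y},\mathbf{m}$.

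First I would unpack the hypothesis. The triple $\left(\xRet, \yRet, \mRet\right) \in \feasP{\ref{eq:app-vsums-OP}}$ satisfies, in particular, Constraint (\ref{eq:app-vsums-OP}.1), which is identical to (\ref{eq:min-sum-OP}.1) on the $\x$-block, and Constraints (\ref{eq:app-vsums-OP}.2--4), so by the ``if'' direction of \Cref{lemma:linear-const-eqv}, $\xRet$ satisfies Constraint (\ref{eq:min-sum-OP}.2). Hence $\xRet \in \feasP{\ref{eq:min-sum-OP}}$. Poly-sparsity of $\xRet$ is inherited directly from poly-sparsity of the triple, since the number of non-zero entries of $\xRet$ is at most the number of non-zero entries of $\left(\xRet, \yRet, \mRet\right)$.

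Next I would verify the approximation guarantee. Let $\x \in \feasP{\ref{eq:min-sum-OP}}$ be arbitrary. By the ``only if'' direction of \Cref{lemma:linear-const-eqv}, there exist auxiliary vectors $\mathbf{y},\mathbf{m}$ such that $(\x,\mathbf{y},\mathbf{m}) \in \feasP{\ref{eq:app-vsums-OP}}$. Since the two programs have the same objective on the $\x$-block, invoking the $\frac{1}{\multApprox}$-approximate-optimality of $\left(\xRet, \yRet, \mRet\right)$ for \ref{eq:app-vsums-OP} yields
\[
\objPofV{\ref{eq:min-sum-OP}}{\xRet} \;=\; \sum_{j=1}^{|S|} x^A_j \;=\; \objPofV{\ref{eq:app-vsums-OP}}{(\xRet, \yRet, \mRet)} \;\leq\; \tfrac{1}{\multApprox}\,\objPofV{\ref{eq:app-vsums-OP}}{(\x, \mathbf{y}, \mathbf{m})} \;=\; \tfrac{1}{\multApprox}\sum_{j=1}^{|S|} x_j \;=\; \tfrac{1}{\multApprox}\,\objPofV{\ref{eq:min-sum-OP}}{\x},
\]
which is exactly the bound required for $\xRet$ to be a $\frac{1}{\multApprox}$-approximately-optimal-sparse-solver output for \ref{eq:min-sum-OP}.

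There is no real obstacle here; the content lies entirely in \Cref{lemma:linear-const-eqv}, which was proved separately. The one thing worth being careful about is the direction of the inequality: both programs are \emph{minimization} programs and $\multApprox \in (0,1]$ gives $\frac{1}{\multApprox} \geq 1$, so the approximately-optimal objective is allowed to exceed the true optimum by at most a factor $\frac{1}{\multApprox}$, which matches the bound delivered above.
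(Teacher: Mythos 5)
Your proof is correct and takes essentially the same approach as the paper: both parts hinge on \Cref{lemma:linear-const-eqv} to transfer feasibility between \ref{eq:min-sum-OP} and \ref{eq:app-vsums-OP}, together with the fact that the two programs share the objective $\sum_j x_j$. The only difference is that you argue the approximation bound directly whereas the paper phrases it as a proof by contradiction; the content is identical.
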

\begin{proof}
     \rmark{To prove that $\xRet$ is a poly-sparse $\frac{1}{\multApprox}$-approximately-optimal solution for \ref{eq:min-sum-OP}, we need to prove that (a) $\xRet$ is a poly-sparse solution for \ref{eq:min-sum-OP}, and (b) its objective value is $\frac{1}{\multApprox}$-approximately-optimal.}
     
     \rmark{(a) First, $\xRet$ is a poly-sparse vector since $\left(\xRet, \yRet, \mRet\right)$ is. 
     Second, since $\left(\xRet, \yRet, \mRet\right)$ is a solution for \ref{eq:app-vsums-OP}, $\xRet$ satisfies Constraint (\ref{eq:app-vsums-OP}.1) which is similar to Constraint (\ref{eq:min-sum-OP}.1).
     In addition, it means that $\left(\xRet, \yRet, \mRet\right)$ satisfies Constraints (\ref{eq:app-vsums-OP}.2--4). 
     By \Cref{lemma:linear-const-eqv}, this means that $\xRet$ satisfies Constraint (\ref{eq:min-sum-OP}.2).
     Thus, $\xRet$ is a solution for \ref{eq:min-sum-OP}.}
     
     \rmark{(b) Suppose by contradiction that $\xRet$ is not $\frac{1}{\multApprox}$-approximately-optimal. As \ref{eq:min-sum-OP} is a minimization program, this means that there exists a solution $\x$ for \ref{eq:min-sum-OP} such that $\objPofV{\ref{eq:min-sum-OP}}{\x} > \frac{1}{\multApprox}\ \objPofV{\ref{eq:min-sum-OP}}{\xRet}$.
     By \Cref{lemma:linear-const-eqv}, this means that there exist $\mathbf{y}$ and $\mathbf{m}$ such that $\left(\x, \mathbf{y}, \mathbf{m}\right)$ is a solution for \ref{eq:app-vsums-OP}. However, as both \ref{eq:min-sum-OP} and \ref{eq:app-vsums-OP} has the same objective, we get that:
     \begin{align*}
         \objPofV{\ref{eq:app-vsums-OP}}{\left(\xRet, \yRet, \mRet\right)} > \frac{1}{\multApprox} \objPofV{\ref{eq:app-vsums-OP}}{\left(\x, \mathbf{y}, \mathbf{m}\right)}
     \end{align*}
     In contradiction to the fact that $\left(\xRet, \yRet, \mRet\right)$ is $\frac{1}{\multApprox}$-approximately-optimal for  \ref{eq:app-vsums-OP}.}
\end{proof}

\section{Primal-Dual Derivation}\label{apx:primal-dual}

This is the primal LP - the program \ref{eq:app-vsums-OP}, in standard form, with the corresponding dual variable shown to the left of each constraint.

\begin{align}
    &\min \quad \sum_{j=1}^{|S|} x_{j} \quad s.t. \tag{\progAppSecond}\\
    q_{\ell}& (1) \Hquad \ell y_{\ell} - \sum_{i=1}^n m_{\ell,i}\geq \sum_{i=1}^{\ell}  z_i && \forXinY{\ell}{t} \nonumber \\
    v_{\ell,i}& (2) \Hquad m_{\ell,i} - y_{\ell} + \sum_{j=1}^{|S|} x_j \cdot u_i(s_j) \geq 0  && \forXinY{\ell}{t},\Hquad \forXinY{i}{n} \nonumber \\
    & (3) \Hquad m_{\ell,i} \geq 0  && \forXinY{\ell}{t},\Hquad \forXinY{i}{n} \nonumber\\
    & (4) \Hquad x_{j} \geq 0 &&  j = 1, \ldots, |S| \nonumber
\end{align}

This is the dual LP - the program \ref{eq:dual-vsums-OP}, in standard form, with the corresponding primal variable shown to the left of each constraint.
\begin{align}
& \max \quad  \sum_{\ell=1}^{t} q_{\ell} \sum_{i=1}^{\ell}z_i  \quad s.t. \tag{\dualApp}\\
&\begin{aligned}
    x_j& (1) \Hquad \sum_{i=1}^n u_i(s_j) \sum_{\ell=1}^{t} v_{\ell,i}\leq 1 && \forall j = 1, \ldots, |S| \nonumber \\
 y_{\ell}& (2) \Hquad \ell q_{\ell} - \sum_{i=1}^n v_{\ell,i} \leq 0 && \forXinY{\ell}{t} \nonumber \\
m_{\ell,i} & (3) \Hquad -q_{\ell} +v_{\ell,i} \leq 0 && \forXinY{\ell}{t},\Hquad \forXinY{i}{n} \nonumber \\
& (4) \Hquad q_{\ell} \geq 0  &&\forXinY{\ell}{t} \nonumber\\
& (5) \Hquad v_{\ell,i} \geq 0  && \forXinY{\ell}{t},\Hquad \forXinY{i}{n} \nonumber
\end{aligned}
\end{align}

% \clearpage
\section{Ellipsoid Method Variant for Approximation}\label{apx:ellipsoid}
This appendix presents a variant of the ellipsoid method designed to approximate linear programs (LPs) that cannot be solved directly due to a large number of variables. The method relies on an approximate separation oracle for the dual program.
The appendix uses standard notation for linear programs (both primal and dual); it is self-contained, and the notations used here are independent of the notation used in the main paper.
The method integrates techniques from  \cite{grotschel_geometric_1993,grotschel_ellipsoid_1981,karmarkar_efficient_1982}.

% \label{lemma:ellipsoid-with-approx-oracle}

\begin{lemma}\label{lemma:ellipsoid-with-approx-oracle}
    Given a $\frac{1}{\multApprox}$-approximate-separation-oracle for the (max.) dual program, ~a poly-sparse $\frac{1}{\multApprox}$-approximately-optimal solution for the (min.) primal program can be obtained in polynomial time.
\end{lemma}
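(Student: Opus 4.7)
The plan is to adapt the classical ellipsoid-plus-separation-oracle technique of Gr\"otschel-Lov\'asz-Schrijver, combined with the Karmarkar-Karp idea of extracting a sparse primal solution from the cutting hyperplanes generated during the ellipsoid run, to our approximate-oracle setting. Denote the primal minimization LP by $P$ and its dual maximization LP by $D$; the dual has polynomially many variables and potentially exponentially many linear constraints indexed by $[N]$. The algorithm maintains a working set $S' \subseteq [N]$ of constraints that the oracle has reported as violated, which will ultimately have polynomial size.

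I would proceed in two phases. First, binary-search over the dual objective value $\gamma$; for each trial $\gamma$, augment $D$ with $b^T y \geq \gamma$ and run the ellipsoid method on the resulting feasibility problem, using the $\frac{1}{\multApprox}$-approximate separation oracle at each center. Whenever the oracle returns a violated constraint, record its index in $S'$ and cut with that hyperplane; whenever it declares $\frac{1}{\multApprox}$-approximate feasibility, accept the current iterate. The usual ellipsoid analysis bounds the total number of iterations, hence $|S'|$, by a polynomial in the input's bit-complexity, independent of $N$. The binary search converges, up to any prescribed $\epsilon > 0$, to a threshold $\gamma^*$ such that (a) some $y^*$ with $b^T y^* = \gamma^*$ is $\frac{1}{\multApprox}$-approximately feasible for $D$, and (b) the relaxation of $D$ keeping only the constraints in $S'$ admits no truly feasible point with objective exceeding $\gamma^* + \epsilon$. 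In the second phase I form $P'$ from $P$ by forcing $x_j = 0$ for every $j \notin S'$, which is a polynomial-size LP I solve exactly to obtain $(x^*, y^*, m^*)$; extended with zeros outside $S'$, the vector $x^*$ is poly-sparse and feasible for $P$ (dropping variables whose value is zero preserves every $P$-constraint).

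The approximation bound then follows from LP duality. The dual of $P'$ is precisely $D'$, the restriction of $D$ to the constraints in $S'$, so $\mathrm{OPT}(P') = \mathrm{OPT}(D') \leq \gamma^* + \epsilon$ by (b). From (a), $\multApprox \cdot y^*$ is truly $D$-feasible: each inequality with positive right-hand side has its $\frac{1}{\multApprox}$-slack tightened back to the true right-hand side by the factor $\multApprox$, while inequalities with right-hand side zero and all sign constraints are preserved under multiplication by $\multApprox > 0$. Hence $\mathrm{OPT}(D) \geq \multApprox \gamma^*$, and strong LP duality for $P$ gives $c^T x^* = \mathrm{OPT}(P') \leq \gamma^* + \epsilon \leq \tfrac{1}{\multApprox}\mathrm{OPT}(P) + \epsilon$, the required approximation modulo a tolerance that can be made polynomially small. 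The main obstacle is the delicate interaction between the approximate oracle and the ellipsoid's convergence analysis: one must verify that even though the oracle may ``falsely'' separate certain points in $P \setminus P_{S'}$, the returned cuts still shrink the ellipsoid at the usual geometric rate and certify (b), and that the binary-search tolerance $\epsilon$ propagates to only a vanishing loss in the final approximation factor, with quantities kept polynomially bounded throughout.
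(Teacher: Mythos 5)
Your proposal is correct and follows essentially the same route as the paper's Appendix F: run the ellipsoid method on the dual with the $\frac{1}{\multApprox}$-approximate separation oracle, retain only the polynomially many constraints actually used for cuts, solve the resulting restricted primal exactly to get a poly-sparse feasible point, and transfer the $\frac{1}{\multApprox}$ factor via LP duality (the paper bounds $b^Ty'$ by weak duality against the true primal optimum, you scale the approximately feasible dual point by $\multApprox$ to make it truly feasible --- these are equivalent manipulations). The only cosmetic difference is that the paper uses a single sliding-objective run with optimality cuts rather than your binary search over the dual objective value, which lets it avoid the extra additive $\epsilon$ that your binary search introduces.
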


The goal is to solve the following linear program (the primal):
\begin{align}
\tag{P}
\begin{split}
\min \quad &c^T \cdot x \\
s.t. \quad &A \cdot x \geq b, \quad x\geq 0;
\end{split}
\end{align}
We assume that (P) has a small number of constraints, but may have a huge number of variables, so we cannot solve (P) directly. We consider its \emph{dual}:
\begin{align}
\tag{D}
\begin{split}
\max \quad & b^T \cdot y \\
s.t. \quad &A^T \cdot y \leq c,\quad y\geq 0.
\end{split}
\end{align}
Assume that both problems have optimal solutions and denote the optimal solutions of (P) and (D) by $x^{*}$ and $y^{*}$ respectively. By the strong duality theorem:
\begin{align}
    c^T \cdot x^{*} = b^T \cdot y^{*}
\end{align}

While (D) has a small number of variables, it has a huge number of constraints, so
we cannot solve it directly either. 
In this Appendix, we show that (P) can be approximated using the following tool:

\begin{definition}
An \emph{approximate separation oracle} (ASO) for the dual LP is an efficient function parameterized by a constant $\multError \geq 0$.
Given a vector $y$  it returns one of the following two answers:
\begin{enumerate}
\item "$y$ is infeasible". In this case, is returns a violated constraint, that is, a row $a_i^T \in A^T$ such that $a_i^T  y > c_i$.
\item "$y$ is \emph{approximately feasible}". 
That means that $A^T y \leq (1+\multError) \cdot c$
\end{enumerate}

\end{definition}
Given the ASO, we apply the ellipsoid method as follows (this is just a sketch
to illustrate the way we use the ASO; it omits some technical details):
\begin{itemize}
    \item Let $E_0$ be a large ellipsoid, that contains the entire feasible region, that is, all $y \geq 0$ for which $A^T y \leq c$.

    \item For $k = 0,1,\dots, K$ (where $K$ is a fixed constant, as will be explained later):
    \begin{itemize}
        \item Let $y_k$ be the centroid of ellipsoid $E_k$.
        
        \item Run the ASO on $y_k$.
        
        \item If the ASO returns "$y_k$ is infeasible" and a violated constraint $a_i^T$, then make a \emph{feasibility cut} --- keep in $E_{k+1}$ only those $y \in E_k$ for which $a_i^T y \leq c_i$.
        
        \item If the ASO returns "$y$ is approximately feasible", then make an \emph{optimality cut} --- keep in $E_{k+1}$ only those $y \in E_k$ for which $b^T y \geq b^T y_k$.
    \end{itemize}
    
    \item From the set $y_0, y_1, \dots, y_K$, choose the point with the highest $b^T \cdot y_k$ among all the approximately-feasible points.
\end{itemize}
Since both cuts are through the center of the ellipsoid, the ellipsoid dilates by a factor of at least $\frac{1}{r}$ at each iteration, where $r > 1$ is some constant (see \cite{grotschel_ellipsoid_1981} for computation of $r$). Therefore, by choosing $K := \log_2 r \cdot L$, where $L$ is the
number of bits in the binary representation of the input, the last ellipsoid $E_K$ is so small that all points in it can be considered equal (up to the accuracy of the binary representation).

The solution $y'$ returned by the above algorithm satisfies the following two conditions:
\begin{equation} \label{mult:y-star-is-approximetly-feasible}
     A^T y' \leq (1+\multError)\cdot c
\end{equation}
\begin{equation} \label{mult:y-star-obj-geq-opt}
     b^T y' \geq b^T y^{*}
\end{equation}
Inequality \eqref{mult:y-star-is-approximetly-feasible} holds since, by definition, $y'$ is approximately-feasible.

To prove \eqref{mult:y-star-obj-geq-opt}, suppose by contradiction that $b^T y^{*} > b^T y'$. 
Since $y^{*}$ is feasible for (D), it is in the initial ellipsoid. 
It remains in the ellipsoid throughout the algorithm: it is removed neither by a feasibility cut (since it is
feasible), nor by an optimality cut (since its value is at least as large as all values used for optimality cuts).
Therefore, it remains in the final ellipsoid, and it is chosen as the highest-valued feasible point rather than $y'$ --- a contradiction.

Now, we construct a reduced version of (D), where there are only at most $K$ constraints --- only the constraints used to make feasibility cuts.
Denote the reduced constraints by $A_{red}^T \cdot y \leq c_{red}$, where $A_{red}^T$ is a matrix containing a subset of at most $K$ rows of of $A^T$, and $c_{red}$ is a vector containing the corresponding subset of the elements of $c$. The reduced-dual LP is:
\begin{equation}
\tag{RD}
\begin{split}
\max  \quad & b^T y \\
s.t. \quad & A_{red}^T \cdot y \leq c_{red}, \quad y\geq 0
\end{split}
\end{equation}
Notice that it has the same number of variables as the program (D). Further, if we had run this ellipsoid method variant on (RD) (instead of (D)), then the result would have been exactly the same --- $y'$.
Therefore, \eqref{mult:y-star-obj-geq-opt} holds for the (RD) too:
\begin{equation} \label{mult:y-star-to-y-redopt}
    b^T y' \geq b^T y^{*}_{red}
\end{equation}
where $y^{*}_{red}$ is the optimal value of (RD).

As $A_{red}^T$ contains a subset of at most $K$ rows of $A^T$, the matrix $A_{red}$ contains a subset of \emph{columns} of $A$.
Therefore, the dual of (RD) has only at most $K$ variables, which are those who correspond to the columns of $A_{red}$:\footnote{Each column of $A$ is associated with a variable of the primal (P).  The variables of (RP) are those who are associated with the columns of $A$ that remain after the reduction process from $A$ to $A_{red}$.}
\begin{equation}
	\tag{RP}
    \begin{split}
     \min \quad &c_{red}^T \cdot x_{red} \\
            s.t. \quad &A_{red} \cdot x_{red} \geq b, \quad x_{red}\geq 0
    \end{split}
\end{equation}
%  reduced-primal
%\er{Note that $A_{red}$ is a matrix with the same number of rows as $A$, but only at most $K$ columns.}
Since (RP) has a polynomial number of variables  and constraints, it can be solved exactly by any LP solver (not necessarily the ellipsoid method).
Denote the optimal solution by $x^{*}_{red}$. 

Let $x'$ be a vector which describes an assignment to the variables of (P), in which all variables that exist in (RP) have the same value as in $x^{*}_{red}$, and all other variables are set to $0$.
It follows that $A \cdot x' = A_{red} \cdot x^{*}_{red}$, therefore, since $x^{*}_{red}$ is feasible for (RP), $x'$ is feasible for (P).
% \erel{
% In second reading, I think this should be made more formal.
% Let $x'$ be a solution to (P), in which all variables that exist in (RP) have the same value as in $x^{*}_{red}$, and all other variables are set to 0.
% We have to prove that 
% (1) $x'$ is feasible for (P);
% (2) $c^T x' \leq (1+\epsilon)\cdot c^T\cdot x^{*}$.
% }
% \eden{better?}
Similarly, $c^T \cdot x' = c^T_{red} \cdot x^{*}_{red}$.
We shall now see that this implies that the objective obtained by $x'$ approximates the objective obtained by $x^{*}$:
% \begin{align*} 
%     &\displayComsoc{\quad}c^T \cdot x' = c^T_{red} \cdot x^{*}_{red} \\
% & \text{By strong duality for the reduced LPs:} \\
%     &\displayComsoc{\quad}=  b^T \cdot y^{*}_{red}\\
%  & \text{By Equation (\ref{mult:y-star-to-y-redopt}):}\\
%      &\displayComsoc{\quad}\leq  b^T\cdot y'\\
%  & \text{By the definition of (P):} \\
%      &\displayComsoc{\quad}\leq  (A \cdot x^{*})^T y'\\
% & \text{By properties of transpose and associativity of multiplication:} \displayComsoc{&& \quad \quad \quad}\\
%      &\displayComsoc{\quad}=  (x^{*})^T (A^T\cdot y') \\
%  & \text{By Equation \eqref{mult:y-star-is-approximetly-feasible}:} \\
%      &\displayComsoc{\quad}\leq  (x^{*})^T ((1+\multError)\cdot c)\\
% & \text{By properties of transpose:}\\
%      &\displayComsoc{\quad} = (1+\multError) \cdot (c^T x^{*})
% \end{align*}

\begin{align*} 
    &c^T \cdot x' = c^T_{red} \cdot x^{*}_{red} \\
& \equWithExp{\text{By strong duality for the reduced LPs}}{=  b^T \cdot y^{*}_{red}}\\
 & \equWithExp{\text{By Equation (\ref{mult:y-star-to-y-redopt})}}{\leq  b^T\cdot y'}\\
 & \equWithExp{\text{By the definition of (P)}}{ \leq  (A \cdot x^{*})^T y'}\\
& \text{By properties of transpose}\\
&\quad  \text{and associativity of multiplication:} \\
& =  (x^{*})^T (A^T\cdot y') \\
 & \equWithExp{\text{By Equation \eqref{mult:y-star-is-approximetly-feasible}}}{\leq  (x^{*})^T ((1+\multError)\cdot c)}\\
& \equWithExp{\text{By properties of transpose}}{= (1+\multError) \cdot (c^T x^{*})}
\end{align*}
So, $x'$ ($x^{*}_{red}$ with all missing variables set to $0$) is an approximate solution to the primal LP (P) --- as required.

Notice that the number of entries with a non-zero value in $x'$ is at most polynomial, and therefore it is a poly-sparse.

\subsection{Using a Randomized Approximate Separation Oracles}\label{sec:random-sep-oracle}
Here, we allow the oracle to also be \emph{half-randomized}, that is, when it says that a solution is infeasible, it is always correct \rmark{and returns a violated constraint}; however, when it says that a solution is approximately feasible, it is only correct with some probability $p \in [0,1]$.
\rmark{Let $E$ be an upper bound on the number of iteration of the ellipsoid method on the given input when operating with a deterministic oracle; we prove that a half-randomized oracle can be utilized as follows:}
\begin{lemma}\label{lemma:random-ellipsoid}
    Given a \emph{half-randomized} $\frac{1}{\multApprox}$-approximate-separation-oracle for the (max.) dual program, with success probability $p \in (0,1]$,
    ~~a poly-sparse $\frac{1}{\multApprox}$-approximately-optimal solution for the (min.) primal program can be obtained in polynomial time with probability $p^{E}$.
    %  $\left(\mathbf{x'}, \mathbf{y}, \mathbf{m}\right)$
\end{lemma}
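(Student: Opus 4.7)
The plan is to run the same ellipsoid variant as in the proof of Lemma~\ref{lemma:ellipsoid-with-approx-oracle}, substituting the half-randomized oracle for the deterministic one, and then to argue separately about feasibility, which will hold unconditionally, and about approximate optimality, which will hold with probability $p^{E}$.

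First I would establish feasibility unconditionally. Whenever the oracle answers ``infeasible'' it is, by hypothesis, always correct and returns a genuine violated row $a_i^{T}$ of $A^{T}$, so every feasibility cut made by the method is sound and every row placed into $A_{red}^{T}$ is a true row of $A^{T}$. Consequently the reduced primal (RP) is a genuine restriction of (P) in which the variables outside the chosen subset are pinned to $0$. The exact optimum $x^{*}_{red}$ of (RP), extended by zeros to $x'$, therefore satisfies $A\cdot x' = A_{red}\cdot x^{*}_{red}\geq b$ and $x'\geq 0$, so $x'$ is feasible for (P). This step reuses the feasibility paragraph of Appendix~\ref{apx:ellipsoid} verbatim, since that argument never invokes the correctness of an ``approximately feasible'' oracle answer.

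Next I would argue approximate optimality conditional on the event $\mathcal{E}$ that every ``approximately feasible'' response the oracle issues during the run is in fact correct, i.e., that $A^{T}y_k \leq \tfrac{1}{\multApprox}\,c$ truly holds at each such queried point $y_k$. Since the ellipsoid variant terminates after at most $E$ iterations, at most $E$ oracle calls are made; modeling each invocation as using fresh independent randomness, the product rule yields $\Pr[\mathcal{E}] \geq p^{E}$ (and more generally, the chain rule gives the same bound whenever the per-call conditional success probability is at least $p$). On $\mathcal{E}$ the transcript the algorithm sees is indistinguishable from a run with a fully correct deterministic $\tfrac{1}{\multApprox}$-approximate-separation oracle, so Lemma~\ref{lemma:ellipsoid-with-approx-oracle} applies unchanged and the returned $x'$ is a poly-sparse $\tfrac{1}{\multApprox}$-approximately-optimal solution to (P).

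The main obstacle I foresee is pinning down the probabilistic model for the oracle: the query point in iteration $k+1$ is a function of earlier oracle answers, so the events ``call $k$ is correct'' are not trivially jointly independent. To obtain the clean bound $p^{E}$ cleanly, I would either adopt the standard convention that each oracle invocation draws fresh independent internal randomness, or state the assumption in the conditional form $\Pr[\text{call } k\text{ correct}\mid\text{history}]\geq p$ and invoke the chain rule, both of which yield $\Pr[\mathcal{E}]\geq p^{E}$ and complete the proof.
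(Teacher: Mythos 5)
Your proof takes essentially the same approach as the paper's: feasibility holds unconditionally because every ``infeasible'' answer (and hence every row added to $A_{red}^{T}$) is genuinely correct, while approximate optimality holds on the event that all ``approximately feasible'' answers are correct, which has probability at least $p^{E}$ over at most $E$ oracle calls. You are in fact a bit more careful than the paper on the independence subtlety (the paper simply asserts the calls are independent, whereas you note the adaptivity of the query points and resolve it via fresh per-call randomness or the chain rule), but this is a refinement of the same argument rather than a different route.
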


\begin{proof}

    Since the ellipsoid method variant is iterative, and since the oracle calls are independent, there is a probability $p^E$ that the oracle answers correctly in each iteration, and so, the overall process performs as before. 
    We first explain why, using a half-randomized oracle, this ellipsoid method variant \emph{always} returns a feasible solution to the primal (even if the oracle was incorrect).
    Since the oracle is always correct when it determines that a solution is infeasible and as the construction of (RD) is entirely determined by the violated constraints, we can use the same arguments to conclude that $x'$ would still be a feasible solution for P.

    However, since the oracle might be mistaken when it determines that a solution is approximately-feasible for the dual, ellipsoid method variant might return a solution that not necessarily have an approximately optimal objective value.
    On the other hand, if the oracle is correct in all of its operations, the ellipsoid method variant would indeed produce an approximately optimal solution.
    That is, with probability $p^E$ the ellipsoid method variant returns an approximately optimal solution (as was for the deterministic oracle). 

    % \erel{How many iterations are required to get $p$?}
\end{proof}

\section{Using a Randomized Black-box ~~ (Proof of \Cref{lemma:randomized-solvers})}\label{apx:random}

This appendix extends our main result to the use of a \emph{randomized} black-box for utilitarian welfare, defined as follows: the black-box returns a selection that $\multApprox$-approximates the optimal utilitarian welfare with probability $p$, and an arbitrary selection otherwise.
\rmark{Section \ref{sec:random-req-bg} provides a summary of known claims we use inside the proof.} 

Recall that \Cref{lemma:randomized-solvers} says that:

\begin{theorem*}
    Given a randomized $\multApprox$-approximate black-box for the utilitarian welfare with success probability $p \in (0,1]$. ~~
    An $\multApprox$-leximin-approximation
    can be obtained with probability $p$ in time polynomial in $n$ and the running time of the black-box. 
\end{theorem*}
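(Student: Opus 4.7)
The plan is to split the argument exactly as the sketch suggests: first establish that the output is always a feasible \dist regardless of oracle errors, and then separately establish the $\multApprox$-leximin-approximation guarantee with probability at least $p$ via probability boosting. I would begin by locating the two places the black-box is invoked inside the whole pipeline: (i) the computation of the upper bound $u := \tfrac{1}{\multApprox} U'$ for the binary search inside \Cref{alg:shallow-solver}, and (ii) the approximate separation oracle of \Cref{lemma:red-approx-P3-to-utiliterian} that is plugged into the ellipsoid method of \Cref{lemma:ellipsoid-with-approx-oracle}. Any failure of the randomized black-box manifests itself only through one of these two channels.

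For feasibility, I would argue that failures at either site cannot corrupt feasibility of the returned solution. A wrong $u$ at site (i) only restricts the range of the binary search; each midpoint $\ztCons$ is still handed to a genuine weak objective-feasibility-oracle (\rom{3}), so the retSol maintained in \Cref{alg:shallow-solver} is always in $\feasC$. At site (ii), I would appeal to the randomized ellipsoid variant (\Cref{lemma:random-ellipsoid} in \Cref{sec:random-sep-oracle}), which shows that since the oracle's ``infeasible'' verdicts are always correct, the reduced primal (RP) extracted at the end has every constraint arising from a truly violated row; its optimal $x^*_{red}$, padded with zeros, is therefore genuinely feasible for \ref{eq:app-vsums-OP}. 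Propagating this through \Cref{lemma:red-approx-solver-for-P3-to-solver-for-P2}, \Cref{lemma:red-feas-test-to-approx-P2}, and \Cref{lemma:red-shallow-solver-to-feas-test} yields that the shallow solver always returns some $\x^t \in \feasC$, so the final $\x^n$ produced by \Cref{alg:basic-ordered-Outcomes} is always in $X$.

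For optimality, the main obstacle is that the overall pipeline calls the randomized black-box many times, and each failure could corrupt the $\multApprox$-approximation guarantee. The remedy is to boost each invocation. Let $N$ be an upper bound on the total number of black-box invocations across the entire execution of \Cref{alg:basic-ordered-Outcomes}; a careful accounting bounds $N$ by a polynomial in $n$ and the input bit-length, because the outer loop has $n$ iterations, each iteration does one $O(\log(1/\epsilon))$ binary search, and each call to the shallow solver triggers a polynomial-length run of the ellipsoid method, where each separation query issues a single call to the black-box. Replacing each invocation by $k$ independent calls and keeping the one with the largest utilitarian value yields a boosted call that succeeds with probability at least $1-(1-p)^k$. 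Choosing $k = \Theta(\log N)$ large enough that $N(1-p)^k \leq 1-p$, the union bound gives that every boosted call meets its $\multApprox$-guarantee with probability at least $p$. Conditioning on this event, the deterministic analysis of \Cref{thm:maim} applies verbatim and certifies an $\multApprox$-leximin-approximation.

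Finally, I would verify polynomial-time complexity: $N$ is polynomial by the accounting above, $k = \Theta(\log N)$ is polylogarithmic, so the total number of black-box invocations is $N\cdot k$, which is polynomial in $n$ and the running time of the black-box — matching the stated bound. The only place where subtlety is required is the accounting of $N$; once that is done, the union bound and the deterministic analysis of \Cref{thm:maim} close the argument immediately.
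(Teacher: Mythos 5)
Your proposal follows the same two-part structure as the paper's own proof (feasibility unconditionally, then optimality via boosting), identifies the same two invocation sites for the black-box, appeals to the same randomized-ellipsoid lemma for the separation-oracle part, and uses the same boosting construction — repeat each call $\Theta(\log(\text{total number of calls}))$ times and keep the best. Your union-bound computation $1 - N(1-p)^k \geq p$ is exactly the paper's Bernoulli-inequality step $(1-(1-p)^q)^k \geq 1 - k(1-p)^q = p$, just with different variable names ($N,k$ versus $k,q$), so this is essentially the paper's argument.
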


\begin{proof}
    \rmark{We first prove that the use of the randomized black-box does not effect feasibility --- that is, the output returned by Algorithm \ref{alg:basic-ordered-Outcomes} is always an \dist. Then, we prove the required guarantees regarding its optimally.}
    
    Recall that the black-box is used only in two places -- as part of the binary search to obtain an upper bound and as part of the separation oracle for \ref{eq:dual-vsums-OP}.
    
    Inside the binary search, if the obtained value does not approximate the optimal utilitarian welfare, it might cause us to overlook larger possible objective values. 
    While this could affect optimality, it will not impact feasibility.
    
    As for the separation oracle, this change makes the oracle \emph{half-randomized} as described in \Cref{sec:random-sep-oracle} since we might determine that Constraint (\dualApp.1) is approximately-feasible even though it is not. 
    However, by \Cref{lemma:random-ellipsoid}, the solution returned by the ellipsoid method will still be feasible for the primal, which makes the solution we eventually return through Algorithm \ref{alg:basic-ordered-Outcomes} also feasible (i.e., in $X$).

    We can now move on to the optimally guarantees. 
    Let $k$ be an upper bound on the total number of calls for the black-box.
    % \eden{should explain somehow that this number if polynomial in the input size. maybe:}
    \begin{align*}
        k = n \cdot (\underset{\text{for upper bound}}{\underbrace{1}} + \underset{\text{binary search}}{\underbrace{\log{U}}}\cdot \underset{\text{ellipsoid}}{\underbrace{E}})
    \end{align*}
    Notice that $k$ is polynomial in the problem size.
    
    Under the assumption that success events of different activations are independent, it is clear that if the solver succeeds in all the executions, the returned solution will be a leximin approximation (as everything behaves as it would with a deterministic solver). 
    Therefore, with probability $p^k$, we return a leximin approximation.
    \rmark{However, $p^{k} < p$ since $k >1$ and $p \in (0,1)$.}
    
    To increase the probability of success, we can use a new black-box that operates as follows. It repeatedly calls the given black-box and then returns the best outcome.
    The probability of success of this new black-box is the probability that at least one of these operations is successful, which can also be calculated as $1$ minus the probability that all operation fail.
    Specifically, by operating the giving (original) black-box $q \geq 1$ times, the success probability is $(1-(1-p)^q)$.
    We then get that the success probability of the overall success probability of the algorithm becomes $\left( 1-(1-p)^q\right)^k$, which is at least $p$ for $q:=  \lceil \frac{\log{ (1/k)}}{\log{(1-p)}} + 1 \rceil$ :
    \begin{align*}
        &\left( 1-(1-p)^q\right)^k \\
        &\geq 1-k \cdot (1-p)^q  && \text{(By \Cref{claim:poly-bound-1})}\\
        &= 1-k \cdot (1-p)^{\lceil \frac{\log{ (1/k)}}{\log{(1-p)}} + 1 \rceil} \\
        &\geq 1-k \cdot (1-p)^{\frac{\log{ (1/k)}}{\log{(1-p)}} + 1} &&\text{(By \Cref{claim:poly-bound-2})}\\ 
        &= 1-k \cdot (1-p)^{\frac{\log{ (1/k)}}{\log{(1-p)}} } (1-p)\\
        &= 1-k \cdot \frac{1}{k} \cdot (1-p) = p && \text{(By \Cref{claim:poly-bound-3})}
    \end{align*}

    It is important to note that $q$ is $\mathcal{O}(\log{k})$:
    \begin{align*}
        q&=  \lceil \frac{\log{ (1/k)}}{\log{(1-p)}} + 1 \rceil \leq \frac{\log{ (1/k)}}{\log{(1-p)}} + 2\\
        &=\frac{\log{ k}}{\log{\frac{1}{(1-p)}}}+2
    \end{align*}
    
    As $k$ is polynomial in the problem size, \Cref{alg:basic-ordered-Outcomes} remains polynomial in the problem size even when using this new black-box.

    % \eden{========= alternative}
    % We can also take $q = \lceil - \log{k}/\log{(1-p)}\rceil$.... the overall success probability of the algorithm becomes at least $p$:
    % \begin{align*}
    %     &\left( 1-(1-p)^q\right)^k \\
    %     &\geq 1-k(1-p)^q\\
    %     & \geq 1-k(1/k) = ???
    % \end{align*}

    % Lastly, to increase the success probability to $p$, we can execute the solver multiple times (with the number of executions determined by $k$) and selecting the solution with the highest value. This approach improves the success probability of the solver, which in turn improves the success probability of the algorithm.

% \erel{How many iterations are required to get $p$?}

    % For completeness, we describe the process.
    % To increase the probability of success in a sequence of $k$ independent trials, where each trial has a success probability of \(p\), we can repeat each trial multiple times and taking the best outcome.
    % We want to determined the number of times \(m\). 
    % Notice that the probability that at least one of these \(m\) repetitions of a single trial is successful equals $1$ minus the probability that all trials fail, thus it equals: $ 1 - (1-p)^m$.
    % Now, to find $m$ that ensures that the probability of success in the entire sequence of is at least \(p\). Thus $k$ need to satisfies the following
    % \[
    % 1 - (1-p)^m \geq p
    % \]
    % This inequality implies that:
    % \[
    % m \geq \frac{\log(1 - p^n)}{\log(1 - p)}
    % \]
    % This expression provides the minimum number of repetitions \(m\) needed to ensure that the overall probability of success across the sequence of \(n\) trials is at least \(p\).
    
    % See 

\end{proof}

\subsection{Required Background}\label{sec:random-req-bg}

\begin{claim}\label{claim:poly-bound-1}
    For any $\epsilon \in (0,1)$ and $k \in \mathbb{Z}_{+}$:
    \begin{align*}
        (1 - \epsilon)^k \geq 1 - k \cdot \epsilon
    \end{align*}
\end{claim}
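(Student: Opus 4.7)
The statement is the classical Bernoulli inequality $(1-\epsilon)^k \ge 1 - k\epsilon$ for $\epsilon \in (0,1)$ and $k \in \mathbb{Z}_+$, and my plan is to prove it by straightforward induction on $k$.

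For the base case $k=1$, both sides equal $1-\epsilon$, so the inequality holds with equality. For the inductive step, assume $(1-\epsilon)^k \ge 1 - k\epsilon$. Since $\epsilon \in (0,1)$, the factor $(1-\epsilon)$ is strictly positive, so multiplying both sides of the inductive hypothesis by $(1-\epsilon)$ preserves the inequality and yields
\[
(1-\epsilon)^{k+1} \;\ge\; (1-k\epsilon)(1-\epsilon) \;=\; 1 - (k+1)\epsilon + k\epsilon^2 \;\ge\; 1 - (k+1)\epsilon,
\]
where the last inequality uses $k\epsilon^2 \ge 0$. This completes the induction.

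The proof has no real obstacle: the only subtlety is ensuring that the multiplication step preserves the direction of the inequality, which is guaranteed by $\epsilon < 1$ implying $1-\epsilon > 0$. Note that the hypothesis $\epsilon > 0$ is not actually needed for the inequality itself (it also holds for $\epsilon = 0$ and even for $\epsilon \le 0$ under suitable ranges), but the paper only invokes the claim in a regime where $\epsilon = 1-p \in (0,1)$, so restricting to this range is harmless. No further machinery from the paper is required.
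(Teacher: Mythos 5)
Your induction argument is correct and is the standard proof of Bernoulli's inequality: the base case $k=1$ gives equality, and the inductive step multiplies by $1-\epsilon>0$ and drops the nonnegative term $k\epsilon^2$. The paper states this claim in its ``Required Background'' subsection without supplying a proof, treating it as a well-known fact, so there is nothing in the paper to compare against; your write-up simply fills in that omitted (routine) verification, and it does so correctly.
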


\begin{claim}\label{claim:poly-bound-2}
For any $a \in (0,1)$, and $c>b>0$:
    \begin{align*}
        a^c \leq a^b
    \end{align*}
\end{claim}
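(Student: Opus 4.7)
The plan is to reduce the claim to the elementary fact that $a^{c-b} \le 1$ whenever $a \in (0,1)$ and $c - b > 0$, then multiply. Concretely, I would write
\[
a^c \;=\; a^{b + (c-b)} \;=\; a^b \cdot a^{c-b},
\]
so it suffices to show $a^{c-b} \le 1$, since multiplying both sides by the positive quantity $a^b$ preserves the inequality.

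Next I would verify $a^{c-b} \le 1$. Let $d := c - b > 0$. Since $a \in (0,1)$, we have $\ln a < 0$, hence $d \ln a < 0$, and therefore $a^d = e^{d \ln a} < 1$. Combining with the factorization above gives $a^c = a^b \cdot a^d \le a^b \cdot 1 = a^b$, as required.

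An even shorter alternative, which I would mention as a remark, is to observe that the function $f(x) := a^x = e^{x \ln a}$ has derivative $f'(x) = (\ln a)\,a^x < 0$ for $a \in (0,1)$, so $f$ is strictly decreasing on $\mathbb{R}$; in particular $c > b$ implies $f(c) < f(b)$, i.e., $a^c < a^b$.

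There is no real obstacle here — the claim is a textbook monotonicity fact used in the preceding probability-boosting calculation (where it is applied with $a = 1-p \in (0,1)$ and exponents $\lceil \tfrac{\log(1/k)}{\log(1-p)} + 1 \rceil \ge \tfrac{\log(1/k)}{\log(1-p)} + 1$). The only minor care point is the direction of the inequality: because $a < 1$, a larger exponent yields a \emph{smaller} value, which is precisely why the ceiling in the definition of $q$ is absorbed in the desired direction in the main proof.
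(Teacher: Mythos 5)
Your proof is correct. The paper itself gives no proof of this claim at all --- it is stated in the ``Required Background'' subsection of the appendix as a standard fact --- so there is nothing to compare against; your factorization $a^c = a^b \cdot a^{c-b}$ together with $a^{c-b} = e^{(c-b)\ln a} < 1$ (or, equivalently, the strict monotonic decrease of $x \mapsto a^x$ for $a \in (0,1)$) is a complete and correct verification, and your remark about how the claim absorbs the ceiling in the definition of $q$ accurately reflects its role in the probability-boosting calculation.
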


\begin{claim}\label{claim:poly-bound-3}
For any $a,b >0$:
    \begin{align*}
        b^{\frac{\log{a}}{\log{b}}} = a
    \end{align*}
\end{claim}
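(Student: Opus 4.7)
The plan is to prove the identity $b^{\log a / \log b} = a$ for $a, b > 0$ (with the implicit assumption $b \neq 1$, needed so that $\log b \neq 0$) by a direct one-line logarithmic computation. Since both sides are positive reals, it suffices to show that their logarithms (in any fixed base) agree, because $\log$ is injective on $(0, \infty)$.

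My approach is to apply $\log$ to the left-hand side and use the standard power rule $\log(x^y) = y \log x$. This gives
\[
\log\!\left(b^{\log a / \log b}\right) \;=\; \frac{\log a}{\log b} \cdot \log b \;=\; \log a,
\]
so the logarithms of the two sides coincide, and injectivity of $\log$ yields $b^{\log a / \log b} = a$.

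There is essentially no obstacle here: the only subtlety is noting that $b = 1$ must be excluded so that $\log b \neq 0$ and the exponent $\log a / \log b$ is well-defined, and that the base of the logarithm is irrelevant since numerator and denominator scale by the same factor under a change of base. The full proof is a single display, so no further breakdown into steps is needed.
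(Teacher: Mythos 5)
Your proof is correct: taking logarithms of both sides and applying the power rule $\log(x^y)=y\log x$ immediately gives $\log a$, and injectivity of $\log$ on $(0,\infty)$ finishes the argument. The paper states this claim as known background and supplies no proof of its own, so there is nothing to compare against; your one-line derivation is exactly the standard argument one would expect. Your caveat that $b\neq 1$ is needed (so that $\log b\neq 0$ and the exponent is defined) is a genuine, if minor, gap in the paper's statement rather than in your proof; in the paper's only application the claim is invoked with $b=1-p$ for $p\in(0,1)$, so $b\in(0,1)$ and the hypothesis is satisfied.
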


% \begin{proof}
%     \begin{align*}
%         &(1) \frac{\log{a}}{\log{b}} = \log_b{a}, &(2) b^{\log_b{a}} =a\\
%         \Rightarrow &(3) 
%     \end{align*}
% \end{proof}
\section{Definitions for Leximin Approximation: Comparison}\label{apx:def-lex-approx}

In this paper, we propose a new definition for leximin-approximation that applies only problems considered here (for which using a lottery is meaningful). 
There are other definitions for leximin approximation. 
This appendix provides an extensive comparison between the different types of definitions for leximin-approximation.
% , including examples.
However, to make comparison easier, we describe the definitions only for our model.

To illustrate the difference, we use a slightly modified version of an example given by \cite{abernethy2024lexicographic} (Appendix B)\footnote{The definition considered in \cite{abernethy2024lexicographic} is for additive approximation, we adapt it all for multiplicative approximations.} to compare their definition to the one by \citet{hartman2023leximin}.
\begin{example}\label{example:for-def}
    We have $3$ agents and only $4$ possible outcomes in $X$, where:
    \begin{align*}
        &\mathbf{E}^{\uparrow}(\x^1) = (10, 10, 100)\\
        &\mathbf{E}^{\uparrow}(\x^2) = (9, 9, 90)\\
        &\mathbf{E}^{\uparrow}(\x^3) = (9, 50, 50)\\
        &\mathbf{E}^{\uparrow}(\x^4) = (8, 1000, 1000)
    \end{align*}
\end{example}

For simplicity, we consider $\multApprox = 0.9$.
Notice that the optimal solution is $\x^1$.

\paragraph{Our Definition.}
Recall that our definition says that an \dist $\xt$ is $\multApprox$-leximin-approximation if $\mathbf{E}(\xt) \weaklyPreferred \multApprox \mathbf{E}(\x)$ for any $x \in X$.

In Example \ref{example:for-def}, our definition says that the three solutions $\x^1, \x^2$, and $\x^3$ are $0.9$-approximations. However, $x_4$ is not since $0.9 \cdot \mathbf{E}(\x^1) \succ \mathbf{E}(\x^4)$.

\paragraph{\citet{hartman2023leximin}.}
Recently, \citet{hartman2023leximin} proposed a general definition for leximin-approximation,\footnote{The definition in \cite{hartman2023leximin} capture both -additive and multiplicative errors, combined; here we only consider multiplicative errors.} which in our context can be described as follows. 
First, an \dist $\x$ is said to be $\multApprox$-preferred over another \dist $\xt$, denoted by $\x \succ_{\multApprox} \xt$, if there exists an $1 \leq k \leq n$ such that $\expectedValBy{i}{\x} \geq \expectedValBy{i}{\xt}$ for $i < k$ and $\expectedValBy{k}{\x} > \frac{1}{\multApprox}\expectedValBy{k}{\xt}$.
Then, an \dist $\xt$ is said to be $\multApprox$-leximin-approximation if there is \emph{no} $\x$ such that $\x \succ_{\multApprox} \xt$.

In Example \ref{example:for-def}, their definition says that the two solutions $\x_1$ and $\x_3$ are $0.9$-approximations. However, both $\x^2$ and $\x^4$ are not - $\x^2$ is not since $\x^3 \succ_{0.9} \x^2$ (for $k=2$), while $\x^4$ is not since $\x^1 \succ_{0.9} \x^4$  (for $k=1$).

This definition is stronger than ours:
\rmark{
\begin{claim}
    Any $\multApprox$-leximin-approximation according to the definition of \citet{hartman2023leximin} is also a $\multApprox$-leximin-approximation according to our definition, but the opposite is not true. 
\end{claim}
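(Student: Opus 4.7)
The claim has two parts: (i) Hartman's $\multApprox$-approximation implies ours, and (ii) the converse fails. I plan to handle (i) by contrapositive and (ii) by exhibiting a separating instance taken directly from Example~\ref{example:for-def}.

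For (i), suppose $\xt$ is \emph{not} an $\multApprox$-leximin-approximation in our sense. Then some $\x \in X$ satisfies $\multApprox \mathbf{E}(\x) \leximinPreferred \mathbf{E}(\xt)$, meaning there is an index $k$ with $\multApprox \orderedVby{i}(\x) = \orderedVby{i}(\xt)$ for $i<k$ and $\multApprox \orderedVby{k}(\x) > \orderedVby{k}(\xt)$. Since $\multApprox \in (0,1]$, dividing through gives $\orderedVby{i}(\x) = \frac{1}{\multApprox}\orderedVby{i}(\xt) \geq \orderedVby{i}(\xt)$ for $i<k$, and $\orderedVby{k}(\x) > \frac{1}{\multApprox}\orderedVby{k}(\xt)$. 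By Hartman et al.'s definition, this is exactly $\x \succ_{\multApprox} \xt$, witnessing that $\xt$ is not an $\multApprox$-approximation in their sense either. Contrapositively, every Hartman-approximation is also one of ours.

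For (ii), the separation is witnessed by $\x^2$ in Example~\ref{example:for-def} with $\multApprox=0.9$. I verify our definition directly by checking $\mathbf{E}(\x^2) = (9,9,90)$ against $0.9\,\mathbf{E}(\x^j)$ for each $j \in \{1,2,3,4\}$: these are $(9,9,90)$, $(8.1,8.1,81)$, $(8.1,45,45)$, $(7.2,900,900)$ respectively, and in each case $\mathbf{E}(\x^2)$ weakly dominates in the leximin order (equal to the first, and strictly larger in the smallest coordinate for the rest). Meanwhile $\x^3 \succ_{0.9} \x^2$, since $\orderedVby{1}(\x^3)=9 \geq 9 = \orderedVby{1}(\x^2)$ and $\orderedVby{2}(\x^3)=50 > 10 = \tfrac{1}{0.9}\orderedVby{2}(\x^2)$, so Hartman's definition rules $\x^2$ out.

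There is no real obstacle here; the only subtlety is keeping the directions of the inequalities straight when translating between the two formulations and remembering that dividing by $\multApprox \le 1$ can only weaken a $\geq$ into a still-$\geq$ statement (rather than reverse it).
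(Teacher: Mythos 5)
Your proposal is correct and follows essentially the same route as the paper: the forward direction by contrapositive (a violation of the paper's definition yields a witness $\x \succ_{\multApprox} \xt$ in the sense of \citet{hartman2023leximin}), and the converse refuted via $\x^2$ in Example~\ref{example:for-def} with $\multApprox = 0.9$. Your explicit coordinate-by-coordinate verification of the example is a slightly more detailed write-up of the same check the paper performs.
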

}
\begin{proof}
    \rmark{Let $\xt \in X$. We first prove that if $\xt$ is not an $\multApprox$-leximin-approximation according to our definition then $\xt$ is not an $\multApprox$-leximin-approximation according to their definition. 
    Clearly, this implies that if $\xt$ is an approximation according to their definition, it is also an approximation according to our definition.
    If $\xt$ is not an approximation according to our definition, this means that there exists an $\x \in X$ such that $\multApprox \mathbf{E}(\x) \leximinPreferred \mathbf{E}(\xt)$. Which means that there exists an integer $k \in [n]$ such that $\multApprox \expectedValBy{i}{\x} = \expectedValBy{i}{\xt}$ for $i<k$ and $\multApprox\expectedValBy{k}{\x} > \expectedValBy{k}{\xt}$.
    For any $i<k$, as $\multApprox \in (0,1]$, we get that $\multApprox \expectedValBy{i}{\x}  = \expectedValBy{i}{\xt}$ implies $\expectedValBy{i}{\x}  \geq  \expectedValBy{i}{\xt}$.
    For $k$, it is easy to see that $E_k(\x) > \frac{1}{\multApprox} E_k(\xt)$.
    Therefore, $\x \succ_{\multApprox} \xt$. But this means that $\xt$ is not an approximation according to their definition.}

    \rmark{To see that the opposite is not true, consider \Cref{example:for-def} and observe that $\x^2$ is an approximation according to our definition but it is not according to their.}
\end{proof}
% \erel{Prove in a formal proposition}
However, this paper proves that our definition can be obtained for many problems, whereas their approximation appears to be very challenging to obtain.

\paragraph{Element-Wise.} 
Another common definition is an element-wise approximation, e.g., in \cite{abernethy2024lexicographic} and \cite{Kleinberg20012Routing} (in which it is called a coordinate-wise approximation). In our context, it can be described as follows. An \dist $\xt$ is said to be $\multApprox$-leximin definition if $\expectedValBy{i}{\x} \geq \multApprox \expectedValBy{i}{\x^*}$ for all $i \in N$, where $\x^*$ is the leximin-optimal solution. 

In Example \ref{example:for-def}, this definition says that the two solutions $\x_1$ and $\x_2$ are $0.9$-approximations. However, both $\x^3$ and $\x^4$ are not - $\x^3$ is not since $\x^3_3 = 50 < 90 = 0.9 \cdot 100 = 0.9 \cdot \x^2_2$, while $\x^4$ is not since $\x^*_1 = 8 < 9 = 0.9 \cdot 10 = 0.9 \cdot \x^*_1$ (recall that $\x^* = \x^1$ in this example).

This definition is also stronger than ours in the same sense (as described for \cite{hartman2023leximin}):
\rmark{
\begin{claim}
    Any $\multApprox$-leximin-approximation according to the element-wise definition is also a $\multApprox$-leximin-approximation according to our definition, but the opposite is not true. 
\end{claim}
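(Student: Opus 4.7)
The plan is to prove the two halves of the claim separately: first the forward direction (element-wise approximation implies our approximation), and then a concrete counterexample for the converse.

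For the forward direction, I would fix an $\xt$ satisfying the element-wise condition, so that $\expectedValBy{i}{\xt} \geq \multApprox \cdot \expectedValBy{i}{\x^*}$ for every $i \in N$ where $\x^*$ is leximin-optimal, and I would fix an arbitrary $\x \in X$. The target is $\mathbf{E}(\xt) \weaklyPreferred \multApprox \mathbf{E}(\x)$, and my strategy is a two-step chain passing through $\x^*$. First, I would argue that $\mathbf{E}(\xt) \weaklyPreferred \multApprox \mathbf{E}(\x^*)$: the element-wise hypothesis is already a coordinate-wise inequality on the sorted vectors of both sides, and such pointwise dominance between two sorted vectors directly yields leximin-preference (if the sorted vectors coincide we get leximin-equivalence; otherwise the smallest index where they differ strictly witnesses strict preference). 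Second, I would use the leximin-optimality of $\x^*$ to get $\mathbf{E}(\x^*) \weaklyPreferred \mathbf{E}(\x)$ and observe that scaling both sides by the non-negative constant $\multApprox$ preserves the leximin order, giving $\multApprox \mathbf{E}(\x^*) \weaklyPreferred \multApprox \mathbf{E}(\x)$. Transitivity of $\weaklyPreferred$ then yields the conclusion.

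For the converse, I would invoke \Cref{example:for-def}: the distribution $\x^3$, with sorted expected-utility vector $(9, 50, 50)$, is (as already noted in the discussion above) a $0.9$-leximin-approximation according to our definition, yet it fails the element-wise definition because $\expectedValBy{3}{\x^3} = 50 < 90 = 0.9 \cdot \expectedValBy{3}{\x^*}$ where $\x^* = \x^1$. This single witness suffices to refute the reverse implication.

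The only delicate step is the verification that pointwise dominance between two already-sorted non-negative vectors implies leximin-preference; this is essentially immediate but deserves an explicit sentence, since it is what drives the forward direction. Everything else reduces to two routine facts about $\weaklyPreferred$: monotonicity under multiplication by a non-negative scalar, and transitivity (a consequence of the dichotomy in \Cref{obs:leximin-order-total}).
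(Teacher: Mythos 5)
Your proof is correct and follows essentially the same route as the paper's: chain through the leximin-optimal $\x^*$, showing sorted-coordinate-wise dominance gives $\mathbf{E}(\xt)\weaklyPreferred\multApprox\mathbf{E}(\x^*)$, then combining with $\mathbf{E}(\x^*)\weaklyPreferred\mathbf{E}(\x)$, and using $\x^3$ from \Cref{example:for-def} for the converse. You are slightly more careful than the paper in spelling out that the second inequality must first be scaled by $\multApprox$ before transitivity is applied (the paper compresses this into a single appeal to "transitivity"), but this is a presentational refinement, not a different argument.
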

}
\begin{proof}
    \rmark{Let $\x^* \in X$ be a leximin-optimal \dist and let $\xt \in X$ be an $\multApprox$-leximin-approximation according to the element-wise definition.
    This means that $\expectedValBy{i}{\xt} \geq \multApprox \expectedValBy{i}{\x^*}$ for any $i \in [n]$.
    We first prove that this also means that $\mathbf{E}(\xt) \weaklyPreferred \multApprox \mathbf{E}(\x^*)$.
    If $\mathbf{E}(\xt) \equiv \multApprox \mathbf{E}(\x^*)$ then the claim is clearly holds. 
    Otherwise, let $k$ be the smallest integer such that $\expectedValBy{k}{\xt} > \multApprox \expectedValBy{k}{\x^*}$ (notice that there must be one since $\expectedValBy{i}{\xt} \geq \multApprox \expectedValBy{i}{\x^*}$ for any $i \in [n]$ and $\mathbf{E}(\xt) \not\equiv \multApprox \mathbf{E}(\x^*)$).
    We get that $ \expectedValBy{i}{\xt} = \expectedValBy{i}{\x^*}$ for any $i<k$ and $\expectedValBy{k}{\xt} > \multApprox \expectedValBy{k}{\x^*}$.
    Therefore, $\mathbf{E}(\xt) \weaklyPreferred \multApprox \mathbf{E}(\x^*)$.
    Now, let $\x \in X$. 
    By the optimality of $\x^*$, $\mathbf{E}(\x^*) \weaklyPreferred \mathbf{E}(\x)$.
    By transitivity,
    % \eden{should we prove that the leximin order is transitive?} 
    $\mathbf{E}(\xt) \weaklyPreferred \multApprox \mathbf{E}(\x)$ --- as required.}

    \rmark{To see that the opposite is not true, consider \Cref{example:for-def} and observe that $\x^3$ is an approximation according to our definition but it is not according to their.}
\end{proof}
% \erel{Prove in a formal proposition}

We note that here, our definition captures the following problem. 
The solution $\x^2$ is considered an element-wise $0.9$-approximation while $\x^3$ does not. However, the expected vector of $\x^3$ is strongly-leximin-preferred over the expected vector of $\x^2$ (for $k=2$).
Thus, in the leximin sense, it seems reasonable that $\x^3$ would be considered at-least-a-good as $\x^2$, yet this is not reflected under the element-wise definition. 
According to our definition, $\x^3$ is also considered a leximin-approximation, and this is no coincidence — any solution preferred over another that is an approximation is itself considered an approximation.

\end{document}